\let\cref\Cref
\renewcommand{\epsilon}{\varepsilon}
\renewcommand{\phi}{\varphi}
\newcommand{\K}{\mathbb{K}}
\newcommand{\N}{\mathbb{N}}
\newcommand{\BB}{\mathfrak{B}}
\newcommand\blfootnote[1]{%
  \begingroup
  \renewcommand\thefootnote{}\footnote{#1}%
  \addtocounter{footnote}{-1}%
  \endgroup
}
\begin{document}

\title{Boolean Substructures in Formal Concept Analysis}

\author{Maren Koyda\inst{1,2}\orcidID{0000-0002-8903-6960} \and Gerd Stumme\inst{1,2}\orcidID{0000-0002-0570-7908}}

\date{\today}

\institute{%
  Knowledge \& Data Engineering Group,
  University of Kassel, Germany\\[0.5ex]
  \and
  Interdisciplinary Research Center for Information System Design\\
  University of Kassel, Germany\\[0.5ex]
  \email{koyda@cs.uni-kassel.de, stumme@cs.uni-kassel.de}
}
\maketitle

\blfootnote{Authors are given in alphabetical order.
  No priority in authorship is implied.}

\begin{abstract}
	It is known that a (concept) lattice contains an n-dimensional Boolean suborder if and only if the context contains an n-dimensional contra-nominal scale as subcontext. In this work, we investigate more closely the interplay between the Boolean subcontexts of a given finite context and the Boolean suborders of its concept lattice. To this end, we define mappings from the set of subcontexts of a context to the set of suborders of its concept lattice and vice versa and study their structural properties. In addition, we introduce closed-subcontexts as an extension of closed relations to investigate the set of all sublattices of a given lattice. 
	
	\keywords{Formal Concept Analysis\and Contranominal Scales\and Boolean Contexts\and Boolean Lattices\and Sublattices\and Subcontexts\and Closed Relations}
\end{abstract}

\section{Introduction}
\label{sec:introduction}

%Visualizing data is an important task.
In the field of Formal Concept Analysis (FCA) the basic data structure is a so-called formal context. It consists of a set of objects, a set of attributes, and an incidence relation on those sets representing which object \emph{has} which attribute. Each such context gives rise to concepts which consist of a maximal set of objects that all share the same maximal set of attributes. The concepts, ordered by subset relation, form a complete lattice.

One frequently occurring type of substructure (more precisely: suborder or sub(semi)lattice) of a concept lattice are Boolean algebras. In the formal context, they correspond to subcontexts that are isomorphic to a contranominal scale, i.\,e., a context of type $(\{1,\dots,k\},\{1,\dots,k\},\neq)$. This means in particular the existence of $k$ objects that just differ slightly on $k$ attributes. However, despite of the only slight difference, these Boolean subcontexts are responsible for an exponential growth of the concept lattice~\cite{albano2015}. Such Boolean subcontexts occur in real-world data as well as in randomly generated formal contexts~\cite{felde2019formal}.

In this paper we investigate the connection between the Boolean substructures in the formal context and in its corresponding concept lattice.
Based on closed subrelations of a formal context~\cite{Wille87b}, that provide a method to characterize the complete sublattices of the corresponding concept lattice, we introduce \emph{closed-subcontexts} and present a one-to-one correspondence to all sublattices. % of the corresponding concept lattice.
Through this, we merge the obvious two-step-approach of limiting the lattice to an interval and determining its complete sublattices in one structure.
Since this construction is an -- almost arbitrary and difficult to handle -- mixture of subcontext and subrelation and in addition is not directly specific to the field of Boolean substructures, we investigate the connection between Boolean subcontexts and Boolean sublattices and suborders, respectively, in~\cref{sec:connection} in a direct way without having to manipulate the incidence relation. % using subrelations of the kind $J\subseteq I$.
To this end, we lift two well-known order embeddings~\cite{fca-book}  to the level of subcontexts and suborders to find the Boolean suborders corresponding to a Boolean subcontext. In addition, we introduce a construction to generate the Boolean subcontext associated to a given Boolean suborder. We combine these methods to investigate to which degree the join and meet operators of the lattice are respected by those maps.

As our work is triggered by complexity issues in data analysis where only finite sets are considered, \textbf{all statements in this paper are about finite sets and structures only}, unless explicitely stated otherwise.

As for the structure of this paper, in \cref{sec:FCA} we recall some basic notions and give a brief introduction to the approaches our investigations are based on. %, namely embedding of lattice structures and closed relations.
Afterwards, in~\cref{sec:related-work} we give a short overview of previous works applied to the investigation of substructures of formal contexts and concept lattices.
In \cref{sec:substructures} we introduce some notions required for our investigation on Boolean substructures. % and giving a first description of their connection.
We introduce closed-subcontexts in~\cref{sec:closed_subcontexts} to determine the set of all Boolean sublattices.
Our second approach is presented in \cref{sec:connection} where we use embeddings of Boolean structures in concept lattices and construct the subcontexts associated to Boolean suborders.
In \cref{sec:discussion} we compare both approaches, and discuss the differences and their overlap.
We conclude our work and give an outlook in \cref{sec:conclusion}.

To advanced readers, we recommend proceeding directly to Section 4 and~\cref{fig: Schaubild} as it illustrates the connections investigated in this work.  

\section{Recap on FCA and Notations}
\label{sec:FCA}

\subsection{Foundations}
Following, we recall some basic notions from FCA. 
For a detailed introduction we refer to~\cite{fca-book}. A formal context
is triple $\mathbb{K}\coloneqq(G,M,I)$, where $G$ is the finite \emph{object set}, 
$M$ the finite \emph{attribute set}, and $I\subseteq G\times M$ a binary \emph{incidence relation}.
Instead of writing $(g,m)\in I$ for an object $g\in G$ and an attribute $m\in M$, we also write $gIm$
and say \emph{object $g$ has attribute $m$}.
One kind of formal context is the family of \emph{contranominal scales}, denoted by $\N^c(k)\coloneqq(\{1,2,...,k\},\{1,2,...,k\},\not =)$.

On the power set of the objects and the power set of the attributes there are two operations given:$
\cdot'\colon\mathcal{P}(G)\to\mathcal{P}(M),~A\mapsto A'\coloneqq
\{m\in M\mid \forall g\in A\colon (g,m)\in I\}$ and $ 
\cdot'\colon\mathcal{P}(M)\to\mathcal{P}(G),~ B\mapsto B'\coloneqq\{g\in
G\mid \forall m\in B\colon (g,m)\in I\}
$
Instead of $A'$ we also write $A^I$ to specify which incidence relation is used for the operation.
A \emph{formal concept} $C=(A,B)$ of the context $(G,M,I)$ is a pair consisting of an object subset 
$A\subseteq G$, called \emph{extent}, and an attribute subset $B\subseteq M$, called \emph{intent}, 
that satisfies $A'=B$ and $B'=A$.
An object set $O\subseteq G$ is called \emph{minimal object generator} of a concept $(A,B)$ if $O''=A$ and $P''\not=A$ for every proper subsets $P\subsetneq O$. Analogous, the \emph{minimal attribute generator} of a concept $(A,B)$ is defined. The set of all minimal object generators (or rather all minimal attribute generators) of $(A,B)$ is denoted by $minG_{obj}(A,B)$ ($minG_{att}(A,B)$).
The set of all formal
concepts ($\mathfrak{B}(\mathbb{K})$)
together with the order defined by $(A_1,B_1)\leq(A_2,B_2)$ iff $A_1\subseteq A_2$ for two
concepts $(A_1,B_1)$ and $(A_2,B_2)$ determines the \emph{concept lattice}
$\underline{\mathfrak{B}}(\mathbb{K})\coloneqq(\mathfrak{B}(\mathbb{K}),\leq)$. 
The concept lattice of $\N^c(k)$ is called \emph{Boolean lattice of dimension $k$} and is denoted by $\BB(k)\coloneqq\underline{\BB}(\N^c(k))$.

%Both, formal contexts and concept lattices are investigated with respect to their structural properties.
There are two tools for basic structural investigations of a formal context $\mathbb{K}=(G,M,I)$ in FCA. An object $g\in G$  is called \emph{clarifiable} if another object $g\ne h\in G$ with $g'=h'$ exists. Furthermore, an object $g\in G$ is called \emph{reducible} if a set of objects $X\subseteq G$ with $g\not\subseteq X $ and $g'=X'$  exists. Otherwise $g$ is called \emph{irreducible}. The same applies to the set of attributes.
The concept lattice of a context $\K$ that has no clarifiable/ reducible objects and attributes is isomorphic to the lattice of any context that can be constructed by adding reducible or clarifiable objects or attributes to $\K$. The stepwise elimination of all clarifiable/ reducible attributes and objects of a formal context results in a \emph{clarified}/\emph{reduced} context, the \emph{standard context} of $\underline{\BB}(\K)$.

To study particular parts of a formal context the selection of a \emph{subcontext} is useful. A \emph{subcontext} $\mathbb{S}\coloneqq(H,N,J)$ of a formal context $\mathbb{K}=(G,M,I)$
is a formal context with $H\subseteq G$, $N\subseteq M$ and $J= I\cap (H \times N)$.
We write $\mathbb{S}\le\K$ to describe $\mathbb{S}$ as a subcontext of $\K$
and use the notion $[H,N]$ instead of $(H,N,I\cap (H\times N))$.
The set of all subcontexts of a formal context $\mathbb{K}$ is denoted by $\mathcal{S}(\mathbb{K})$.

%Also the selection of a specific part of a lattice can provide structural insights. 
$1_{\underline{L}}$ and $0_{\underline{L}}$ denote the top and the bottom element of a lattice $\underline{L}$. The elements covering $0_{\underline{L}}$ are called \emph{atoms} and the elements covered by $1_{\underline{L}}$ \emph{coatoms}. We denote by $At(\underline{L})$ and $CoAt(\underline{L})$, respectively, the set of all atoms and coatoms of $\underline{L}$. 
$\underline{S}=(\text{S}, \leq)$, a subset $S\subseteq L$ together with the same order relation as $\underline{L}$, is called \emph{suborder} of $\underline{L}$. The set of all suborders of $\underline{L}$ is denoted by $\mathcal{SO}(\underline{L})$.
If $(a,b\in S \Rightarrow (a\vee b)\in S)$ holds we call $\underline{S}$ \emph{sub-$\vee$-semilattice} of $\underline{L}$. 
If $(a,b\in S \Rightarrow (a\wedge b)\in S)$ holds we call $\underline{S}$ \emph{sub-$\wedge$-semilattice} of $\underline{L}$. 
A $\underline{S}$ that is both, a sub-$\vee$-semilattice and a sub-$\wedge$-semilattice, is called \emph{sublattice} of $\underline{L}$. The set of all sublattices of $\underline{L}$ is denoted by $\mathcal{SL}(\underline{L})$.
If  $(T\subseteq S \Rightarrow (\bigvee T)$, $(\bigwedge T)\in S)$ holds $\forall$ $T\subseteq S$ we call $\underline{S}$ \emph{complete sublattice} of $\underline{L}$. 
The requirement for completeness can be translated into $1_{\underline{L}}$ and $0_{\underline{L}}$ being included in $\underline{S}$ if $\underline{L}$ is a finite lattice.

\subsection{Relating Substructures in FCA}

Wille~\cite{Wille87b} presents closed relations to characterize complete sublattices of a concept lattice. 
A relation $J\subseteq I$ is called \emph{closed relation} of a formal context $\mathbb{K}=(G,M,I)$ if every concept of the context
$(G,M,J)$ is a concept of $\mathbb{K}$ as well. Closed relations are linked to the complete sublattices of $\underline{\mathfrak{B}}(\mathbb{K})$~\cite[chap. 3.3]{fca-book}: The set of all closed subrelations of $\mathbb{K}$ and all complete sublattices of $\underline{\mathfrak{B}}(\mathbb{K})$ have a one-to-one correspondence. The bijection
\(C(\underline{S})\coloneqq \bigcup\{A\times B|(A,B)\in \underline{S}\}\) maps the set of all complete sublattices to the set of all closed relations. By limiting the lattice to an interval, the described one-to-one correspondence can be found between the complete lattices of the interval and the closed relations of the formal context associated to the interval.

A connection of the concept lattices of a formal context $\mathbb{K}=(G,M,I)$ and its subcontext $\mathbb{S}=[H,N]$ is given by Ganter and Wille~\cite[Proposition 32]{fca-book} by the two maps
$
\phi_1:\underline{\mathfrak{B}}[H,N]\rightarrow \underline{\mathfrak{B}}(G,M,I),~
(A,B)\mapsto(A'',A')$ and $
\phi_2:\underline{\mathfrak{B}}[H,N]\rightarrow \underline{\mathfrak{B}}(G,M,I),~
(A,B)\mapsto(B',B'')$.
Both maps are \emph{order embeddings}. This means for all $(A_1,B_1),(A_2,B_2)\in \underline{\BB}[H,N]$ that
 $(A_1,B_1)\le(A_2,B_2)$ in $\underline{\BB}[H,N]$ if and only if $\phi_i(A_1,B_1)\le\phi_i(A_2,B_2)$ in $\underline{\BB}(G,M,I)$ for both $i\in\{1,2\}$.
Hence, every structure contained in $\underline{\mathfrak{B}}(\mathbb{S})$ also appears in $\underline{\mathfrak{B}}(\mathbb{K})$.

\section{Related Work}
\label{sec:related-work}

In the field of Formal Concept Analysis, there are several approaches to %generate and 
analyze smaller parts of a formal context or a concept lattice, as well as to investigate the connection between the two data structures.
In~\cite{albano2017rich} local changes to a formal context and their effects on the corresponding concept lattice, namely the number of concepts, are explored. Albano~\cite{albano2017polynomial} studies the impact of contranominal scales in a formal context to the size of the corresponding concept lattice by giving an upper bound for $\BB(k)$-free lattices. 
The approach of Wille~\cite{Wille87b} on the one-to-one correspondence between closed subrelations of a formal context complete sublattices of the associated concept lattice is the basis for the work of Kauer and Krupke~\cite{kauer2017generating}. They investigate the problem of constructing the closed subrelation referring to a complete sublattice generated by a given subset of elements while not computing the whole concept lattice.
Based on granulation as introduced in~\cite{zadeh1997toward} the authors of~\cite{qi2019multi} analyze substructures of formal contexts and concept lattices by considering them as granules that provide different levels of accuracy.

Also, many common methods deal with the detection of substructures in the first place. They are based on the selection of structurally meaningful attributes and objects of a formal context. 
For this purpose, Hanika et al.~\cite{hanika2019relevant} search for a \emph{relevant} attribute set that reflects the original lattice structure and the distribution of the objects as good as possible. 
Considering many-valued contexts, Ganter and Kuznetzov~\cite{ganter2008scale} select features based on their scaling. 
Another approach is to generate a meaningful subset by selecting entire concepts directly of the formal context by measuring their individual value for the context and the associated concept lattice.
A natural idea is the consideration of extent and intent size of the concepts. Based on this, Kuznetsov~\cite{kuzuetsov1990stability} proposed a stability measure for formal concepts, measuring the ratio of extent subsets generating the same intent. 
Another measure, the support, was used by Stumme et al.~\cite{STUMME2002189} to generate so-called \emph{iceberg lattices}, which also have a use in the field of mining of frequent association rules.

Besides meaningful reduction, altering the dataset is a standard method in FCA, which is motivated by an attempt to reduce the complexity of the dataset or deal with noise.
In this realm, Dias and Vierira investigate the replacement of \emph{similar} objects by a single representative~\cite{diasreducing}.
Approximate frequent itemsets have been investigated to handle noisy data~\cite{liu2006mining}, where the authors state an additional threshold for both rows and columns of the dataset.

Since we aim to investigate existing substructures of formal contexts and concept lattices, we turn away from those notions in general.

\section{Boolean Subcontexts and Sublattices}
\label{sec:substructures}

In this work, we investigate Boolean substructures in formal contexts as well as in the corresponding concept lattices. 
Therefore, as illustrated in \cref{fig: Schaubild}, we link the different substructures of a formal context with the substructures of the corresponding concept lattice.
In this section we introduce the concrete definitions that serve as a foundation to analyze those connections.

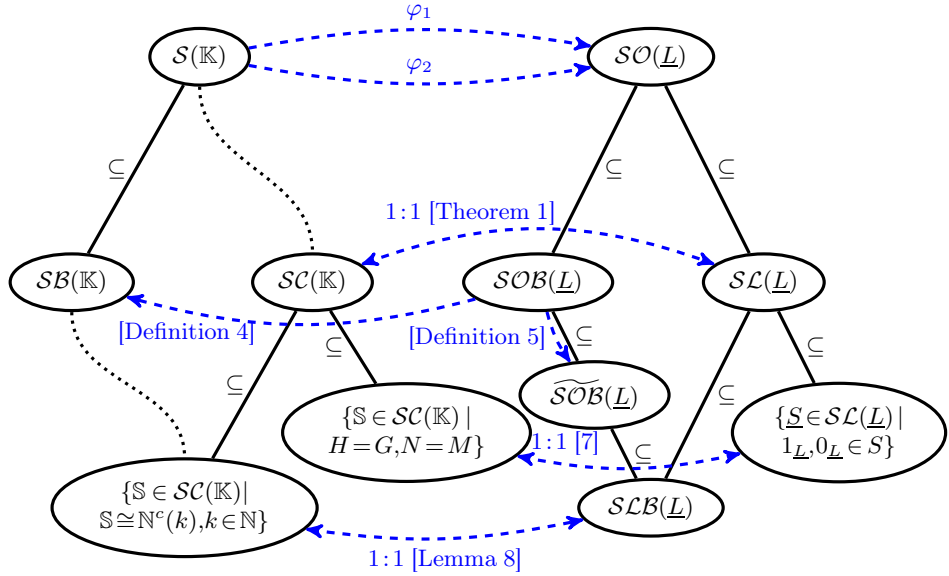
\begin{figure}[t]
	\centering
	\begin{tikzpicture}
	\tikzset{
		% Define standard arrow tip
		>=stealth',
		% Define style for boxes
		box/.style={ ellipse, rounded corners, draw=black, very thick,
			minimum height=0.5cm, text centered, node
			distance=5.5cm}, label/.style={ text width=8.0em, text
			centered, },
		% Define arrow style
		pil/.style={ -, very thick}} 
	
	\node
	[box, draw,text width=7em](SCB) at (-0.2,0) {$\{\mathbb{S}\in\mathcal{SC}(\mathbb{K})|$\\$ \mathbb{S} \cong \mathbb{N}^c(k), k\in\mathbb{N}\}$}; 
	\node
	[box, draw](SC) at (1.5,3) {$\mathcal{SC}(\mathbb{K})$};
	\node
	[box, draw](SB) at (-1.7,3) {$\mathcal{SB}(\mathbb{K})$};
	\node
	[box, draw](S) at (0,6) {$\mathcal{S}(\mathbb{K})$};
	\node
	[box, draw, text width=6.5em](SV) at (2.75,1) {$\{\mathbb{S}\in \mathcal{SC}(\mathbb{K})\mid$\\$ H=G, N=M\}$};
	\node
	[box, draw](LCB) at (6,0) {$\mathcal{SLB}(\underline{L})$}; 
	\node
	[box, draw](LC) at (7.5,3) {$\mathcal{SL}(\underline{L})$};
	\node
	[box, draw](LB) at (4.5,3) {$\mathcal{SOB}(\underline{L})$};
	\node
	[box, draw](L) at (6,6) {$\mathcal{SO}(\underline{L})$};
	\node
	[box, draw, text width=5.5em](VL) at (8.5,1) {$\{\underline{S} \in \mathcal{SL}(\underline{L}) \mid$\\$ 1_{\underline{L}},0_{\underline{L}}\in S\}$};
	\node
	[box, draw](SemiL) at (5.25,1.5) {$\mathcal{\widetilde{SOB}}(\underline{L})$};

	\draw[very thick] (S) to
	node[left] {$\subseteq$} (SB);
	
	\draw[very thick] (SC) to
	node[left] {$\subseteq$} (SCB);
	
	\draw[very thick] (SC) to
	node[left] {$\subseteq$} (SV);
	
	\draw[dotted,out=-90, in=90,very thick] (S) to
	node[above] {} (SC);
	
	\draw[dotted,out=-90, in=90,very thick] (SB) to
	node[above] {} (SCB);

	\draw[very thick] (LCB) to
	node[right] {$\subseteq$} (SemiL);

	\draw[very thick] (SemiL) to
	node[right] {$\subseteq$} (LB);
	
	\draw[very thick] (LCB) to
	node[right] {$\subseteq$} (LC);
	
	\draw[very thick] (L) to
	node[right] {$\subseteq$} (LC);
	
	\draw[very thick] (L) to
	node[right] {$\subseteq$} (LB);
	
	\draw[very thick] (LC) to
	node[right] {$\subseteq$} (VL);

	\draw[<->,dashed, blue, very thick] (SC) to[bend left=20]
	node[pos=0.3, above] {$1:1$ [\cref{thm:subcontexts}]} (LC);
	
	\draw[<->,dashed, blue, very thick] (SCB) to[bend right=10]
	node[below] {$1:1$ [\cref{lem:boolean_closed}]} (LCB);
	
	\draw[->,dashed, blue, very thick] (S) to[bend left=10]
	node[above] {$\phi_1$} (L);
	
	\draw[->,dashed, blue, very thick] (S) to[bend right=10]
	node[above] {$\phi_2$} (L);
	
	\draw[<->, blue,dashed, very thick] (SV) to[bend right=12]
	node[pos=.2,above] {$1:1$ \cite{fca-book}}(VL);
	
	\draw[->, blue,dashed, very thick] (LB) to [bend right=10]
	node[left] {[\cref{def:associated}]}(SemiL);
	
	\draw[->, blue,dashed, very thick] (LB) to [bend left=15]
	node[pos=0.84,below] {[\cref{def:associated_context}]}(SB);
	
	\end{tikzpicture}
	\caption{Connections between the subcontexts of a formal context $\mathbb{K}$ and the suborders of the corresponding concept lattice $\underline{L}:=\underline{\mathfrak{B}}(\K)$. The set of all subsemilattices of $\underline{L}$ is denoted by $\mathcal{\widetilde{SOB}}(\underline{L})$.}
	\label{fig: Schaubild}
\end{figure}

\begin{definition}
	Let $\mathbb{K}$ be a formal context, $\mathbb{S}\le \K$. 
	$\mathbb{S}$ is called \emph{Boolean subcontext of dimension k} of $\K$, 
	if $\underline{\BB}(\mathbb{S})\cong\BB(k)$. 
	$\mathbb{S}$ is called \emph{reduced} if $\mathbb{S}$ is a reduced context. 
	The set of all Boolean subcontexts of dimension $k$ of $\mathbb{K}$ and the set of all reduced Boolean subcontexts of dimension $k$ of $\mathbb{K}$ are denoted by $\mathcal{SB}_k(\mathbb{K})$ and $\mathcal{SRB}_k(\mathbb{K})$.
\end{definition}

Note that a reduced Boolean subcontext of dimension $k$ is isomorphic to the contranominal scale $\N^c(k)$.

\begin{definition}
	Let $\underline{L}$ be a lattice and $\underline{S}$ a suborder of $\underline{L}$. 
	$\underline{S}$ is called \emph{Boolean suborder of dimension k} if $\underline{S}\cong \BB(k)$. 
	If $\underline{S}$ is a sublattice of $\underline{L}$, $\underline{S}$ is called \emph{Boolean sublattice of dimension k}.
	The set of all Boolean suborders of dimension $k$ of a lattice $\underline{L}$ is denoted by $\mathcal{SOB}_k(\underline{L})$.
	The set of all Boolean sublattices of dimension $k$ of a lattice $\underline{L}$ is denoted by $\mathcal{SLB}_k(\text{ \underline{L}})$.
\end{definition}

If all dimensions are considered, the number $k$ is left out in the following.

Note that $\mathcal{SLB}_k(\text{\underline{L}})$ is a subset of $\mathcal{SOB}_k(\text{\underline{L}})$ and the standard context of a Boolean lattice $\underline{L}$ of dimension $k$ consists of a formal context $\K\cong\N^c(k)$~\cite[Proposition 12]{fca-book}. Conversely, a formal context $\K$ consisting of a reduced Boolean subcontext of dimension $k$ and an arbitrary number of additional reducible attributes and objects has a corresponding concept lattice $\underline{\BB}(\K)\cong \BB(k)$.

For a better understanding of these structures, we introduce the example given in \cref{runexp}. We will refer back to this illustration throughout the paper.

\begin{example}
	\label{subcontext}
	\label{expl:reduced_boolean}
	$\mathbb{S}=(\{4,5,6\},\{b,c,d,e\},J)$ with $J=I\cap (\{4,5,6\}\times \{b,c,d,e\})$ is a Boolean subcontext of dimension $3$ of the formal context $\mathbb{K}$ given in~\cref{runexp}. $\mathbb{S}$ is not reduced, since $d^{J}=e^{J}$  holds. However, $\mathbb{S}$ includes the reduced Boolean subcontexts $\mathbb{S}_1=[\{4,5,6\},\{b,c,d\}]$ and $\mathbb{S}_2=[\{4,5,6\},\{b,c,e\}]$. The third reduced Boolean subcontext in $\mathbb{K}$ is $\mathbb{S}_3=[\{1,2,3\},\{a,b,c\}]$. The concept lattice of $\K$ in~\cref{runexp} contains 15 Boolean suborders of dimension 3, two of which are also Boolean sublattices.
\end{example}

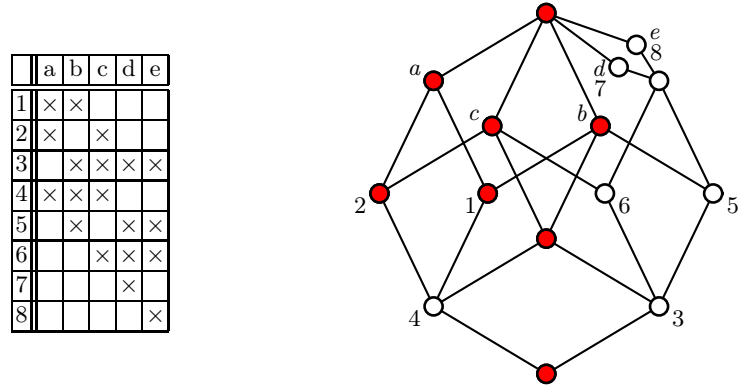
\begin{figure}[t]
	\begin{minipage}{0.4\textwidth}
	\centering
	\begin{cxt}% 
		\att{a}% 
		\att{b}% 
		\att{c}% 
		\att{d}% 
		\att{e}%  
		\obj{xx...}{1} %
		\obj{x.x..}{2} %
		\obj{.xxxx}{3} %
		\obj{xxx..}{4} %
		\obj{.x.xx}{5} %
		\obj{..xxx}{6} %
		\obj{...x.}{7} %
		\obj{....x}{8} %
	\end{cxt}
	\end{minipage}
\begin{minipage}{0.58\textwidth}
	\centering
	
	{\unitlength 0.6mm
		
		\begin{picture}(80,80)%
		\put(0,0){%
						
		\begin{diagram}{80}{80}
		\Node{1}{40}{0}
		\Node{2}{15}{15}
		\Node{3}{65}{15}
		\Node{4}{40}{30}
		\Node{5}{28}{55}
		\Node{6}{52}{55}
		\Node{7}{40}{80}
		\Node{8}{3}{40}
		\Node{9}{77}{40}
		\Node{10}{15}{65}
		\Node{11}{65}{65}		
		\Node{12}{27}{40}
		\Node{13}{53}{40}
		\Node{14}{56}{68}
		\Node{15}{60}{73}
		
		\Edge{1}{2}
		\Edge{1}{3}
		\Edge{3}{4}
		\Edge{2}{4}
		\Edge{4}{5}
		\Edge{4}{6}
		\Edge{7}{5}
		\Edge{7}{6}
		\Edge{8}{5}
		\Edge{8}{2}
		\Edge{9}{6}
		\Edge{9}{3}
		\Edge{10}{7}
		\Edge{8}{10}
		\Edge{9}{11}
		\Edge{12}{2}
		\Edge{12}{6}
		\Edge{12}{10}
		\Edge{13}{3}
		\Edge{13}{5}
		\Edge{13}{11}
		\Edge{11}{14}
		\Edge{11}{15}
		\Edge{14}{7}
		\Edge{15}{7}

		%\Numbers\Edge{2}{3}
		\leftObjbox{2}{3}{1}{4}
		\rightObjbox{3}{3}{1}{3}
		\leftObjbox{8}{3}{1}{2}
		\rightObjbox{9}{3}{1}{5}
		\leftObjbox{12}{3}{1}{1}
		\rightObjbox{13}{3}{1}{6}
		\NoDots\leftObjbox{14}{3}{3}{7}
		\rightObjbox{15}{3}{0}{8}
		
		\leftAttbox{10}{3}{1}{a}
		\leftAttbox{5}{3}{1}{c}
		\leftAttbox{6}{3}{1}{b}
		\NoDots\leftAttbox{14}{3}{-2}{d}
		\rightAttbox{15}{3}{1}{e}
		
		\end{diagram}}
		
		\put(40,0){\ColorNode{red}}
		\put(40,30){\ColorNode{red}}
		\put(28,55){\ColorNode{red}}
		\put(52,55){\ColorNode{red}}
		\put(40,80){\ColorNode{red}}
		\put(3,40){\ColorNode{red}}
		\put(15,65){\ColorNode{red}}
		\put(27,40){\ColorNode{red}}		
		
		\end{picture}}			
\end{minipage}
	\caption{Example of a formal context $\mathbb{K}=(G,M,I)$ with $G=\{1,2,...,8\}$ and $M=\{a,b,...e\}$ containing three reduced Boolean subcontexts
		% (123,abc),(127,abc),(345,abd),(345,abe),(456,abd),(456,abd)
		%	its corresponding reduced formal context $\mathbb{K_R}=(G\setminus \{3\},M,I\cup G\setminus \{3\} \times M)$ containing three reduced Boolean contexts 
		% (127,abc),(456,abd),(456,abd)
	and its corresponding concept lattice $\underline{\BB}(\K)$.}
	\label{runexp}
\end{figure}

\section{Closed-Subcontexts}
 \label{sec:closed_subcontexts}
At first, we leave the field of (Boolean) suborders and narrow our focus on (Boolean) sublattices. On the context side, we introduce so-called \emph{closed-subcontexts} and show their one-to-one relationship to the sublattices of the concept lattice.

In~\cite{Wille87b}, Wille introduced closed relations of a context to characterize the complete sublattices of its concept lattice. In finite lattices, complete sublattices differ from (non-complete) sublattices in that they always include the top element and the bottom element of the lattice. We adopt Wille's construction to match with (non necessarily complete) sublattices.

\begin{definition}
	\label{def:closed-subcontext}
	Let  $\mathbb{K}=(G,M,I)$ and $\mathbb{S}=(H,N,J)$ be two formal contexts. We call $\mathbb{S}$ \emph{closed-subcontext of $\mathbb{K}$} iff $H\subseteq G$, $N\subseteq M$, $J\subseteq I \cap (H\times N)$ and every concept of $\mathbb{S}$ is a concept of $\mathbb{K}$ as well. 
	The set of all closed-subcontexts of $\mathbb{K}$ is denoted by  $\mathcal{SC}(\mathbb{K})$.
\end{definition}

The sublattices of $\underline{\mathfrak{B}}(\mathbb{K})$ have a one-to-one correspondence to closed-subcontexts of $\mathbb{K}$ as follows.

\begin{theorem}
	\label{thm:subcontexts}
	Let $\mathbb{K}$ be a formal context and 
	$\underline{S}$ be a sublattice of $\underline{\mathfrak{B}}(\mathbb{K})$. 
	Then 
	\[\mathbb{K}_{\underline{S}}\coloneqq (\bigcup_{(A,B)\in \text{\underline{S}}} A, \bigcup_{(A,B)\in \text{\underline{S}}} B, \bigcup_{(A,B)\in \text{\underline{S}}} A\times B)\] 
	is a closed-subcontext of $\mathbb{K}$.	Conversely, for every closed-subcontext $\mathbb{S}$ of $\mathbb{K}$, $\underline{\mathfrak{B}}(\mathbb{S})$ is a sublattice of $\mathfrak{B}(\mathbb{K})$.\\
	Furthermore, the map $f(\underline{S}):=\K_{\underline{S}}$ maps the set of sublattices of $\underline{\BB}(\K)$ bijectively onto the set of closed-subcontexts of $\K$.
\end{theorem}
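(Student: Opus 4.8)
The plan is to establish the two directions of the correspondence separately and then verify that the two constructions are mutually inverse. For the forward direction, let $\underline{S}$ be a sublattice of $\underline{\BB}(\K)$. I first observe that $\K_{\underline{S}}$ is genuinely a closed-subcontext in the sense of \cref{def:closed-subcontext}: the containments $\bigcup A \subseteq G$, $\bigcup B \subseteq M$ are immediate, and since each $A \times B \subseteq I$ (because $(A,B)$ is a concept of $\K$) while also $A \times B \subseteq (\bigcup A) \times (\bigcup B)$, we get $J_{\underline{S}} := \bigcup A\times B \subseteq I \cap ((\bigcup A)\times(\bigcup B))$. The substantive point is that every concept of $\K_{\underline{S}}$ is a concept of $\K$. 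Here I would mimic Wille's argument for closed relations: given $(A_1,B_1) \in \underline{S}$ and $(A_2,B_2)\in\underline{S}$, I expect the derivation operators of $\K_{\underline{S}}$ to interact with those of $\K$ so that the concepts of $\K_{\underline{S}}$ are exactly the $(A,B)\in\underline{S}$. Concretely, I would show that for $(A,B)\in\underline{S}$, applying the $\K_{\underline{S}}$-derivation to $A$ returns $B$, using that $\underline{S}$ is closed under finite meets and joins so that the ``rows'' and ``columns'' of $\K_{\underline{S}}$ visible from within an extent $A$ collapse correctly. This is the step I expect to be the main obstacle, since unlike the closed-relation case the object and attribute sets have also shrunk, so I must be careful that no spurious concept appears from the restricted $H$, $N$ alone; the sublattice closure under $\vee$ and $\wedge$ is exactly what rules this out.

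For the converse direction, let $\mathbb{S}=(H,N,J)$ be a closed-subcontext of $\K$. By definition $\BB(\mathbb{S}) \subseteq \BB(\K)$ as a set of concepts, so $\underline{\BB}(\mathbb{S})$ is certainly a suborder of $\underline{\BB}(\K)$; I must check it is a sublattice, i.e. closed under the join and meet of $\underline{\BB}(\K)$. Take concepts $(A_1,B_1),(A_2,B_2)$ of $\mathbb{S}$. Their meet in $\underline{\BB}(\mathbb{S})$ has extent $(A_1\cap A_2)^{JJ}=A_1\cap A_2$ (intersection of extents is an extent), and since both are concepts of $\K$ and $\BB(\mathbb{S})\subseteq\BB(\K)$, this meet is a concept of $\K$ with the same extent $A_1\cap A_2$, hence equals the meet in $\underline{\BB}(\K)$; dually for the join, whose intent is $B_1 \cap B_2$. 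Thus $\underline{\BB}(\mathbb{S})$ is a sublattice of $\underline{\BB}(\K)$ as claimed.

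It remains to show $f(\underline{S}):=\K_{\underline{S}}$ is a bijection from $\mathcal{SL}(\underline{\BB}(\K))$ onto $\mathcal{SC}(\K)$, for which I exhibit the inverse $g(\mathbb{S}):=\underline{\BB}(\mathbb{S})$ and check $g\circ f = \mathrm{id}$ and $f \circ g = \mathrm{id}$. For $g(f(\underline{S}))=\underline{S}$: by the forward direction the concepts of $\K_{\underline{S}}$ are precisely the members of $\underline{S}$, which is exactly the identity claim used as the main obstacle above, so this reduces to that lemma. For $f(g(\mathbb{S}))=\mathbb{S}$, i.e. $\K_{\underline{\BB}(\mathbb{S})}=\mathbb{S}$: I must show $\bigcup\{A : (A,B)\in\BB(\mathbb{S})\}=H$, the analogue for $N$, and $\bigcup\{A\times B : (A,B)\in\BB(\mathbb{S})\}=J$. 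The inclusion $\subseteq$ in each case is clear since every extent is contained in $H$, etc. For $\supseteq$ on the object side, each $h\in H$ lies in the extent $\{h\}^{JJ}$; for the relation, if $(h,n)\in J$ then $h \in \{n\}^J$ and $n \in \{h\}^{JJ \, J}$, so $(h,n)$ lies in $A\times B$ for the concept generated by $h$ — here I would use the standard fact that $J = \bigcup\{A\times B : (A,B)\in\BB(\mathbb{S})\}$ holds for every context (this is just the basic representation of the incidence via its concepts, and requires no closedness). The equalities $\bigcup A = H$ and $\bigcup B = N$ likewise hold for any context without further hypotheses, as every object and attribute is in some extent resp.\ intent. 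Assembling these three identities gives $f\circ g=\mathrm{id}$, completing the proof.
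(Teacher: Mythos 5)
Your converse direction (every closed-subcontext yields a sublattice) and your verification of $f\circ g=\mathrm{id}$ are correct: the former is essentially the paper's own argument via extents of meets and intents of joins, and the latter rightly observes that $H=\bigcup A$, $N=\bigcup B$ and $J=\bigcup A\times B$ hold for \emph{any} finite context, no closedness needed. The genuine gap is in the forward direction, which you yourself label ``the main obstacle'' and then never carry out: you do not prove that every concept of $\K_{\underline{S}}=(H,N,J)$ is a concept of $\K$, let alone the stronger fact $\BB(\K_{\underline{S}})=S$ on which your bijectivity argument ($g\circ f=\mathrm{id}$, hence injectivity of $f$) explicitly rests. The step you actually sketch --- that the $J$-derivation of $A$ returns $B$ for $(A,B)\in\underline{S}$ --- is only the trivial inclusion $S\subseteq\BB(\K_{\underline{S}})$; it follows from $A\times B\subseteq J\subseteq I$ alone and uses no sublattice property. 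All the work is in the reverse inclusion (``no spurious concept''), and ``mimic Wille's argument'' does not transfer verbatim, since Wille's setting keeps $G$ and $M$ fixed and works with complete sublattices. (The paper's printed proof is itself only one sentence on this point, but that sentence is exactly the content your proposal leaves open.)

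Here is how the missing step can be closed, using the sublattice hypothesis and finiteness. For $g\in H$ let $\mu(g)$ be the meet in $\underline{\BB}(\K)$ of all concepts of $\underline{S}$ whose extent contains $g$; it lies in $\underline{S}$, its extent contains $g$, and every $(A,B)\in\underline{S}$ with $g\in A$ lies above it, so $g^J=\bigcup\{B\mid (A,B)\in\underline{S},\,g\in A\}$ equals the intent of $\mu(g)$. Dually, for $n\in N$ the extent of $\nu(n)$, the join of all concepts of $\underline{S}$ whose intent contains $n$, equals $n^J$. Now let $(C,D)\in\BB(\K_{\underline{S}})$ with $C,D\neq\emptyset$. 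Then $D=\bigcap_{g\in C}g^J$ is the intent of $u:=\bigvee_{g\in C}\mu(g)\in\underline{S}$, and $C=\bigcap_{n\in D}n^J$ is the extent of $v:=\bigwedge_{n\in D}\nu(n)\in\underline{S}$. Since each $g\in C$ lies in the extent of $v$, we get $\mu(g)\le v$, hence $u\le v$ and $C\subseteq\mathrm{ext}(u)\subseteq\mathrm{ext}(v)=C$; so $u=(C,D)\in\underline{S}$. The degenerate cases $C=\emptyset$ or $D=\emptyset$ yield $0_{\underline{S}}$ and $1_{\underline{S}}$, respectively. This gives $\BB(\K_{\underline{S}})=S$, which simultaneously establishes that $\K_{\underline{S}}$ satisfies \cref{def:closed-subcontext} and that $g\circ f=\mathrm{id}$; without it neither the closedness claim nor the injectivity of $f$ in your proposal is proved.
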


\begin{proof}
	For each formal concept $(A,B)\in \underline{S}$ the formal concept $(A,B) \in \BB(\mathbb{K}_{\underline{S}})$ is due to construction a concept in $\mathbb{K}$. 
	On the other side let $\mathbb{S}=(H,N,J)$ be a closed-subcontext of $\mathbb{K}$. The concept set of $\mathbb{S}$ is a subset of the concept set of $\mathbb{K}$ and therefore $\underline{\mathfrak{B}}(\mathbb{S})$ is a suborder of $\underline{\mathfrak{B}}(\mathbb{K})$. 
	Let $(A_1,B_1),(A_2,B_2)\in\underline{\BB}(\mathbb{S})$. Let $(A_S,B_S)$ be the infimum of both in $\mathbb{S}$ and $(A_K,B_K)$ the infimum of both in $\mathbb{K}$. So $A_S=A_1\cap A_2=A_K$, which implies $(A_S,B_S)=(A_K,B_K)$ since $(A_S,B_S)$ is by definition a concept in $\K$. The dual argument shows that $\mathbb{S}$ is closed under suprema. So $\underline{\mathfrak{B}}(\mathbb{S})$ is a sublattice of $\underline{\mathfrak{B}}(\mathbb{K})$.
\end{proof}

Note that the closed-subsets of a formal context do not form a closure system since the intersection of two closed-subcontexts, in general, is not a closed-subcontext, even though the sublattices of formal concept do so.

In the construction of $\K_{\underline{S}}$, $\bigcup_{(A,B)\in \underline{S}}A$ is the concept extent of the top element of the sublattice and $\bigcup_{(A,B)\in \underline{S}}B$ is the concept intent of its bottom element.

\begin{lemma}
	Let $\K=(G,M,I)$ be a formal context and $\mathbb{S}=(H,N,J)$ a closed-subcontext of $\K$. 
	Then $H=G$ or $m\in N$ with $m'=H$ exists. And $N=M$ or $ g\in H$ with $g'=N$ exists.
\end{lemma}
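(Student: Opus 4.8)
The plan is to exploit the characterization of $\mathbb{K}_{\underline{S}}$ from \cref{thm:subcontexts}: since $\mathbb{S}$ is a closed-subcontext, $\underline{\BB}(\mathbb{S})$ is a sublattice of $\underline{\BB}(\K)$, and $\mathbb{S}$ itself arises (up to the canonical identification) as $\K_{\underline{\BB}(\mathbb{S})}$; in particular, as observed in the paragraph following the theorem, $H=\bigcup_{(A,B)\in\underline{\BB}(\mathbb{S})}A$ is the extent (in $\mathbb{S}$) of the top concept of $\underline{\BB}(\mathbb{S})$, and $N=\bigcup_{(A,B)\in\underline{\BB}(\mathbb{S})}B$ is the intent (in $\mathbb{S}$) of the bottom concept. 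By duality it suffices to prove the first statement, that either $H=G$ or there is an attribute $m\in N$ with $m^{I}=H$.

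First I would consider the top concept $(H,H^{J})$ of $\underline{\BB}(\mathbb{S})$; because $\mathbb{S}$ is a closed-subcontext, $(H,H^{J})$ is a concept of $\K$ as well, so $H^{II}=H$, i.e.\ $H$ is an extent of $\K$. Now suppose $H\neq G$. Since $H$ is a proper extent of $\K$, there exists at least one attribute $n\in M$ with $H\subseteq n^{I}$ but $n\notin H^{I}$ is impossible — rather, $H=(H^{I})^{I}=\bigcap_{n\in H^{I}} n^{I}$, and since this intersection is a proper subset of $G$ the index set $H^{I}$ is nonempty, so pick any $m\in H^{I}$; then $H\subseteq m^{I}$. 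The task is to show we may choose such an $m$ lying in $N$ and with $m^{I}=H$ exactly. The key point is that $N$ is the intent of the bottom concept of $\mathbb{S}$, hence $N^{J}=\varnothing$ in $\mathbb{S}$; combined with $H^{I}\subseteq$ (the attribute set of $\mathbb{S}$)?—this is the step that needs care, because a priori the attribute set of $\mathbb{S}$ need not contain all of $H^{I}$.

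The cleaner route, which I expect to be the main argument, is: for each object $g\in G\setminus H$, since $(H,H^{J})$ is a concept of $\K$ and $g\notin H=(H^{J})^{I}$, there is an attribute $m_g\in H^{J}$ (the intent, computed in $\mathbb{S}$, which is a subset of $N$) with $g\notin m_g^{I}$, i.e.\ $H\subseteq m_g^{I}\subsetneq G$. So $H^{J}\subseteq N$ already supplies attributes $m$ with $H\subseteq m^{I}$; the remaining issue is to upgrade $H\subseteq m^{I}$ to $m^{I}=H$ for a suitable choice. Here I would use finiteness and reduction: among all $m\in N$ with $H\subseteq m^{I}$, the sets $\{m^{I}: m\in N,\ H\subseteq m^{I}\}$ are all extents of $\K$ (being intersections $m^{I}$ lying above $H$), and $H$ itself equals the intersection of $m^{I}$ over $m\in H^{J}$ — but we need a \emph{single} $m$. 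This is the real obstacle, and it is resolved by appealing to how $\mathbb{S}$ was built from the sublattice: the attribute $m\in N$ that sits lowest, i.e.\ one whose corresponding attribute concept $(m^{I},m^{II})$ is among the concepts of $\mathbb{S}$ closest below the top, will satisfy $m^{I}=H$ precisely when $H$ is meet-irreducible as an extent above which $N$ has no room — equivalently, when the top of $\underline{\BB}(\mathbb{S})$ is covered by a unique element, which need not hold.

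Consequently I suspect the intended statement is slightly weaker than "$m^{I}=H$ for some $m\in N$" in general, or relies on $\mathbb{S}$ being in a reduced/clarified form; the safe and correct version that the proof should establish is: if $H\neq G$, then $H^{J}\neq\varnothing$ and every $m\in H^{J}$ satisfies $H\subseteq m^{I}$, with $\bigcap_{m\in H^{J}} m^{I}=H$; and dually for $N$. If the verbatim claim $m^{I}=H$ is wanted, the argument must additionally invoke that $(H,H^{J})$ being the top concept of the sublattice forces, for at least one generating attribute $m$, the equality $m^{I}=H$ — this follows because $H=(H^{J})^{I}$ means $H$ is the intersection of the sets $m^{I}$, $m\in H^{J}$, each of which is an extent of $\K$ containing $H$; if all were strictly larger, their intersection would still contain $H$ but one checks, using that there are finitely many and the lattice of extents is finite, that some $m^{I}$ must equal the meet, i.e.\ $H$, whenever $H$ is meet-irreducible among extents. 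Thus the proof reduces to verifying this meet-irreducibility, which in turn is exactly what makes $(H,H^{J})$ a genuine top of a sublattice rather than merely a suborder — the hard part will be pinning down this last implication cleanly without circularity.
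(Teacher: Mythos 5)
You land on the same key step as the paper's proof --- apply closedness to the two extreme concepts $(\emptyset'',\emptyset')$ and $(H'',H')$ of $\mathbb{S}$, which must be concepts of $\K$ --- but your attempt does not actually close, and you say so yourself: you end by proposing a weaker substitute statement and speculating about meet-irreducibility. The obstruction is your reading of the prime. You take $m'=m^I$ (derivation in $\K$), and under that reading the lemma is simply false, so no refinement of your argument could succeed: let $G=\{1,2,3\}$, $M=\{a,b\}$ with $a^I=\{1,2\}$, $b^I=\{1,3\}$, and $\mathbb{S}=(\{1\},\{a,b\},I\cap(\{1\}\times\{a,b\}))$. Its only concept $(\{1\},\{a,b\})$ is a concept of $\K$, so $\mathbb{S}$ is a closed-subcontext with $H\neq G$, yet neither $a^I$ nor $b^I$ equals $H$. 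The statement is to be read with the derivation taken in $\mathbb{S}$ itself, i.e.\ $m^J=H$ (dually $g^J=N$); that is also exactly what the remark preceding the lemma about the top extent and bottom intent of $\K_{\underline{S}}$ is pointing at, and it is all that the paper's two-line proof delivers.

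Under that reading, the facts you already established finish the proof immediately and coincide with the paper's argument: the top concept of $\mathbb{S}$ is $(H,H^J)$ (since $H^{JJ}=H$), and closedness makes it a concept of $\K$, so $H^J=H^I$ and $(H^J)^I=H$; if $H^J=\emptyset$ this forces $H=\emptyset^I=G$, and otherwise any $m\in H^J$ satisfies $H\subseteq m^J$, while $m^J\subseteq H$ holds automatically because $J\subseteq H\times N$, hence $m^J=H$. Dually, the bottom concept $(N^J,N)$ of $\mathbb{S}$ gives $N=M$ or $g^J=N$ for any $g\in N^J$. The single missing observation in your write-up is the trivial inclusion $m^J\subseteq H$; your ``safe version'' ($H^J\neq\emptyset$ and $\bigcap_{m\in H^J}m^I=H$) is in fact equivalent to the intended claim, but you never recognize this because you keep chasing $m^I=H$. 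The detour through meet-irreducibility of $H$ among extents is a dead end: in the counterexample above $H=\{1\}=\{1,2\}\cap\{1,3\}$ is meet-reducible and the $I$-version genuinely fails, so no such argument can rescue that reading.
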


\begin{proof}
	Due to~\cref{def:closed-subcontext}, every concept of $\mathbb{S}$ is a concept of $\K$ as well. In particular, this has to hold for the concepts $(\emptyset'',\emptyset')$ and $(H'',H')  $ of $\mathbb{S}$.  
\end{proof}

We provide next some basic statements about closed-subcontexts. Since the following lemmas are based on the work of Wille~\cite{Wille87b} and lifted to our approach, the proofs are similar to the ones in \cite[Section 3.3]{fca-book}.

\begin{lemma} %\cite[lem. 45]{fca-book}
	For every set $T\subseteq \mathfrak{B}(G,M,I)$ there is a smallest closed-subcontext $\mathbb{S}$ of $\mathbb{K}$, that contains all $(A\times B)$ for $(A,B)\in T$. $\underline{\BB}(\mathbb{S})$ is the \emph{sublattice of $\underline{\BB}(\K)$ generated by $T$}. 
\end{lemma}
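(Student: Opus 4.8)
The plan is to exploit the one-to-one correspondence from \cref{thm:subcontexts} together with the fact that the sublattices of $\underline{\BB}(\K)$ form a closure system (they are closed under intersection). First I would note that the set $\mathcal{SL}(\underline{\BB}(\K))$ is indeed a closure system on $\BB(\K)$: an arbitrary intersection of sublattices is again a sublattice, since if $S = \bigcap_i S_i$ and $a,b \in S$, then $a\vee b$ and $a\wedge b$ lie in every $S_i$ and hence in $S$. Consequently, for the set $T \subseteq \BB(G,M,I)$ there is a smallest sublattice $\underline{S}_T$ of $\underline{\BB}(\K)$ with $T \subseteq S_T$, namely the intersection of all sublattices containing $T$.

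Next I would transport this minimal sublattice to the context side via the map $f$ from \cref{thm:subcontexts}. Set $\mathbb{S} \coloneqq \K_{\underline{S}_T} = f(\underline{S}_T)$. By \cref{thm:subcontexts} this is a closed-subcontext of $\K$ with $\underline{\BB}(\mathbb{S}) = \underline{S}_T$, so in particular $\underline{\BB}(\mathbb{S})$ is the sublattice generated by $T$. It remains to check the two claimed properties: that $\mathbb{S}$ contains $A\times B$ for every $(A,B)\in T$, and that it is the \emph{smallest} such closed-subcontext. For the first, since $T \subseteq S_T$, the defining union $\bigcup_{(A,B)\in S_T} A\times B$ in the construction of $\K_{\underline{S}_T}$ already includes $A\times B$ for each $(A,B)\in T$; likewise $A \subseteq \bigcup_{(A,B)\in S_T} A$ and $B \subseteq \bigcup_{(A,B)\in S_T} B$, so indeed $A\times B$ is a subset of the incidence of $\mathbb{S}$ over the appropriate object and attribute sets.

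For minimality, I would argue as follows. Let $\mathbb{S}' = (H',N',J')$ be any closed-subcontext of $\K$ that contains $A\times B$ for every $(A,B)\in T$. Then each $(A,B)\in T$ satisfies $A\subseteq H'$, $B\subseteq N'$; moreover $A\times B \subseteq J'$, and combined with the fact that every concept of $\mathbb{S}'$ is a concept of $\K$, one checks that $(A,B)$ is in fact a concept of $\mathbb{S}'$ — the key point being that $A$ and $B$ are mutually derivable within $\mathbb{S}'$ because the rectangle $A\times B$ is full in $J'$ and $A,B$ are already closed in $\K$. Hence $T \subseteq \BB(\mathbb{S}')$, and by \cref{thm:subcontexts} $\underline{\BB}(\mathbb{S}')$ is a sublattice of $\underline{\BB}(\K)$ containing $T$, so $S_T \subseteq \BB(\mathbb{S}')$ by minimality of $S_T$. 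Applying the bijection $f$ (which is monotone with respect to the inclusion orders on sublattices and on closed-subcontexts, since $\K_{\underline{S}_1} \le \K_{\underline{S}_2}$ whenever $S_1 \subseteq S_2$) yields $\mathbb{S} = f(\underline{S}_T) \le f(\underline{\BB}(\mathbb{S}'))$; since $f(\underline{\BB}(\mathbb{S}'))$ is obtained from $\mathbb{S}'$ by possibly shrinking $H', N', J'$ to the parts actually used by concepts, we get $\mathbb{S} \le \mathbb{S}'$ in the obvious componentwise sense. Thus $\mathbb{S}$ is the smallest closed-subcontext containing all the $A\times B$, $(A,B)\in T$.

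The main obstacle I anticipate is the minimality argument, specifically the step showing that any closed-subcontext $\mathbb{S}'$ engulfing the rectangles $A\times B$ must actually contain each $(A,B)\in T$ as a \emph{concept}, and pinning down precisely how $f$ interacts with the componentwise order on closed-subcontexts (a closed-subcontext may carry extra objects, attributes, or incident pairs not witnessed by any of its concepts, so $f\circ\underline{\BB}$ is a closure operator on closed-subcontexts rather than the identity). Handling this carefully — or, alternatively, defining $\mathbb{S}$ directly as the componentwise intersection of all closed-subcontexts containing the rectangles and verifying directly that this intersection is again closed — is the crux; everything else is routine transport along the bijection of \cref{thm:subcontexts}.
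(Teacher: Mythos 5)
Your proof is correct and follows essentially the same route as the paper, whose proof is just a pointer to Proposition 45 of Ganter--Wille: form the sublattice generated by $T$ (using that sublattices form a closure system), transport it through the bijection of \cref{thm:subcontexts}, and obtain minimality from the observation that any closed-subcontext whose components contain the rectangles $A\times B$ has each $(A,B)\in T$ among its concepts, so its concept lattice contains the generated sublattice. Your spelled-out handling of how $f$ interacts with the componentwise order (via $f(\underline{\BB}(\mathbb{S}'))\le \mathbb{S}'$) is exactly the adaptation the paper intends.
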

\begin{proof}The proof follows the structure of the proof of Proposition 45 in \cite{fca-book}.\end{proof} 

\begin{lemma}% \cite[lem. 46]{fca-book} 
	$\mathbb{S}=(H,N,J)$ is a closed-subcontext of the formal context $\mathbb{K}=(G,M,I)$ iff $X^{JJ}\supseteq X^{JI}$ holds for each $X\subseteq H$ and for each $X\subseteq N$. 
\end{lemma}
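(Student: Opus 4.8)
The plan is to prove the two implications separately, both directly from the defining property that every concept of $\mathbb{S}=(H,N,J)$ is a concept of $\mathbb{K}=(G,M,I)$. A preliminary observation will carry most of the weight: since $J\subseteq I\cap(H\times N)$, any object standing in relation $J$ to a set $Y\subseteq N$ also stands in relation $I$ to $Y$, so $Y^{J}\subseteq Y^{I}$ for every $Y\subseteq N$, and dually $Y^{J}\subseteq Y^{I}$ for every $Y\subseteq H$. Consequently $X^{JJ}\subseteq X^{JI}$ holds automatically for every $X\subseteq H$ and every $X\subseteq N$, so the condition ``$X^{JJ}\supseteq X^{JI}$'' is really the equality $X^{JJ}=X^{JI}$ in disguise. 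Throughout, the only thing requiring care is bookkeeping: tracking whether a derivation is taken in $\mathbb{S}$ or in $\mathbb{K}$ and whether its argument sits among objects or attributes.

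For ``$\Rightarrow$'', assume $\mathbb{S}$ is a closed-subcontext. For $X\subseteq H$ the pair $(X^{JJ},X^{J})$ is a concept of $\mathbb{S}$ (using $X^{JJJ}=X^{J}$), hence by hypothesis a concept of $\mathbb{K}$; the intent-determines-extent equation in $\mathbb{K}$ then reads $(X^{J})^{I}=X^{JJ}$, i.e.\ $X^{JI}=X^{JJ}$, which gives the claimed inclusion. Symmetrically, for $X\subseteq N$ the pair $(X^{J},X^{JJ})$ is a concept of $\mathbb{S}$, hence of $\mathbb{K}$, which again forces $X^{JI}=X^{JJ}$.

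For ``$\Leftarrow$'', assume the inclusion holds for all $X\subseteq H$ and all $X\subseteq N$, and let $(A,B)$ be an arbitrary concept of $\mathbb{S}$, so $A^{J}=B$ and $B^{J}=A$. I would check $B^{I}=A$ and $A^{I}=B$, which is exactly what it means for $(A,B)$ to be a concept of $\mathbb{K}$. Instantiating the hypothesis at $X=A\subseteq H$ gives $A=A^{JJ}\supseteq A^{JI}=B^{I}$, and combining this with the automatic chain $A\subseteq A^{JJ}\subseteq A^{JI}=B^{I}$ yields $B^{I}=A$. Instantiating it at $X=B\subseteq N$ gives $B=B^{JJ}\supseteq B^{JI}=A^{I}$, and combining this with $B=A^{J}\subseteq A^{I}$ (again from $J\subseteq I$) yields $A^{I}=B$. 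Hence $(A,B)$ is a concept of $\mathbb{K}$; as it was arbitrary, $\mathbb{S}$ is a closed-subcontext.

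I do not expect a genuine obstacle here: this is the closed-subcontext counterpart of Wille's characterization of closed relations~\cite{Wille87b} (cf.~\cite[Section 3.3]{fca-book}), and the only point one must not slip on is that both test families are essential — the instances $X\subseteq H$ produce $B^{I}=A$ while the instances $X\subseteq N$ produce $A^{I}=B$, and neither alone suffices.
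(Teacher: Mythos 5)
Your proof is correct and is essentially the argument the paper invokes: the adaptation of Proposition~46 of Ganter--Wille to subcontexts, using that $(X^{JJ},X^{J})$ (resp.\ $(X^{J},X^{JJ})$) are exactly the concepts of $\mathbb{S}$ and that $J\subseteq I\cap(H\times N)$ forces the reverse inclusion $X^{JJ}\subseteq X^{JI}$ automatically. The bookkeeping in both directions is accurate, so nothing further is needed.
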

\begin{proof}The proof follows the structure of the proof of Proposition 46 in \cite{fca-book}.\end{proof}

\begin{lemma}% \cite[lem. 47]{fca-book}
	The closed-subcontexts $(H,N,J)$ of $(G,M,I)$ are exactly the subcontexts that satisfy the condition: 
	(C) If $(g,m)\in (H\times N)$ and $(g,m)\in I\setminus J$ then $(h,m)\not\in I$ for $h\in H$ with $g^{J} \subseteq h^{J}$ and $(g,n)\not\in I$ for $n\in N$ with $m^{J}\subseteq n^{J}$.	
\end{lemma}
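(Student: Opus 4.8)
The plan is to show the two inclusions: every closed-subcontext satisfies condition (C), and every subcontext satisfying (C) is a closed-subcontext. Throughout, fix $\mathbb{K}=(G,M,I)$ and a subcontext $\mathbb{S}=(H,N,J)$ with $J = I \cap (H\times N)$ (note that, unlike a closed-subcontext, here $J$ is fully determined by $H$ and $N$; (C) is a statement purely about the pair $(H,N)$). I would first recall the derivation-operator characterization available from the preceding lemma: $\mathbb{S}$ is a closed-subcontext iff $X^{JJ}\supseteq X^{JI}$ for every $X\subseteq H$ and every $X\subseteq N$. So the real task reduces to proving the equivalence of the inclusion condition $X^{JJ}\supseteq X^{JI}$ (for all such $X$) with the pointwise condition (C). This is the analogue of Proposition 46/the derivation of closed relations in \cite[Section 3.3]{fca-book}, so the structure of the argument should parallel that, with the bookkeeping adapted to the fact that the ambient set is $H\times N$ rather than $G\times M$.

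For the direction from the derivation condition to (C): suppose $X^{JJ}\supseteq X^{JI}$ holds for all $X\subseteq H$ and all $X\subseteq N$, and suppose $(g,m)\in H\times N$ with $(g,m)\in I\setminus J$. Apply the hypothesis to the singleton $X=\{g\}\subseteq H$: then $g^{JJ}\supseteq g^{JI}$, where $g^{JI}$ is the set of objects of $\mathbb{K}$ sharing all $J$-attributes of $g$. Now take any $h\in H$ with $g^{J}\subseteq h^{J}$; I want to conclude $(h,m)\notin I$. The point is that $g^{JJ}$, being an intent of $\mathbb{S}$, must (for a closed-subcontext) extend to an intent of $\mathbb{K}$, and $m\notin g^{J}=g^{JJJ}$... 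I would instead argue more directly: since $(g,m)\in I\setminus J$ with $g\in H, m\in N$, membership in $J$ fails only because $J\subsetneq I\cap(H\times N)$ — wait, in the closed-subcontext setting $J$ can be a proper subrelation. The cleanest route is: $h\in g^{JJ}$ whenever $g^J\subseteq h^J$ (for $h\in H$), so by the hypothesis applied suitably we get that $g^{JJ}$ is the extent of a concept of $\mathbb{K}$ with intent $g^J$; if $(h,m)\in I$ then $m\in h^I$, and since $h$ lies in that extent, $m$ would lie in the corresponding $\mathbb{K}$-intent, which is $g^{JJ I}= (g^{JJ})'$; chasing this forces $m\in g^{J}$, contradicting $(g,m)\notin J$. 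The symmetric half handles $n\in N$ with $m^J\subseteq n^J$, using $X=\{m\}\subseteq N$.

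For the converse, assume (C) and verify $X^{JJ}\supseteq X^{JI}$ for $X\subseteq H$ (the case $X\subseteq N$ is dual). Take $g\in X^{JI}$, i.e.\ $g\in H$ (we may assume $g\in H$; if $g\notin H$ there is nothing in $X^{JI}\cap$\dots — actually $X^{JI}\subseteq G$, so one must be careful that $X^{JI}$ is computed in $\mathbb{K}$ and may contain objects outside $H$; I expect the correct reading of the lemma is that $X^{JI}$ is intersected with $H$, or that one shows $X^{JI}\subseteq H$ under (C) — this is the kind of subtlety to pin down). Assume $g\in X^{JI}\cap H$ but $g\notin X^{JJ}$; then there is $m\in X^J$ (so $m\in N$ and $(x,m)\in J$ for all $x\in X$) with $(g,m)\notin J$. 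Since $g\in X^{JI}$ we have $(g,m)\in I$, hence $(g,m)\in I\setminus J$ with $(g,m)\in H\times N$. Now pick any $x\in X$: we have $(x,m)\in J$, and $g\in X^{JI}$ gives $X^J\subseteq g^I$... to invoke (C) I need $x^J\subseteq g^J$ or vice versa; here $X^J\subseteq g^I$ but we need the $J$-closure relationship. The resolution is to take $X$ itself closed, i.e.\ replace $X$ by $X^{JJ}$ (harmless since $(X^{JJ})^{JI}=X^{JI}$ and $(X^{JJ})^{JJ}=X^{JJ}$), and then for $x\in X$, $x^J\supseteq X^J$; applying (C) to the pair $(g,m)$ with the witness $h=x$ requires $g^J\subseteq x^J$, which I get from $g\in X^{JI}$ reading off that $g$ agrees with all of $X$ on $J$-attributes in $N$... and then (C) yields $(x,m)\notin I$, contradicting $(x,m)\in J\subseteq I$.

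The main obstacle I anticipate is exactly this last chase: getting the quantifier order and the direction of the $J$-closure inclusions right so that (C) applies, and correctly handling that $X^{JI}$ is a subset of $G$ (not a priori of $H$) while $X^{JJ}\subseteq H$ — one needs either the convention that the inclusion in the previous lemma is tacitly within $H$ (resp.\ $N$), or a short argument that (C) forces $X^{JI}\subseteq H$ for $X\subseteq H$. Once that bookkeeping is fixed, both directions are routine applications of the singleton and closed-set instances of the derivation inequality, mirroring the proof of Proposition 46 in \cite{fca-book}, so I would keep the written proof brief and simply say it follows that structure, spelling out only the translation of (C) into the singleton instances of $X^{JJ}\supseteq X^{JI}$.
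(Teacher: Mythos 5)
The decisive problem is the quantifier in condition (C). As in Proposition 47 of \cite{fca-book}, whose proof structure the paper's own (one-line) proof invokes, the object $h$ and the attribute $n$ in (C) are existentially quantified: for $(g,m)\in(H\times N)\cap(I\setminus J)$ there must exist \emph{some} $h\in H$ with $g^{J}\subseteq h^{J}$ and $(h,m)\notin I$, and \emph{some} $n\in N$ with $m^{J}\subseteq n^{J}$ and $(g,n)\notin I$. You read the condition universally (``take any $h\in H$ with $g^{J}\subseteq h^{J}$; I want to conclude $(h,m)\notin I$''). The universal version is unsatisfiable whenever $(H\times N)\cap(I\setminus J)\neq\emptyset$, because $h=g$ is itself such an $h$ while $(g,m)\in I$; so in the forward direction you are trying to prove something false, and the step you use for it --- ``$h$ lies in the extent $g^{JJ}$, hence $m\in h^{I}$ would lie in the intent $(g^{JJ})^{I}$'' --- is invalid, since $(g^{JJ})^{I}=\bigcap_{h'\in g^{JJ}}(h')^{I}$ and membership of $m$ in a single $h^{I}$ contributes nothing to that intersection. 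The correct forward argument is exactly the existential one: if \emph{every} $h\in H$ with $g^{J}\subseteq h^{J}$ (i.e.\ every $h\in g^{JJ}$) had $(h,m)\in I$, then $m\in(g^{JJ})^{I}$; but $(g^{JJ},g^{J})$ is a concept of $(H,N,J)$, hence by closedness a concept of $\mathbb{K}$, so $(g^{JJ})^{I}=g^{J}$, contradicting $(g,m)\notin J$.

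The converse inherits the same defect: with the existential reading you cannot ``apply (C) with the witness $h=x$'' --- (C) hands you some $h$, you do not get to choose it --- and in any case the inclusion $g^{J}\subseteq x^{J}$ needed there does not follow from $g\in X^{JI}$, which only gives $X^{J}\subseteq g^{I}$ and says nothing about the $J$-row of $g$. Your opening normalization $J=I\cap(H\times N)$ is also at odds with the lemma: for a closed-subcontext $J$ may be a proper subrelation of $I\cap(H\times N)$, and under that normalization (C) is vacuous, so the statement would degenerate to the false claim that every full subcontext is closed; you notice this midway but never repair the setup. Moreover, the obstacle you flag at the end ($X^{JI}$ need not be contained in $H$) is genuine and not mere bookkeeping: (C) only constrains pairs inside $H\times N$, so it cannot by itself force $X^{JI}\subseteq H$, which is why your reduction to the preceding derivation-operator lemma does not close without further argument (in Proposition 47 one has $H=G$, $N=M$, where the issue disappears). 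The intended argument avoids this detour altogether: given (C) and a concept $(A,B)$ of $(H,N,J)$, suppose $m\in A^{I}\setminus B$ with $m\in N$; choose $g\in A$ with $(g,m)\in I\setminus J$; (C) yields $h\in H$ with $g^{J}\subseteq h^{J}$ and $(h,m)\notin I$; since $B\subseteq g^{J}\subseteq h^{J}$ we get $h\in B^{J}=A$, contradicting $m\in A^{I}$; dually for objects. As written, both directions of your proof rest on the misread quantifier, so the argument does not go through.
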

\begin{proof}The proof follows the structure of the proof of Proposition 47 in \cite{fca-book}.\end{proof}

\begin{lemma}
	\label{lem:fca-buch_49}
	%\cite[lem. 49]{fca-book}
	Let $\mathbb{K}=(G,M,I)$ be a formal context. A clarified formal context $\mathbb{S}=(H,N,J)$ is a closed-subcontext of $\mathbb{K}$ if and only if $H\subseteq G$, $N\subseteq M$ and $J\subseteq I\cap(H\times N)\subseteq H\times N\setminus (\nearrow^J \cup \swarrow^J)$.
\end{lemma}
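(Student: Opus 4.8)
The plan is to combine the characterization from the previous lemma---condition (C)---with the hypothesis of clarification, and to unpack the arrow notation $\nearrow^J$ and $\swarrow^J$ so that condition (C) collapses into the stated set inclusion. Recall that for a clarified context $\mathbb{S}=(H,N,J)$ the relation $g\nearrow^J m$ means that $(g,m)\notin J$, that $g$ is minimal (with respect to $\subseteq$ on object rows $g^J$) among the objects not having $m$, and dually $g\swarrow^J m$ means $(g,m)\notin J$ with $m$ minimal among attributes not possessed by $g$; in a clarified context minimality can be stated via $g^J\subseteq h^J$ implying equality, etc. The key observation is that the condition ``$J\subseteq I\cap(H\times N)\subseteq H\times N\setminus(\nearrow^J\cup\swarrow^J)$'' says precisely: $J\subseteq I\cap(H\times N)$, and for every pair $(g,m)\in H\times N$ with $g\nearrow^J m$ or $g\swarrow^J m$ we have $(g,m)\notin I$.

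First I would prove the easy direction. Assume $\mathbb{S}$ is a closed-subcontext. Then $J\subseteq I\cap(H\times N)$ is part of \cref{def:closed-subcontext}, so only the second inclusion needs argument. Take $(g,m)\in H\times N$ with, say, $g\nearrow^J m$. Then $(g,m)\notin J$. If additionally $(g,m)\in I$, then $(g,m)\in (H\times N)\cap (I\setminus J)$, and condition (C) applies: since $g\nearrow^J m$ forces $g^J$ to be $\subseteq$-maximal among rows missing $m$ --- more precisely, for clarified $\mathbb{S}$, any $h\in H$ with $g^J\subseteq h^J$ satisfies $h^J=g^J$, hence $(h,m)\notin J$, so $h$ is an admissible witness in (C), giving $(h,m)\notin I$; taking $h=g$ yields $(g,m)\notin I$, a contradiction. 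The $\swarrow^J$ case is dual, using the attribute half of (C). Hence no such $(g,m)$ lies in $I$, establishing $I\cap(H\times N)\subseteq H\times N\setminus(\nearrow^J\cup\swarrow^J)$.

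For the converse, assume the displayed inclusions hold; I would verify condition (C) directly. Suppose $(g,m)\in H\times N$ and $(g,m)\in I\setminus J$. Let $h\in H$ with $g^J\subseteq h^J$; I must show $(h,m)\notin I$. Because $(g,m)\notin J$ and $\mathbb{S}$ is clarified, the row $g^J$ is not contained in $m^J$ (the set of objects having $m$ in $J$); combined with $g^J\subseteq h^J$ and clarification, one checks that $g\nearrow^J m$ holds --- the point being that clarification rules out a strictly larger distinct row, so $g$ is among the $\subseteq$-maximal objects missing $m$, which is exactly $g\nearrow^J m$ as used in \cite{fca-book}. Then $(g,m)\in \nearrow^J$, so by the hypothesis $(g,m)\notin I$, contradicting $(g,m)\in I\setminus J$. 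Thus the situation $(g,m)\in I\setminus J$ with such an $h$ cannot occur, so the implication in (C) is vacuously fine; the dual argument handles the attribute witnesses $n\in N$ with $m^J\subseteq n^J$ via $\swarrow^J$. By the previous lemma, $\mathbb{S}$ is then a closed-subcontext.

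The main obstacle I anticipate is getting the arrow-minimality bookkeeping exactly right under clarification: the definitions of $\nearrow$ and $\swarrow$ are stated for clarified (indeed usually reduced) contexts in \cite{fca-book}, and I need to be careful that ``$g^J\subseteq h^J$ in (C)'' lines up with ``$g$ maximal among non-$m$ objects in the $\nearrow^J$ sense'' without an off-by-one in the direction of the inclusion. Once that dictionary between (C) and the arrow relations is pinned down, both directions are short; the rest is routine and can be phrased, as the authors do for the neighbouring lemmas, by following the proof of Proposition 49 in \cite{fca-book}.
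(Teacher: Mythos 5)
Your overall plan—deduce the statement from the condition-(C) lemma together with the definition of the arrow relations, using clarification only at the very end—is indeed the route the paper intends (it is the proof of Proposition 49 in \cite{fca-book}), but the execution breaks on the quantifier in (C) and on two false auxiliary claims. Condition (C) must be read \emph{existentially}: if $(g,m)\in(H\times N)\cap(I\setminus J)$ then there is \emph{some} $h\in H$ with $g^{J}\subseteq h^{J}$ and $(h,m)\notin I$, and \emph{some} $n\in N$ with $m^{J}\subseteq n^{J}$ and $(g,n)\notin I$ (this is Proposition 47 of \cite{fca-book}). Under the universal reading you adopt, taking $h=g$ already yields $(g,m)\notin I$, so (C) would force $I\cap(H\times N)=J$ and closed-subcontexts would collapse to ordinary subcontexts, contradicting e.g.\ the closed-subcontext $(C,B,A\times B\cup C\times D)$ from the paper. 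Your forward direction is exactly this ``take $h=g$'' shortcut, so it never genuinely uses the arrow hypothesis; moreover the claim ``for clarified $\mathbb{S}$, any $h\in H$ with $g^{J}\subseteq h^{J}$ satisfies $h^{J}=g^{J}$'' is false: clarification forbids distinct objects with \emph{equal} rows, not proper containments. (Also, your opening dictionary says ``minimal'' where the arrows require the row $g^{J}$, resp.\ column $m^{J}$, to be \emph{maximal} among those missing $m$, resp.\ $g$.)

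The converse contains the more serious gap: from $(g,m)\in I\setminus J$ and clarification you infer $g\nearrow^{J}m$. This is not valid—there may well exist $h$ with $g^{J}\subsetneq h^{J}$ and $(h,m)\notin J$, in which case $g$ is not maximal and no arrow holds at $(g,m)$—and the conclusion you draw, that no pair of $I\setminus J$ can lie in $H\times N$, proves too much (it is false for typical closed-subcontexts). The correct argument supplies the existential witness required by (C): among all $h\in H$ with $g^{J}\subseteq h^{J}$ and $(h,m)\notin J$ (nonempty, since $g$ qualifies) choose one with $h^{J}$ maximal, which exists by finiteness; then $h\nearrow^{J}m$, so the hypothesis $I\cap(H\times N)\subseteq H\times N\setminus(\nearrow^{J}\cup\swarrow^{J})$ gives $(h,m)\notin I$, and dually one finds $n$ with $m^{J}\subseteq n^{J}$, $(g,n)\notin I$. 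Symmetrically, the forward direction should run: if $g\nearrow^{J}m$ and $(g,m)\in I$, take the witness $h$ from (C); since $(h,m)\notin I\supseteq J$, the arrow forces $g^{J}=h^{J}$, clarification gives $h=g$, contradicting $(g,m)\in I$. With these repairs your reduction to (C) goes through and coincides with the paper's intended adaptation of Proposition 49.
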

\begin{proof}The proof follows the structure of the proof of Proposition 49 in \cite{fca-book}.\end{proof}

\begin{lemma}
	%\cite[lem. 50]{fca-book}
	Let $\mathbb{K}=(G,M,I)$ be a formal context and $(A,B)$ and $(C,D)$ concepts of $\mathbb{K}$. Then $(A,B, A \times B)$,	$(A,M,I\cap (A\times M))$ and 
	$(G,B,I\cap (G\times B))$ are closed-subcontexts. 
	If $(A,B)\le (C,D)$ also 
	$(C,B,(A\times B \cup C\times D))$ and 
	$(C,B,I\cap (C\times B))$ are closed-subcontexts.
	The corresponding concept lattices are given through
	$\underline{\mathfrak{B}}(A,B, A \times B)=\{(A,B)\}$, 
	$\underline{\mathfrak{B}}(A,M,I\cap (A\times M))=((A,B)]$,
	$\underline{\mathfrak{B}}(G,B,I\cap (G\times B))=[(A,B))$,
	$\underline{\mathfrak{B}}(C,B,(A\times B \cup C\times D))=\{(A,B),(C,D)\}$, and
	$\underline{\mathfrak{B}}(C,B,I\cap (C\times B))=[(A,B),(C,D)]$.
\end{lemma}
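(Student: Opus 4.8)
The plan is to verify each of the five claims by directly checking that the proposed subcontext is a closed-subcontext in the sense of \cref{def:closed-subcontext}, i.e., that $J\subseteq I\cap(H\times N)$ and every concept of the subcontext is a concept of $\mathbb{K}$; the accompanying descriptions of the concept lattices then follow by inspecting which pairs $(A,B)$ satisfy the derivation-operator fixed-point conditions in the restricted context. For most of these, I would actually avoid re-deriving things from scratch and instead invoke \cref{thm:subcontexts}: if I can exhibit a sublattice $\underline{S}$ of $\underline{\mathfrak{B}}(\mathbb{K})$ whose associated $\mathbb{K}_{\underline{S}}$ equals the stated subcontext, then closedness is immediate. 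For instance, $\{(A,B)\}$ is a (one-element) sublattice, and $\mathbb{K}_{\{(A,B)\}}=(A,B,A\times B)$ by definition of $f$; similarly $\{(A,B),(C,D)\}$ with $(A,B)\le(C,D)$ is a two-element chain, hence a sublattice, and $\mathbb{K}_{\{(A,B),(C,D)\}}=(A\cup C,\,B\cup D,\,A\times B\cup C\times D)=(C,B,A\times B\cup C\times D)$ using $A\subseteq C$ and $D\subseteq B$.

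For the three subcontexts built with the full induced incidence $I\cap(H\times N)$, the cleanest route is to identify them as principal ideals, principal filters, and intervals. The plan is: show $\underline{\mathfrak{B}}(A,M,I\cap(A\times M))$ is order-isomorphic to the principal ideal $((A,B)]$ in $\underline{\mathfrak{B}}(\mathbb{K})$ via the map $(A_1,B_1)\mapsto(A_1,A_1^I)$, noting that restricting the object set to the extent $A$ exactly realizes the interval below $(A,B)$ because for $A_1\subseteq A$ one has $A_1^{I\cap(A\times M)}=A_1^I$ and the intents produced are precisely those $B_1\supseteq B$. Dually for $(G,B,I\cap(G\times B))$ and the filter $[(A,B))$. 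The interval case $(C,B,I\cap(C\times B))$ combines both restrictions; here I would check that for $C\supseteq A_1\supseteq A$ (equivalently $B\supseteq B_1\supseteq D$) the derivations computed in the doubly-restricted context coincide with those in $\mathbb{K}$, so the concepts are exactly the elements of $[(A,B),(C,D)]$. Since each of these is an interval of a lattice, it is in particular a sublattice, so \cref{thm:subcontexts} again certifies that the subcontext is closed; alternatively, one can quote \cref{lem:fca-buch_49} after noting these contexts are obtained from $\mathbb{K}$ by deleting rows/columns, which cannot create new arrows.

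I expect the main obstacle to be the bookkeeping in the interval case: one must be careful that restricting \emph{both} the objects (to $C$) and the attributes (to $B$) does not accidentally collapse or separate concepts, and in particular that the top and bottom of the restricted context are $(A,B)$ and $(C,D)$ respectively rather than something larger or smaller. The key small lemma to isolate is that for $A\subseteq A_1\subseteq C$ we have $A_1^{I\cap(C\times B)}=A_1^I\cap B$ and, crucially, $A_1^I\subseteq B$ already holds because $A_1\subseteq C$ and $C^I=D\subseteq\ldots$ — wait, one needs $A_1\supseteq A$ to force $A_1^I\subseteq A^I=B$; so the genuinely closed concepts of the restricted context are exactly those with extent in $[A,C]$, and double-derivation there agrees with double-derivation in $\mathbb{K}$. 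Once that is nailed down, the five concept-lattice identities are just a matter of reading off the fixed points, and all the closedness claims are corollaries of \cref{thm:subcontexts} (or \cref{lem:fca-buch_49}). I would present the argument by first disposing of the two explicit-incidence cases $(A,B,A\times B)$ and $(C,B,A\times B\cup C\times D)$ via $f$, then handling the ideal and filter together by duality, and finally the interval as their common refinement.
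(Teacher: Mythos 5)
The paper itself gives no details here -- it simply defers to the proof of Proposition 50 in Ganter--Wille -- so your proposal, which is a direct verification, is in the same spirit as the intended argument and its core computations are correct: the singleton and two-element chain are handled exactly by the map $f$ of \cref{thm:subcontexts}, since $\K_{\underline{S}}$ for those sublattices literally equals $(A,B,A\times B)$ and $(C,B,A\times B\cup C\times D)$; and for the three full-induced-incidence contexts your ``key small lemma'' (for $A\subseteq A_1\subseteq C$ one has $A_1^{I\cap(C\times B)}=A_1^I\cap B$ and $A_1\supseteq A$ forces $A_1^I\subseteq A^I=B$, dually for intents) is exactly what identifies the concept sets with $((A,B)]$, $[(A,B))$ and $[(A,B),(C,D)]$.

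Two caveats. First, your statement that ``since each of these is an interval of a lattice, it is a sublattice, so \cref{thm:subcontexts} again certifies that the subcontext is closed'' is a non sequitur as written: \cref{thm:subcontexts} certifies closedness of $\K_{\underline{S}}$, whose incidence is the union of concept rectangles, which for an ideal, filter or interval is in general a proper subset of $I\cap(H\times N)$, so it is not the context in the lemma. No appeal to the theorem is needed, though: once your derivation computation shows that every concept of the restricted context is a concept of $\K$ (which it does, after noting that the bottom and top of the interval context are $(A,B)$ and $(C,D)$ respectively -- you state this the wrong way round, but your computation has the correct directions), closedness holds immediately by \cref{def:closed-subcontext}. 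Second, the alternative route via \cref{lem:fca-buch_49} is unreliable: deleting rows or columns can create new arrow relations, and for a full subcontext $J=I\cap(H\times N)$ the arrow condition becomes vacuous, so it cannot by itself certify closedness; drop that remark and keep the direct argument.
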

\begin{proof}The proof follows the structure of the proof of Proposition 50 in \cite{fca-book}.\end{proof} 

Also, the set of the arrow relations of a closed-subcontext $\mathbb{S}$ is a subset of the set of the arrow relations of the original context $\mathbb{K}$. 

\begin{lemma}
	Let $\mathbb{K}=(G,M,I)$ be a formal context and $\mathbb{S}=(H,N,J)$ a closed-subcontext. Then $\nearrow^J \subseteq \nearrow^I$ and $\swarrow^J \subseteq \swarrow^I$ holds.
\end{lemma}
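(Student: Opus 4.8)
The plan is to fix a pair $(g,m)\in\nearrow^J$ (so in particular $g\in H$, $m\in N$) and show $(g,m)\in\nearrow^I$; the inclusion $\swarrow^J\subseteq\swarrow^I$ then follows by interchanging the roles of objects and attributes throughout. Unfolding the definition of the up‑arrow, I must verify (a) $(g,m)\notin I$, and (b) $(h,m)\in I$ for every $h\in G$ with $g^I\subsetneq h^I$. The only inputs I intend to use are that, by definition of a closed‑subcontext, every concept of $\mathbb{S}$ is a concept of $\mathbb{K}$, and the arrow characterisation of~\cref{lem:fca-buch_49}.

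For (a): since $(g,m)\in\nearrow^J\subseteq H\times N$, \cref{lem:fca-buch_49} gives $I\cap(H\times N)\subseteq H\times N\setminus(\nearrow^J\cup\swarrow^J)$, so the arrow pair $(g,m)$ cannot lie in $I\cap(H\times N)$; as $g\in H$ and $m\in N$ this forces $(g,m)\notin I$. I expect this to be the main obstacle. Note that \cref{lem:fca-buch_49} is stated for clarified $\mathbb{S}$, and clarifiedness really is needed here: for a non‑clarified closed‑subcontext an $\mathbb{S}$‑arrow can sit on a pair that is incident in $\mathbb{K}$ (this already happens when $\underline{\mathfrak{B}}(\mathbb{S})$ is a two‑element chain), so (a) — and hence the whole statement — requires $\mathbb{S}$ to be clarified (or at least the arrows to be read off a clarified copy).

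For (b): let $h\in G$ with $g^I\subsetneq h^I$. First I would locate $h$ inside $H$. The top concept of $\mathbb{S}$ is $(H,H^{J})$ (its extent being all of $H$), and by definition of a closed‑subcontext it is a concept of $\mathbb{K}$; in particular $(H^{J})^{I}=H$. Since $g\in H$ we have $H^{J}\subseteq g^{J}\subseteq g^{I}\subseteq h^{I}$, so $h$ carries every attribute in $H^{J}$ and therefore $h\in(H^{J})^{I}=H$. Now assume, for contradiction, that $(h,m)\notin I$. If $g^{J}\subseteq h^{J}$, then because $g^{I}\neq h^{I}$ makes $g\neq h$ and $\mathbb{S}$ is clarified, the inclusion is strict, $g^{J}\subsetneq h^{J}$, and $(g,m)\in\nearrow^J$ yields $(h,m)\in J\subseteq I$, a contradiction. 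Hence there is some $n\in g^{J}\setminus h^{J}$. Consider the attribute concept $(n^{J},n^{JJ})$ of $\mathbb{S}$, which is a concept of $\mathbb{K}$, so $(n^{JJ})^{I}=n^{J}$. As $h\notin n^{J}$, the object $h$ misses some attribute $q\in n^{JJ}$, i.e. $(h,q)\notin I$; but $g\in n^{J}\subseteq q^{J}$ gives $(g,q)\in J\subseteq I$, whence $q\in g^{I}\subseteq h^{I}$, i.e. $(h,q)\in I$ — contradiction. Therefore $(h,m)\in I$, and together with (a) this shows $(g,m)\in\nearrow^I$.

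In summary, step (a) is the delicate one and is exactly where \cref{lem:fca-buch_49} (and with it the clarified hypothesis) enters; step (b) is then a straightforward bookkeeping argument with closedness once $h$ is known to lie in $H$. The $\swarrow$‑case needs no separate treatment: it is the order dual obtained by replacing $\nearrow$ by $\swarrow$, swapping $G\leftrightarrow M$ and $H\leftrightarrow N$, and reading the derivation operators accordingly, with every step going through verbatim.
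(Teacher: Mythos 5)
Your argument is sound, and your diagnosis of step (a) is exactly right: with the usual arrow relations of~\cite{fca-book} (defined only on non-incident pairs), the lemma fails for non-clarified closed-subcontexts, so the clarifiedness hypothesis you import via \cref{lem:fca-buch_49} is genuinely needed. To make your parenthetical claim concrete: let $G=\{g,h,k\}$, $M=\{m,p,n\}$, let $I$ be given by $g^I=\{m,n\}$, $h^I=\{n\}$, $k^I=\{m,p,n\}$, and set $J\coloneqq I\setminus\{(g,m)\}$. Then $(G,M,J)$ has exactly the two concepts $(\{g,h,k\},\{n\})$ and $(\{k\},\{m,p,n\})$, both concepts of $(G,M,I)$, so it is a closed-subcontext whose concept lattice is a two-element chain; it is not clarified since $g^J=h^J=\{n\}$. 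One checks $g\swarrow^J m$ and $g\nearrow^J m$, yet $(g,m)\in I$, so $(g,m)$ lies in neither $\swarrow^I$ nor $\nearrow^I$. Under the added clarifiedness assumption your proof is correct: (a) is immediate from \cref{lem:fca-buch_49} (or directly from the fact that $(g^{JJ},g^J)$ is a concept of $\mathbb{K}$, plus clarifiedness), the top concept $(H,H^J)$ forces $h\in H$, the attribute-concept argument yields $g^J\subseteq h^J$, and clarifiedness gives the strictness needed to invoke the arrow.

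The paper's own proof takes a shorter route through the same closedness fact: from $g\not\swarrow^I m$ it picks $h$ with $g^I\subseteq h^I$ and $(h,m)\notin I$, uses $(g^J)^I=g^{JJ}$ to conclude $h\in g^{JJ}$, hence $g^J\subseteq h^J$, and declares a conflict with $g\swarrow^J m$. This silently assumes $(g,m)\notin I$ (if $(g,m)\in I$, no such $h$ exists, while $g\not\swarrow^I m$ holds anyway) and silently excludes the case $g^J=h^J$, which the arrow does not forbid; these are precisely your point (a) and the clarifiedness issue, and the example above exploits both. So your proposal is best read as a corrected and completed version of the paper's argument rather than a genuinely different one: the new content is step (a) and the explicit hypothesis, while your step (b) replaces the paper's single chain through the object concept of $g$ by the top concept and attribute concepts of $\mathbb{S}$. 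One cosmetic remark: the condition you verify under the name up-arrow, namely $(h,m)\in I$ for all $h$ with $g^I\subsetneq h^I$, is the one attached to $\swarrow$ in~\cite{fca-book}; since you treat the second arrow by object--attribute duality anyway, this is only a relabelling.
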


\begin{proof}
	Let $g\in H, m\in N$ and $g \swarrow^J m$. Assumed $g\not\swarrow^Im$. Then there exists $h\in G$ with $g^I\subseteq h^I$ and $(h,m)\not\in I$. It follows $g^J \subseteq g^{I\cap (G\times H)} \subseteq h^{I\cap (G\times H)} \Rightarrow h\in h^{I\cap (G\times H)} \subseteq g^{JI}=g^{JJ}\subseteq H$ $\Rightarrow g^J\subseteq h^J$. This is a conflict to $g \swarrow^J m$.
\end{proof}

Now we transfer our approach to the field of Boolean substructures. To find all Boolean sublattices (of dimension $k$) in a lattice $\BB(\K)$ the closed-subcontexts of $\K$ that are Boolean subcontexts as well have to be found. Hence,~\cref{thm:subcontexts} can be restricted in the following way:

\begin{lemma}
	\label{lem:boolean_closed}
	Let $\mathbb{K}$ be a formal context.
	$\underline{S}\in\mathcal{SLB}_k(\underline{\mathfrak{B}}(\mathbb{K}))$ iff $\underline{\BB}(\K_{\underline{S}})\cong \BB(k)$ for
	$\mathbb{K}_{\underline{S}}= (\bigcup_{(A,B)\in \text{\underline{S}}} A, \bigcup_{(A,B)\in \text{\underline{S}}} B, \bigcup_{(A,B)\in \text{\underline{S}}} A\times B)$.
\end{lemma}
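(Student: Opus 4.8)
The plan is to obtain this as a specialization of \cref{thm:subcontexts}. The one fact that needs to be pinned down is that the two constructions in \cref{thm:subcontexts} are genuinely mutually inverse, so that $\underline{\BB}(\K_{\underline{S}}) = \underline{S}$ (as suborders of $\underline{\BB}(\K)$, hence in particular as abstract lattices) for every sublattice $\underline{S}$ of $\underline{\BB}(\K)$. One inclusion, $\underline{S}\subseteq\underline{\BB}(\K_{\underline{S}})$, is already the opening observation in the proof of \cref{thm:subcontexts}: every $(A,B)\in\underline{S}$ is again a concept of $\K_{\underline{S}}$ by construction. For the converse I would verify that $\mathbb{S}\mapsto\underline{\BB}(\mathbb{S})$ is the two-sided inverse of the bijection $f$ of \cref{thm:subcontexts}; since $f$ is already known to be a bijection onto the closed-subcontexts, it suffices to check $\K_{\underline{\BB}(\mathbb{S})}=\mathbb{S}$ for a closed-subcontext $\mathbb{S}=(H,N,J)$. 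This is routine: the union of the extents of $\mathbb{S}$ equals $H$ and the union of its intents equals $N$ (since $H$ and $N$ are themselves an extent and an intent of $\mathbb{S}$), and $\bigcup\{A\times B\mid(A,B)\in\underline{\BB}(\mathbb{S})\}=J$, because $(g,m)\in J$ holds exactly when $(g,m)\in g^{JJ}\times g^{J}$. Applying this with $\mathbb{S}=\K_{\underline{S}}$ yields $\underline{\BB}(\K_{\underline{S}})=f^{-1}(f(\underline{S}))=\underline{S}$.

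Once $\underline{\BB}(\K_{\underline{S}})=\underline{S}$ is available, the lemma is immediate. Under the standing assumption that $\underline{S}$ is a sublattice of $\underline{\BB}(\K)$ --- which is built into the meaning of $\underline{S}\in\mathcal{SLB}_k(\underline{\BB}(\K))$ --- we obtain the chain $\underline{S}\in\mathcal{SLB}_k(\underline{\BB}(\K)) \iff \underline{S}\cong\BB(k) \iff \underline{\BB}(\K_{\underline{S}})\cong\BB(k)$, the last step using the identification above. I would additionally remark that the hypothesis that $\underline{S}$ be a sublattice cannot be dropped: for a general suborder the equivalence fails, \eg the two-element suborder consisting of the two atoms of $\BB(2)$ has $\K_{\underline{S}}=\N^c(2)$, so that $\underline{\BB}(\K_{\underline{S}})\cong\BB(2)$ while this $\underline{S}$ is not a sublattice at all.

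I do not expect a genuine obstacle; the only point requiring care is the identification $\underline{\BB}(\K_{\underline{S}})=\underline{S}$, \ie that passing to the closed-subcontext $\K_{\underline{S}}$ and reading off its concept lattice returns exactly $\underline{S}$. This is what makes the correspondence of \cref{thm:subcontexts} restrict cleanly to the Boolean setting, rather than merely sending Boolean sublattices to some unidentified closed-subcontexts, and it is worth spelling out given that --- as noted right after \cref{thm:subcontexts} --- closed-subcontexts are otherwise less well behaved than closed relations.
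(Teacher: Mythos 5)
Your proposal is correct and takes essentially the paper's route: the paper states \cref{lem:boolean_closed} without proof, as an immediate restriction of \cref{thm:subcontexts}, and your argument is exactly that specialization, usefully making explicit the identification $\underline{\BB}(\K_{\underline{S}})=\underline{S}$ (via $f\circ\underline{\BB}(\cdot)=\mathrm{id}$ on closed-subcontexts and injectivity of $f$) and noting, with a valid counterexample, that the sublattice hypothesis on $\underline{S}$ cannot be dropped.
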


To directly identify the Boolean closed-subcontexts in a formal context $\K$, the properties of closed-subcontexts can be utilized. Since every concept in $\K$ is either retained or erased but not altered in a closed-subcontext $\mathbb{S}$, the Boolean structure of $\mathbb{S}$ has to be preserved from $\K$.
Every Boolean subcontext $\mathbb{T}=(H,N,J)\in\mathcal{SRB}(\K)$ provides the Boolean structure.
Lifting each concept $(A_{\mathbb{T}},B_{\mathbb{T}})\in\underline{\BB}(\mathbb{T})$ to a
concept $(A_{\K},B_{\K})\in\underline{\BB}(\K)$ with $A_{\mathbb{T}}\subseteq A_{\K}$ and $B_{\mathbb{T}}\subseteq B_{\K}$, generates an extention of the sets $H,N$ and $J$ that provides a Boolean closed-subcontext $\mathbb{S}=(\widetilde{H},\widetilde{N},\widetilde{J})\in\mathcal{SC}(\K)$ as follows:
$\widetilde{H}\coloneqq H\cup\bigcup_{(A_{\mathbb{T}},B_{\mathbb{T}})\in \underline{\BB}(\mathbb{T})} A_{\mathbb{K}}$, 
$\widetilde{N}\coloneqq H\cup\bigcup_{(A_{\mathbb{T}},B_{\mathbb{T}})\in \underline{\BB}(\mathbb{T})} B_{\mathbb{K}}$
and $\widetilde{J}\coloneqq \bigcup_{(A_{\mathbb{T}},B_{\mathbb{T}})\in \underline{\BB}(\mathbb{T})} (A_{\mathbb{K}}\times B_{\mathbb{K}})$.
This approach is represented through the dotted lines in~\cref{fig: Schaubild}.

\section{Connecting Boolean Suborders and Boolean Subcontexts}
\label{sec:connection}

In this section we investigate the relationship between Boolean subcontexts and Boolean suborders. For this purpose, we use the embeddings $\phi_1$ and $\phi_2$ and expand them to the set of Boolean subcontexts. Further, we present a construction to get from a Boolean suborder to a corresponding Boolean subcontext. Both approaches are analyzed with focus on the structural information they transfer and their interplay.

\subsection{Embeddings of Boolean Substructures}
To investigate the connection between Boolean subcontexts $\mathbb{S}$ of a formal context $\mathbb{K}$ and Boolean suborders of $\mathfrak{B}(\mathbb{K})$ we consider embeddings of $\mathfrak{B}(\mathbb{S})$ in $\mathfrak{B}(\mathbb{K})$. Therefore we lift the embeddings $\phi_1$ and $\phi_2$ introduced in \cref{sec:FCA} to the level of subcontexts and suborders:
\begin{align*}
\phi_1:\mathcal{S}(\mathbb{K})&\rightarrow \mathcal{SO}(\underline{\mathfrak{B}}(\mathbb{K})),~
\mathbb{S}\mapsto (\{\phi_1(C)\mid C\in \mathfrak{B}(\mathbb{S})\}, \le)~\text{and}\\
\phi_2:\mathcal{S}(\mathbb{K})&\rightarrow \mathcal{SO}(\underline{\mathfrak{B}}(\mathbb{K})),~
\mathbb{S}\mapsto (\{\phi_2(C)\mid C\in \mathfrak{B}(\mathbb{S})\}, \le).
\end{align*}
From the input (concept or context), it is clear whether the original or the lifted versions of the embeddings $\phi_1$ and $\phi_2$ are used in the following.
We will, in particular, study these mappings for Boolean subcontexts. In this case, an additional structural benefit arises: The images of reduced Boolean subcontexts are sub-$\vee$-semilattice and sub-$\wedge$-semilattices of the original concept lattice:
\begin{lemma}
	\label{lem:subsemilattice}
	Let $\mathbb{K}$ be a formal context, $\mathbb{S}=[H,N]\in \mathcal{SRB}_k(\K)$. Then 
	$\phi_1 (\underline{\mathfrak{B}}(\mathbb{S}))$ is a sub-$\vee$-semilattice of $\underline{\mathfrak{B}}(\mathbb{K})$ and 
	$\phi_2 (\underline{\mathfrak{B}}(\mathbb{S}))$ is a sub-$\wedge$-semilattice of $\underline{\mathfrak{B}}(\mathbb{K})$.
\end{lemma}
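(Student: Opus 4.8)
The plan is to work directly with the embeddings $\phi_1$ and $\phi_2$ and exploit the fact that on a \emph{reduced} Boolean subcontext $\mathbb{S} \cong \N^c(k)$ every object is irreducible and every object subset is an extent, so that the concepts of $\mathbb{S}$ are exactly the pairs $(A, A^J)$ with $A \subseteq H$, and the join of two concepts $(A_1, A_1^J), (A_2, A_2^J)$ in $\underline{\BB}(\mathbb{S})$ is simply $(A_1 \cup A_2, (A_1 \cup A_2)^J)$ (no double-prime closure is needed on the object side). By the symmetric structure of $\N^c(k)$, the meet of two concepts is dually given by the union of intents: $(A_1^J \cup A_2^J)^J$ on the attribute side. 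I would focus on the $\phi_1$ / sub-$\vee$-semilattice claim and obtain the $\phi_2$ / sub-$\wedge$-semilattice claim by the order-theoretic dual argument (swapping the roles of $G$ and $M$, $I$ and $I^{-1}$, $\phi_1$ and $\phi_2$).

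First I would fix $C_1 = (A_1, B_1)$ and $C_2 = (A_2, B_2)$ in $\underline{\BB}(\mathbb{S})$ and let $C_1 \vee C_2 = (A, B)$ be their join in $\underline{\BB}(\mathbb{S})$, so $A = A_1 \cup A_2$ as noted above. Applying $\phi_1$ gives the three elements $\phi_1(C_i) = (A_i^{II}, A_i^I)$ for $i \in \{1,2\}$ and $\phi_1(C_1 \vee C_2) = (A^{II}, A^I) = ((A_1 \cup A_2)^{II}, (A_1 \cup A_2)^I)$ in $\underline{\BB}(\mathbb{K})$, where now the closures are taken with respect to $I$. The goal is to show that the join of $\phi_1(C_1)$ and $\phi_1(C_2)$ \emph{in $\underline{\BB}(\mathbb{K})$} equals $\phi_1(C_1 \vee C_2)$, i.e.\ lands back in the image $\phi_1(\underline{\BB}(\mathbb{S}))$. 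Since $\phi_1$ is an order embedding (Proposition 32 of~\cite{fca-book}, recalled above), $\phi_1(C_1 \vee C_2)$ is an upper bound of $\phi_1(C_1)$ and $\phi_1(C_2)$ in $\underline{\BB}(\mathbb{K})$; what remains is to verify it is the \emph{least} one. The join of $\phi_1(C_1)$ and $\phi_1(C_2)$ in $\underline{\BB}(\mathbb{K})$ has extent $(A_1^{II} \cup A_2^{II})^{II}$, and because $A_i \subseteq A_i^{II}$ one gets $A_1 \cup A_2 \subseteq A_1^{II} \cup A_2^{II}$, hence $(A_1 \cup A_2)^{II} \subseteq (A_1^{II} \cup A_2^{II})^{II}$; the reverse inclusion follows from $A_i^{II} \subseteq (A_1 \cup A_2)^{II}$ (monotonicity of $\cdot^{II}$ applied to $A_i \subseteq A_1 \cup A_2$, together with idempotence of the closure). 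So the two extents coincide and the join in $\underline{\BB}(\mathbb{K})$ is $\phi_1(C_1 \vee C_2)$, which proves closure under binary joins; iterating (or noting $\phi_1$ preserves the bottom/top as needed) handles arbitrary finite joins.

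The step I expect to be the main obstacle — or at least the one requiring care — is making sure the join in $\underline{\BB}(\mathbb{S})$ really does behave as $A_1 \cup A_2$ on extents, which hinges on $\mathbb{S}$ being \emph{reduced} Boolean: in a contranominal scale $\N^c(k)$ the object closure operator $\cdot^{JJ}$ is the identity on subsets of the object set, so $A_1 \cup A_2$ is already a concept extent of $\mathbb{S}$, and this is exactly what lets the $\phi_1$-images of the three concepts relate to each other by the clean inclusion chain above. If one instead started from a non-reduced Boolean subcontext the argument would break, which is why the hypothesis $\mathbb{S} \in \mathcal{SRB}_k(\K)$ rather than merely $\mathbb{S} \in \mathcal{SB}_k(\K)$ is essential; I would flag this explicitly. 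A secondary subtlety is that $\phi_1(\underline{\BB}(\mathbb{S}))$ need not contain the bottom element of $\underline{\BB}(\mathbb{K})$, so it is a sub-$\vee$-semilattice but generally not a sublattice, which is consistent with the weaker conclusion stated; nothing more needs to be shown there. Finally, for $\phi_2$ I would simply transcribe the whole argument with $\cdot^{JJ}$ and $\cdot^{II}$ applied on the attribute side and unions of intents, yielding that $\phi_2$ sends the meet in $\underline{\BB}(\mathbb{S})$ to the meet in $\underline{\BB}(\mathbb{K})$, hence $\phi_2(\underline{\BB}(\mathbb{S}))$ is a sub-$\wedge$-semilattice.
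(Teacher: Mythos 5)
Your proposal is correct and follows essentially the same route as the paper's proof: both use the fact that in a reduced Boolean subcontext (a contranominal scale) every object subset is closed, so the join in $\underline{\BB}(\mathbb{S})$ has extent $A_1\cup A_2$, and then verify via the closure computation $(A_1^{II}\cup A_2^{II})^{II}=(A_1\cup A_2)^{II}$ that the join of the $\phi_1$-images in $\underline{\BB}(\K)$ equals $\phi_1$ of the join, with the $\phi_2$ case handled dually. Your explicit flagging of where the reducedness hypothesis enters matches the paper's reasoning; no gap.
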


\begin{proof}
	Consider $\phi_1$: Let $J\coloneqq I\cap (H\times N)$ and $(A,B)$ and $(C,D)$ be two concepts of $\underline{\BB}(\mathbb{S})$.
	Then $\phi_1(A,B)\vee \phi_1(C,D)=(A'',A') \vee (C'',C')=((A''\cup C'')'',(A'\cap C'))=((A'\cap C')',(A\cup C)')=((A\cup C)'',(A\cup C)')$ and in addition $((A\cup C)'',(A\cup C)')=\phi_1((A\cup C),(B\cap D))=\phi_1((A,B)\vee (C,D))$.
	Since $\mathbb{S}$ is a reduced Boolean context, it includes all possible object combinations as extents so that $E=E^{JJ}$ holds for every $E\subseteq H$. Therefore, in $\BB(\mathbb{S})$ holds $(A,B)\vee(c; D)=((A\cup C)^{JJ},B\cap D)=(A\cup C,B\cap D)$. The procedure for $\phi_2$ is analogous. 
\end{proof}

Note that this conclusion does not hold for Boolean reducible subcontexts, e.g., the formal context given in ~\cref{runexp} and its subcontext $\mathbb{S}=[\{1237\},\{abce\}]$.

The images of the two maps of a reduced Boolean context are in general just a sub-$\vee$-semilattice and a sub-$\wedge$-semilattice, respectively.
Hence, the images of $\phi_1$ and $\phi_2$ have to be identical for $\mathbb{S}\in\mathcal{SRB}_k(\K)$ to generate a lattice. This means $\phi_1(A,B)=(A'',A)=(B',B'')=\phi_2 (A,B)$ has to hold for all $(A,B)\in \mathfrak{B}(\mathbb{S})$.

For every subcontext $\mathbb{S}=(H,N,J)\le\K$ we can differ between the four cases: 
Case 1 with $A'=A^J=B$, $B'=B^J=A$,
case 2 with $A'=A^J=B$, $A=B^J\subset B'$,
case 3 with $B= A^J\subset A'$, $B'=B^J=A$ and 
case 4 with $B= A^J\subset A'$, $A=B^J\subset B'$.
The condition under which $\phi_1(A,B)=\phi_2 (A,B)$ holds is the following:

\begin{lemma}
	Let $\mathbb{K}=(G,M,I)$ be a formal context and $\mathbb{S}\le\mathbb{K}$. $\phi_1(\mathbb{S})=\phi_2(\mathbb{S})$ holds if and only if for all $(A,B)\in \underline{\BB}(\mathbb{S})$ $(A'\setminus B)\times (B'\setminus A)\subseteq I$ holds. If case 1, 2 or 3 holds for all $(A,B)\in \underline{\BB}(\mathbb{S})$, then $\phi_1(\mathbb{S})=\phi_2(\mathbb{S})$ holds directly.
\end{lemma}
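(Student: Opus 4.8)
The plan is to unwind the definitions of $\phi_1$ and $\phi_2$ and show that, concept by concept, the equality $\phi_1(A,B) = \phi_2(A,B)$ in $\underline{\BB}(\K)$ is governed by exactly one inclusion of a rectangle into $I$. Recall that $\phi_1(A,B) = (A'',A')$ and $\phi_2(A,B) = (B',B'')$, and these are equal as concepts of $\K$ precisely when their extents agree, i.\,e.\ when $A'' = B'$. So the whole statement reduces to the claim: for a concept $(A,B)$ of $\mathbb S$, one has $A^{II} = B^I$ in $\K$ if and only if $(A' \setminus B) \times (B' \setminus A) \subseteq I$.

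First I would set up the elementary inclusions that hold unconditionally. Since $(A,B)$ is a concept of $\mathbb S = (H,N,J)$ we have $A = B^J \subseteq B^I = B'$ and $B = A^J \subseteq A^I = A'$; moreover $A'' \supseteq A$ and dually $B' \supseteq B$, so $A'' \subseteq (B')' \cdots$ — more usefully, $A'' = A^{II}$ and $B' = B^I$ both sit between $A$ and $\{g \in G : g \text{ is incident to everything in } B\}$. The key observation is that $A'' \subseteq B'$ always holds: from $B \subseteq A'$ we get $B' \supseteq A'' $. Wait — let me instead note $A \subseteq B'$ gives $A'' \subseteq B'$ directly (applying $''$ is monotone and $B'$ is an extent). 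So only the reverse inclusion $B' \subseteq A''$ is in question, equivalently $B' \subseteq A''$, equivalently every $g \in B'$ has all attributes of $A' = A'''$… the cleanest route is: $B' \subseteq A''$ iff $A' \subseteq (B')'$ iff every attribute common to $A$ is shared by all of $B'$. Since $A' = (A' \setminus B) \cup B$ (a disjoint union, as $B \subseteq A'$) and every $g \in B'$ already has every attribute in $B$, this holds iff every $g \in B'$ has every attribute in $A' \setminus B$, i.\,e.\ iff $B' \times (A' \setminus B) \subseteq I$. Splitting $B' = (B' \setminus A) \cup A$ and noting $A \times A' \subseteq I$ (since $A' = A^I$), this is equivalent to $(B' \setminus A) \times (A' \setminus B) \subseteq I$, which is the stated condition. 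I would write this chain of equivalences carefully, keeping track of which sets live in $\K$ and which in $\mathbb S$.

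For the second sentence, I would simply observe that in cases 1, 2, 3 one of the two ``extra'' sets is empty: in case 1 both $A' = B$ (so $A' \setminus B = \emptyset$) and $B' = A$; in case 2, $A' = A^J = B$ forces $A' \setminus B = \emptyset$; in case 3, $B' = B^J = A$ forces $B' \setminus A = \emptyset$. In each of these situations the rectangle $(A' \setminus B) \times (B' \setminus A)$ is empty, hence trivially contained in $I$, so by the first part $\phi_1(A,B) = \phi_2(A,B)$; if this holds for every concept of $\mathbb S$ then $\phi_1(\mathbb S) = \phi_2(\mathbb S)$ as suborders.

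The main obstacle I anticipate is purely bookkeeping: being scrupulous about the distinction between the derivation operators of $\K$ (written $'$, i.\,e.\ $\cdot^I$) and those of $\mathbb S$ (written $\cdot^J$), since $A$ and $B$ are a concept of $\mathbb S$ but the target concepts live in $\K$. The inclusions $B \subseteq A'$ and $A \subseteq B'$ are the bridge between the two, and one must be careful that ``$(A,B)$ is a concept of $\mathbb S$'' is used exactly where needed and nowhere else. There is no deep idea here beyond the disjoint decompositions $A' = B \,\dot\cup\, (A' \setminus B)$ and $B' = A \,\dot\cup\, (B' \setminus A)$ together with the trivial rectangles $A \times A' \subseteq I$ and $B' \times B \subseteq I$; the work is in assembling them into a clean iff.
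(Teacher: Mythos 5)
Your proposal is correct and takes essentially the same route as the paper's proof: you reduce $\phi_1(A,B)=\phi_2(A,B)$ to the incidence condition $B'\times A'\subseteq I$ (the paper does this via $(A'',A')=(B',B'')=(B',A')$, you via the extent equality $A''=B'$) and then strip off the trivially incident rectangles $A\times B$, $A\times A'$, $B'\times B$ to isolate $(B'\setminus A)\times(A'\setminus B)$, which is exactly the paper's four-rectangle decomposition carried out in two steps. Your treatment of cases 1--3 via emptiness of the residual rectangle is a slightly more explicit version of the paper's ``by construction'' remark and is fine.
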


\begin{proof}
	For a concept $(A,B)\in \mathfrak{B}(\mathbb{S})$ the identity of both embeddings leads to
	$\phi_1(A,B)=\phi_2(A,B) \Leftrightarrow (A'',A')=(B',B'')=(B',A') \Leftrightarrow (B'\times A')\subseteq I$. 
	This set can be written as
	$B'\times A'=  A\times B~ \cup~ (B'\setminus A) \times B~ \cup~ A  \times  (A'\setminus B) ~ \cup~ (B'\setminus A) \times (A'\setminus B)$. 
	We know 
	$A\times B \subseteq  I$ since $(A,B)\in \mathfrak{B}(\mathbb{S})$ 
	and 
	$A\times A'\subseteq  I$ and $B'\times B \subseteq  I$ by definition of the $\cdot'$ operator. 
	The remaining part equals $(A'\setminus B)\times (B'\setminus A)$. 
	In cases 1 to 3 $(A'',A')=(B',B'')$ holds by construction. 
\end{proof}

\begin{proposition}
	Let $\mathbb{K}=(G,M,I)$ be a formal context and $\mathbb{S}=[H,N]\in \mathcal{SB}_k(\mathbb{K})$. 
	If $H=G$ or $N=M$, then $\phi_1(\mathbb{S})=\phi_2(\mathbb{S})$ holds.
\end{proposition}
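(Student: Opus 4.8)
The plan is to obtain this proposition as an essentially immediate corollary of the lemma that precedes it, which guarantees $\phi_1(\mathbb{S})=\phi_2(\mathbb{S})$ as soon as case~1, 2 or~3 holds for every concept $(A,B)\in\underline{\mathfrak{B}}(\mathbb{S})$. So the whole argument reduces to checking that the hypothesis $H=G$ (or $N=M$) forces every concept of $\mathbb{S}$ into one of those three cases; notably, the assumption that $\mathbb{S}$ be \emph{Boolean} is not needed for this.

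First I would treat the case $H=G$. Write $J=I\cap(G\times N)$ and let $(A,B)\in\underline{\mathfrak{B}}(\mathbb{S})$, so that $A\subseteq G$, $B\subseteq N$, $A^{J}=B$ and $B^{J}=A$. Since $\mathbb{S}$ keeps the full object set, the derivation $B^{J}$ computed inside $\mathbb{S}$ imposes no restriction beyond membership in $G$, hence $B^{J}=B'$, the derivation taken in $\mathbb{K}$. Therefore $B'=B^{J}=A$, which is exactly the defining relation shared by cases~1 and~3 in the list set up before the previous lemma; in particular cases~2 and~4 are excluded. Thus case~1 or case~3 holds for every concept of $\mathbb{S}$, and the previous lemma yields $\phi_1(\mathbb{S})=\phi_2(\mathbb{S})$.

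The case $N=M$ I would handle dually: now $A^{J}$ computed in $\mathbb{S}$ coincides with $A'$ because the full attribute set is retained, so $A'=A^{J}=B$ for every concept $(A,B)$ of $\mathbb{S}$, which places it into case~1 or case~2; the previous lemma again gives the claim. Alternatively one may invoke the object--attribute symmetry of FCA: passing to the transposed context swaps $\phi_1$ with $\phi_2$ and swaps the condition $H=G$ with $N=M$, so the second case follows formally from the first.

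I do not expect a genuine obstacle here; the statement is really a bookkeeping consequence of the case analysis already in place. The one point that warrants an explicit sentence is the identity $B^{J}=B'$ (respectively $A^{J}=A'$): one must spell out that, because the subcontext retains the entire object set (respectively attribute set), the restricted derivation and the ambient derivation agree on subsets of the retained side, and it is precisely this that rules out cases~2 and~4 (respectively cases~3 and~4).
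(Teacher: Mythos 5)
Your argument is correct: with $H=G$ the restricted and ambient derivations agree on attribute sets, forcing $B'=B^J=A$ and hence case~1 or~3 for every concept (dually for $N=M$), so the preceding lemma applies. This is exactly the route the paper intends — it states the proposition without proof as an immediate corollary of that case-analysis lemma — and your write-up simply supplies the omitted bookkeeping, including the correct observation that Booleanness of $\mathbb{S}$ is not needed.
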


However, the relationship between the images of both mappings $\phi_1$ and $\phi_2$ of a specific concept is always (not only in the Boolean case) the same, namely:
\begin{proposition}
	\label{prop:le}
	Let $\K$ be a formal context and $\mathbb{S}\le\mathbb{K}$. Then $\phi_1(A,B)\leq \phi_2(A,B)$ for all $(A,B)\in \BB(\mathbb{S})$.
\end{proposition}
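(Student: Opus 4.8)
The plan is to reduce the claim to a single set inclusion and then invoke the fact that $(\cdot)''$ is a closure operator on $\mathcal{P}(M)$ (and on $\mathcal{P}(G)$). Recall that for $(A,B)\in\BB(\mathbb{S})$ we have $\phi_1(A,B)=(A'',A')$ and $\phi_2(A,B)=(B',B'')$, with all derivations taken in $\K=(G,M,I)$, and that these are genuine concepts of $\K$ (this is precisely the content of the maps of \cite[Proposition 32]{fca-book} being well defined, so the comparison in $\underline{\BB}(\K)$ makes sense). Since the order on $\underline{\BB}(\K)$ satisfies $(A_1,B_1)\le(A_2,B_2)$ iff $A_1\subseteq A_2$, equivalently iff $B_1\supseteq B_2$, it suffices to prove $A''\subseteq B'$, or dually $B''\subseteq A'$.

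First I would extract the one piece of structural information coming from $(A,B)$ being a concept of the \emph{subcontext} $\mathbb{S}=(H,N,J)$, where $J=I\cap(H\times N)$: from $A^{J}=B$ (and $B^{J}=A$) we get that every pair in $A\times B$ lies in $J\subseteq I$, hence $A\times B\subseteq I$. Reading this off both sides, and noting $A\subseteq H\subseteq G$, $B\subseteq N\subseteq M$, yields $B\subseteq A'$ and $A\subseteq B'$, where the primes are now those of $\K$.

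Then I would apply monotonicity and idempotency of the closure operator. From $A\subseteq B'$, applying $(\cdot)''$ gives $A''\subseteq(B')''=B'$ (using the standard identities $(B')''=B'''=B'$); equivalently, from $B\subseteq A'$ one gets $B''\subseteq(A')''=A'$. Either inclusion says exactly that the extent $A''$ of $\phi_1(A,B)$ is contained in the extent $B'$ of $\phi_2(A,B)$ — equivalently that the intent $A'$ of $\phi_1(A,B)$ contains the intent $B''$ of $\phi_2(A,B)$ — which by definition of the order means $\phi_1(A,B)\le\phi_2(A,B)$.

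There is essentially no obstacle here; the proof uses neither finiteness nor the Boolean hypothesis. The only point requiring care is keeping the direction of the order straight (a larger intent corresponds to a smaller concept), so one should state clearly which of $A''\subseteq B'$ and $B''\subseteq A'$ is being used and that the two are equivalent for concepts of $\K$.
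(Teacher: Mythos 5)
Your proof is correct. The paper in fact states this proposition without an explicit proof, and your argument --- from $(A,B)\in\mathfrak{B}(\mathbb{S})$ extract $A\times B\subseteq J\subseteq I$, hence $A\subseteq B'$ (equivalently $B\subseteq A'$), then use monotonicity and idempotency of $(\cdot)''$ to get $A''\subseteq B'$, i.e.\ $(A'',A')\le(B',B'')$ --- is precisely the standard justification the paper implicitly relies on (it is the same observation $A\times B\subseteq I$, $A\times A'\subseteq I$, $B'\times B\subseteq I$ used in the lemma directly preceding the proposition), and your remark that neither finiteness nor the Boolean hypothesis is needed matches the paper's formulation for arbitrary subcontexts.
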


In particular, an interval containing exactly the concepts $(C,D)\in \underline{\mathfrak{B}}(\mathbb{K})$ with $A\subseteq C$ and $B\subseteq D$ exists between $\phi_1(A,B)$ and $\phi_2(A,B)$ with $\phi_1(A,B)$ as its bottom element and $\phi_2(A,B)$ as its top element. In the extreme case, this interval can comprise all of $\underline{\BB}(\K)$, as the following example shows.

\begin{example}
	Let $\mathbb{K}$ be the formal context in~\cref{expl:chain} and $\mathbb{S}=[\{1,2\},\{a,b\}]\le \K$. 
	For the concept $(A,B)=(\{1,2\},\{a,b\})$ of $\mathbb{S}$, $\phi_1(A,B)=(\{1,2\},\{a,b,c,d\})$ and $\phi_2(A,B)=(\{1,2,3,4\},\{a,b\})$ hold. These are the bottom and the top element of the whole concept lattice of $\K$.
\end{example}

\begin{figure}[t]
	\begin{minipage}{0.43\textwidth}
		\centering
		\begin{minipage}{0.3\textwidth}
			\raggedright
			\begin{cxt}% 
				\att{a}% 
				\att{b}%
				\att{c}%
				\att{d}%
				\obj{xxxx}{1} %
				\obj{xxxx}{2} %
				\obj{xxx.}{3} %
				\obj{xx..}{4} %
			\end{cxt}
		\end{minipage}
		\begin{minipage}{0.6\textwidth}
			\raggedleft
			{\unitlength 0.6mm
				\begin{diagram}{20}{45}
					\Node{1}{15}{10}
					\Node{2}{15}{25}
					\Node{3}{15}{40}
					\Edge{1}{2}
					\Edge{2}{3}
					%\Numbers
					\leftObjbox{1}{3}{1}{1,2}
					\NoDots\leftObjbox{1}{11}{-3}{$\phi_1(A,B)=$}
					\leftObjbox{2}{3}{1}{3}
					\leftObjbox{3}{3}{1}{4}
					\NoDots\leftObjbox{3}{11}{-3}{$\phi_2(A,B)=$}
					\rightAttbox{1}{3}{1}{d}
					\rightAttbox{2}{3}{1}{c}
					\rightAttbox{3}{3}{1}{a,b}
			\end{diagram}}
		\end{minipage}
		\label{expl:chain}
		\caption{An example of a formal context $\K$ and its subcontext $\mathbb{S}=[\{1,2\},\{a,b\}]=[A,B]$ with $[\phi_1(A,B),\phi_2(A,B)]=\underline{\BB}(\K)$.}
	\end{minipage}
	\hspace{.08\linewidth}
	\begin{minipage}{0.46\textwidth}
		\centering
		\begin{minipage}{0.35\textwidth}
			\centering
			\begin{cxt}% 
				\att{a}% 
				\att{b}% 
				\att{c}% 
				\att{d}% 
				\att{e}%  
				\obj{xx.x.}{1} %
				\obj{x.xx.}{2} %
				\obj{.xx..}{3} %
				\obj{xx.xx}{4} %
				\obj{...x.}{5} %
			\end{cxt}
		\end{minipage}
		\begin{minipage}{0.6\textwidth}
			\raggedleft
			{\unitlength 0.6mm
				
	\begin{picture}(40,85)%
	\put(0,0){%			
				\begin{diagram}{40}{85}
					\Node{1}{20}{5}
					\Node{2}{5}{20}
					\Node{3}{5}{35}
					\Node{4}{20}{35}
					\Node{5}{35}{35}
					\Node{6}{5}{50}
					\Node{7}{5}{65}
					\Node{8}{20}{50}
					\Node{9}{35}{50}
					\Node{10}{20}{80}
					\Edge{1}{2}
					\Edge{1}{4}
					\Edge{1}{5}
					\Edge{3}{2}
					\Edge{3}{6}
					\Edge{3}{8}
					\Edge{4}{6}
					\Edge{4}{9}
					\Edge{5}{9}
					\Edge{5}{8}
					\Edge{7}{6}
					\Edge{10}{8}
					\Edge{10}{9}
					\Edge{10}{7}
					%				\Numbers
					\leftObjbox{1}{3}{1}{1,2}
					\leftObjbox{2}{3}{1}{4}
					\leftObjbox{3}{3}{1}{1}
					\leftObjbox{4}{3}{1}{2}
					\leftObjbox{5}{3}{1}{3}
					\leftObjbox{7}{3}{1}{5}
					\leftAttbox{7}{3}{1}{d}
					\leftAttbox{2}{3}{1}{e}
					\rightAttbox{9}{3}{1}{c}
					\rightAttbox{6}{3}{1}{a}
					\rightAttbox{8}{3}{1}{b}
			\end{diagram}}

		\put(5,65){\ColorNode{red}}
		\put(20,80){\ColorNode{red}}
		\put(20,5){\ColorNode{red}}
		\put(20,50){\ColorNode{red}}
		\put(35,50){\ColorNode{red}}		
		\put(35,35){\ColorNode{red}}
		\put(20,35){\ColorNode{red}}
		\put(5,20){\ColorNode{red}}
	\end{picture}}
		
		\end{minipage}
		\label{fig:Anzahl}
		\caption{Example of a formal context $\K$ with $|\mathcal{SRB}_3(\K)|=|\mathcal{SOB}_3(\underline{\BB}(\K))|=4$.}
	\end{minipage}
\end{figure}

This raises the question whether there is a concept lattice where a Boolean suborder exists that can not be obtained by embedding. This is indeed the case also in \cref{runexp}; see, e.g., the Boolean order marked with filled red circles.

An approach to make any Boolean suborder of a (concept) lattice reachable is to expand $\mathbb{K}$ by additional objects and attributes so that every formal concept $C\in \underline{\mathfrak{B}}(\mathbb{K})$ can be generated by one object and by one attribute. 
For a (concept) lattice $\underline{L}$ this is the case with the context $\K=(L,L,\le)$. Here $\underline{S}\in\mathcal{SOB}_k(\underline{L})$ is the image of both $\phi_1(\mathbb{S})$ and $\phi_2(\mathbb{S})$ for the Boolean subcontext $\mathbb{S}=(S,S,\le)$.

Since we are interested in the connections between the existence of Boolean subcontexts on the one hand and the existence of Boolean suborders on the other hand, we observe a first relationship between these sets.

\begin{lemma}
	\label{lem:existens_booleanorder}
	Let $\mathbb{K}$ be a formal context, $\mathcal{SB}_k(\mathbb{K})\not=\emptyset$. Then $\mathcal{SOB}_k(\underline{\mathfrak{B}}(\mathbb{K}))\not=\emptyset$.
\end{lemma}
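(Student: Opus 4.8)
The plan is to derive this directly from the embedding machinery already set up in the paper. Suppose $\mathcal{SB}_k(\mathbb{K}) \neq \emptyset$, so there is a Boolean subcontext $\mathbb{S} = [H,N] \le \mathbb{K}$ with $\underline{\mathfrak{B}}(\mathbb{S}) \cong \mathfrak{B}(k)$. I would first pass to a reduced Boolean subcontext: since $\underline{\mathfrak{B}}(\mathbb{S}) \cong \mathfrak{B}(k)$, the standard context of this lattice is isomorphic to $\mathbb{N}^c(k)$, and the reduction of $\mathbb{S}$ is obtained by deleting reducible rows and columns, which is again a subcontext of $\mathbb{K}$. Hence without loss of generality $\mathbb{S} \in \mathcal{SRB}_k(\mathbb{K})$, i.e.\ $\mathbb{S} \cong \mathbb{N}^c(k)$.

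Now apply the lifted embedding $\phi_1 : \mathcal{S}(\mathbb{K}) \rightarrow \mathcal{SO}(\underline{\mathfrak{B}}(\mathbb{K}))$. Since the pointwise map $\phi_1 : \underline{\mathfrak{B}}(\mathbb{S}) \to \underline{\mathfrak{B}}(\mathbb{K})$ is an order embedding (this is the property of $\phi_1$ recalled in \cref{sec:FCA}), its image $\phi_1(\underline{\mathfrak{B}}(\mathbb{S}))$, equipped with the induced order, is order-isomorphic to $\underline{\mathfrak{B}}(\mathbb{S}) \cong \mathfrak{B}(k)$. Therefore $\phi_1(\mathbb{S})$ is a suborder of $\underline{\mathfrak{B}}(\mathbb{K})$ isomorphic to $\mathfrak{B}(k)$, which by definition means $\phi_1(\mathbb{S}) \in \mathcal{SOB}_k(\underline{\mathfrak{B}}(\mathbb{K}))$, and in particular this set is nonempty.

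The reduction step to get from $\mathbb{S} \in \mathcal{SB}_k$ to $\mathbb{S} \in \mathcal{SRB}_k$ is strictly speaking not even needed for this statement — an order embedding of $\underline{\mathfrak{B}}(\mathbb{S})$ already suffices — but it is the natural place to be careful, since one must check that deleting reducible objects/attributes of $\mathbb{S}$ keeps us inside $\mathcal{S}(\mathbb{K})$ and does not change the concept lattice up to isomorphism; both facts are recorded in the recap section. The only genuine point to verify is that $\phi_1(\mathbb{S})$, as defined via the lifted map, really is the image of the order embedding and hence inherits the Boolean structure; this is immediate from the definition $\phi_1(\mathbb{S}) = (\{\phi_1(C) \mid C \in \mathfrak{B}(\mathbb{S})\}, \le)$ together with injectivity of the order embedding $\phi_1$. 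I do not anticipate a real obstacle here — the lemma is essentially a direct corollary of "$\phi_1$ is an order embedding," which the paper has already established. One could equally well use $\phi_2$.
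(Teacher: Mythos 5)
Your proof is correct and follows essentially the same route as the paper: both arguments simply apply the order embedding $\phi_1$ to $\underline{\mathfrak{B}}(\mathbb{S})\cong\mathfrak{B}(k)$ and observe that its image is a Boolean suborder of dimension $k$ in $\underline{\mathfrak{B}}(\mathbb{K})$. The preliminary reduction to $\mathcal{SRB}_k(\mathbb{K})$, which you yourself note is unnecessary, is indeed not present in (or needed for) the paper's argument.
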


\begin{proof}
	Let $\mathbb{S}\in \mathcal{SB}_k(\mathbb{K})$. 
	By definition $\underline{\mathfrak{B}}(\mathbb{S})\cong\BB(k)$.
	Since $\phi_1:\underline{\mathfrak{B}}(\mathbb{S})\mapsto\underline{\mathfrak{B}}(\mathbb{K})$ is an order embedding  $\phi_1(\underline{\mathfrak{B}}(\mathbb{S}))$ is a Boolean suborder of dimension $k$ in $\underline{\mathfrak{B}}(\mathbb{K})$.
\end{proof}

In general the images of $\phi_1(\mathbb{S})$ and $\phi_2(\mathbb{S})$ are neither lattices nor semilattices.
However, we know from~\cref{lem:subsemilattice} that if $\mathbb{S}$ is a reduced Boolean subcontext and $\phi_1(\mathfrak{B}(\mathbb{S}))=\phi_2(\mathfrak{B}(\mathbb{S}))$ holds, there exists a Boolean sublattice $\underline{S}$ of the same dimension in $\underline{\BB}(\K)$. 
We can generalize the previous statement as follows:

\begin{lemma}
	\label{lem:anzahl_boolean}
	Let $\mathbb{K}$ be a clarified formal context and $\mathbb{S}_1,\mathbb{S}_2\in \mathcal{SRB}_k(\K)$ with  $\mathbb{S}_1=[H_1,N_1],\mathbb{S}_2=[H_2,N_2]$ and $\mathbb{S}_1\not =\mathbb{S}_2$. 
	If $H_1\not =H_2$, then $\phi_1(\mathbb{S}_1)\not =\phi_1(\mathbb{S}_2)$ holds. 
	If $N_1\not =N_2$, then $\phi_2(\mathbb{S}_1)\not =\phi_2(\mathbb{S}_2)$ holds.
\end{lemma}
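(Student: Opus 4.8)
The plan is to prove the contrapositive of each implication. For the first statement, suppose $\phi_1(\mathbb{S}_1) = \phi_1(\mathbb{S}_2)$ as suborders of $\underline{\mathfrak{B}}(\mathbb{K})$; I want to conclude $H_1 = H_2$. The key observation is that $\phi_1$ maps the top concept of $\underline{\mathfrak{B}}(\mathbb{S}_i)$ to a concept whose extent records $H_i$: concretely, for $\mathbb{S}_i = [H_i, N_i]$ with $J_i = I \cap (H_i \times N_i)$, the top concept of $\underline{\mathfrak{B}}(\mathbb{S}_i)$ is $(H_i, H_i^{J_i})$, and $\phi_1(H_i, H_i^{J_i}) = (H_i^{I I}, H_i^{I})$ (computing the operators in $\mathbb{K}$). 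So if I can recover $H_i$ from the top element of the suborder $\phi_1(\mathbb{S}_i)$, I am done. Since the suborders agree, their top elements agree, hence $(H_1^{II}, H_1^I) = (H_2^{II}, H_2^I)$.

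First I would argue that $H_i = H_i^{II}$, i.e. that $H_i$ is already an extent of $\mathbb{K}$. This is exactly where \cref{lem:subsemilattice} enters: since $\mathbb{S}_i$ is a \emph{reduced} Boolean subcontext, $E = E^{J_i J_i}$ for every $E \subseteq H_i$, so in particular every subset of $H_i$ — including $H_i$ itself and all its coatoms, the sets $H_i \setminus \{h\}$ — is an extent of $\mathbb{S}_i$. The proof of \cref{lem:subsemilattice} shows that $\phi_1$ sends each extent $A$ of $\mathbb{S}_i$ to the concept $(A^{II}, A^I)$ and that these images form a sub-$\vee$-semilattice; the $k$ coatoms of $\underline{\mathfrak{B}}(\mathbb{S}_i)$ are the concepts with extents $H_i \setminus \{h\}$, $h \in H_i$, and they join to the top. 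Thus $\bigcup$ of the extents of the images of these coatoms equals $H_i^{II}$, and since each $H_i \setminus \{h\} \subseteq (H_i \setminus \{h\})^{II} \subseteq H_i^{II}$, taking the union over $h$ gives $H_i \subseteq H_i^{II}$ — but also the top image has extent $H_i^{II}$, so it suffices to see $H_i^{II} = H_i$; alternatively, and more cleanly, one shows directly that the extents appearing in $\phi_1(\mathbb{S}_i)$ are closed under the join in $\underline{\mathfrak{B}}(\mathbb{K})$, which is union-then-double-prime, and the minimal such extent containing all singletons $\{h\}^{II}$ for $h \in H_i$ recovers $H_i$ precisely when $\mathbb{K}$ is clarified, because clarification guarantees distinct objects have distinct extents so the atoms of $\phi_1(\mathbb{S}_i)$ are in bijection with $H_i$. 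Collecting these atoms and reading off their object-generators gives $H_i$; equality of the suborders therefore forces $H_1 = H_2$.

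The cleanest route is probably this: in $\phi_1(\mathbb{S}_i) \cong \BB(k)$ the $k$ atoms are the images of the $k$ atoms of $\underline{\mathfrak{B}}(\mathbb{S}_i)$, which are the concepts $(\{h\}^{J_i J_i}, \{h\}^{J_i}) = (\{h\}, \{h\}^{J_i})$ for $h \in H_i$ (using that $\mathbb{S}_i$ is reduced, so singletons are extents). Their $\phi_1$-images are $(\{h\}^{II}, \{h\}^I)$. Since $\mathbb{K}$ is clarified, the map $h \mapsto \{h\}^{II}$ is injective on $G$, so the $k$ atom-extents of $\phi_1(\mathbb{S}_i)$ are exactly $\{\{h\}^{II} : h \in H_i\}$, and $H_i$ is recovered as $\{g \in G : \{g\}^{II} \text{ is an atom-extent of } \phi_1(\mathbb{S}_i)\}$. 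Equality of suborders gives equality of atom-sets, hence $H_1 = H_2$. The argument for the second claim is dual: replace extents by intents, objects by attributes, $\phi_1$ by $\phi_2$, atoms by coatoms, and use that $\phi_2(A,B) = (B^I, B^{II})$ together with clarification of the attribute side.

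I expect the main obstacle to be pinning down exactly where clarification is needed and not over-claiming: without clarification, two distinct objects $h \neq h'$ in $\mathbb{K} \setminus \mathbb{S}_i$ with the same row could blur the correspondence between atoms of $\phi_1(\mathbb{S}_i)$ and elements of $H_i$, or — more to the point — an object outside $H_i$ could have the same extent-closure as one inside, so I must be careful to use "$\mathbb{K}$ clarified" precisely to get the injectivity of $g \mapsto \{g\}^{II}$ that makes "read off the generators" well-defined. A secondary subtlety is confirming that the atoms of the suborder $\phi_1(\mathbb{S}_i)$ really are the $\phi_1$-images of the atoms of $\underline{\mathfrak{B}}(\mathbb{S}_i)$ — this follows because $\phi_1$ is an order embedding (\cref{sec:FCA}) and $\underline{\mathfrak{B}}(\mathbb{S}_i) \cong \BB(k)$ has a unique bottom, so order-isomorphism onto the image preserves the atom structure. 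Everything else is routine bookkeeping with the derivation operators.
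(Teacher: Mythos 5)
Your final (atom-based) argument is correct and is essentially the paper's own proof: both identify the atoms of $\phi_1(\mathbb{S}_i)$ with the images $(h'',h')$ of the object concepts of the $h\in H_i$ (dually, coatoms and $(n',n'')$ for $\phi_2$) and use that $\K$ is clarified to get injectivity, so distinct object (attribute) sets yield distinct images. Your formulation via equality of the whole atom/coatom sets of the two suborders is even slightly tighter than the paper's comparison of the two single concepts $\phi_1(C_1)\not=\phi_1(C_2)$, and the top-element detour in your first paragraph, which you rightly abandon because $H_i=H_i^{II}$ need not hold, does no harm.
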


\begin{proof}
	Since $\mathbb{S}_1,\mathbb{S}_2\in \mathcal{SRB}_k(\K)$, $|H_1|=|H_2|$ holds. If $H_1\not =H_2$ holds, $ g_1\in H_1$ with $g_1\not \in H_2$ and $g_2\in H_2$ with $g_2\not \in H_1$ exist. Since $\mathbb{S}_1$ and $\mathbb{S}_2$ are reduced and Boolean there is a concept $C_1=(g_1,g_1'') \in \BB(\mathbb{S}_1)$ and a concept $C_2=(g_2,g_2'')\in\mathfrak{B}(\mathbb{S}_2)$. Hence $\mathbb{K}$ is clarified, $\phi_1(C_1)=(g_1'',g_1')\not =(g_2'',g_2')=\phi_1(C_2)$.
	If $N_1\not =N_2$ holds, the analogous procedure can be executed using $\phi_2$.
\end{proof}

Based on this statement, we can assume that the total number of reduced Boolean subcontexts of a formal context $\K$ is a lower bound of the total number of Boolean suborders of $\underline{\BB}(\K)$:

\begin{conjecture}
	Let $\mathbb{K}$ be a clarified formal context with $|\mathcal{SRB}_k(\K)|=n$. Then $|\mathcal{SOB}_k(\underline{\BB}(\K))|\ge n$ holds.
\end{conjecture}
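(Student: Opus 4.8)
The conjecture asks for a lower bound $|\mathcal{SOB}_k(\underline{\BB}(\K))| \ge n$ where $n = |\mathcal{SRB}_k(\K)|$. The natural strategy is to exhibit an injection from $\mathcal{SRB}_k(\K)$ into $\mathcal{SOB}_k(\underline{\BB}(\K))$. The obvious candidate is $\phi_1$ (or symmetrically $\phi_2$): by \cref{lem:subsemilattice} the image $\phi_1(\underline{\BB}(\mathbb{S}))$ of a reduced Boolean subcontext $\mathbb{S}$ is a Boolean suborder of dimension $k$, so $\phi_1$ maps $\mathcal{SRB}_k(\K)$ into $\mathcal{SOB}_k(\underline{\BB}(\K))$, and it remains to show this map is injective. \cref{lem:anzahl_boolean} already gives injectivity \emph{restricted to those pairs $\mathbb{S}_1 = [H_1,N_1], \mathbb{S}_2 = [H_2,N_2]$ with $H_1 \ne H_2$}; the remaining case is $H_1 = H_2$ but $N_1 \ne N_2$.

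\textbf{Key steps.} First I would fix the setup: $\K = (G,M,I)$ clarified, $\mathbb{S}_1, \mathbb{S}_2 \in \mathcal{SRB}_k(\K)$ distinct with common object set $H := H_1 = H_2$ but $N_1 \ne N_2$. I want to produce a concept of $\mathbb{S}_1$ whose $\phi_1$-image is not a concept in $\phi_1(\underline{\BB}(\mathbb{S}_2))$, or vice versa. The point is that $\phi_1$ on object-generated concepts essentially only sees the extent closures, so I should look instead at \emph{attributes}: pick $m \in N_1 \setminus N_2$. In $\mathbb{S}_1 \cong \N^c(k)$, the attribute $m$ is meet-irreducible, and $m^{J_1} = H \setminus \{g_m\}$ for a unique object $g_m \in H$; the corresponding concept $C_1 := (H \setminus \{g_m\}, (H\setminus\{g_m\})')$ is a coatom of $\underline{\BB}(\mathbb{S}_1)$. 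Its image is $\phi_1(C_1) = ((H\setminus\{g_m\})'', (H\setminus\{g_m\})')$. Since every object subset of $H$ arises as an extent in $\mathbb{S}_2$ as well (both are reduced Boolean of the same dimension $k$ over the same $H$, hence have the \emph{same} extent system $\mathcal{P}(H)$), the set $H \setminus \{g_m\}$ is also an extent of $\mathbb{S}_2$, giving a concept $C_2 \in \underline{\BB}(\mathbb{S}_2)$ with the same extent, and therefore $\phi_1(C_1) = \phi_1(C_2)$. So $\phi_1$ alone does \emph{not} separate $\mathbb{S}_1$ and $\mathbb{S}_2$ in this case — the honest conclusion is that $\phi_1$ by itself is insufficient, which is why the statement is only a \emph{conjecture} and why \cref{lem:anzahl_boolean} is carefully hypothesis-guarded.

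\textbf{The real approach, and the main obstacle.} Since $\phi_1$ fails on the $H_1 = H_2, N_1 \ne N_2$ case and $\phi_2$ symmetrically fails on $N_1 = N_2, H_1 \ne H_2$, the plan must be a \emph{case split combined with a global counting argument}: partition $\mathcal{SRB}_k(\K)$ by object set; within each block all members share $H$ so they differ in $N$, and one hopes $\phi_2$ is injective there (the dual of \cref{lem:anzahl_boolean} handles across-block collisions for $\phi_2$ but \emph{within} a block $H$-equality is exactly what the lemma does not cover). The genuinely hard part is that neither $\phi_1$ nor $\phi_2$ is globally injective on $\mathcal{SRB}_k(\K)$, so one must instead build a single injection by cleverly choosing, for each reduced Boolean subcontext, a canonical Boolean suborder — e.g. refining $\phi_i(\underline{\BB}(\mathbb{S}))$ to the sublattice it generates, or tagging it with enough of the underlying $[H,N]$ data — and then verifying that distinct subcontexts (even those with both the same extent system and overlapping attribute behaviour, which can happen when some attributes of $\K$ are clarifiable/reducible relative to the subcontext) receive distinct suborders. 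I expect the obstacle to be precisely these degenerate overlaps: two different reduced Boolean subcontexts $[H, N_1]$ and $[H, N_2]$ can have $\phi_1$-images and $\phi_2$-images that both coincide once closures in $\K$ collapse the distinction, and ruling this out (or constructing a third, non-embedding Boolean suborder to compensate) is where a full proof would need a new idea beyond the lemmas assembled here.
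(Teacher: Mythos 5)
The statement you were asked to prove is left as a \emph{conjecture} in the paper: the authors give no proof, only the remark that it cannot be obtained as directly as \cref{lem:anzahl_boolean}, because $\phi_1(\mathbb{S})$ and $\phi_2(\mathbb{S})$ can coincide for some $\mathbb{S}\in\mathcal{SRB}_k(\K)$ and because not every Boolean suborder of $\underline{\BB}(\K)$ is an image of $\phi_1$ or $\phi_2$. So there is no paper proof to compare against, and your proposal is not a proof either -- by your own admission it stops at identifying the obstruction. Judged strictly as a proof of the statement it is therefore incomplete, but that incompleteness reflects the open status of the statement rather than an error in your reasoning.

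Your diagnosis of why the obvious injection fails is accurate, and in one respect sharper than the paper's own remark: since every subset of the object set of a reduced Boolean subcontext $[H,N]$ is an extent, $\phi_1([H,N])$ depends only on $H$, so $\phi_1$ collapses all elements of $\mathcal{SRB}_k(\K)$ sharing the same object set (e.g.\ $[\{4,5,6\},\{b,c,d\}]$ and $[\{4,5,6\},\{b,c,e\}]$ in \cref{runexp}), dually for $\phi_2$; this is exactly why \cref{lem:anzahl_boolean} carries the hypotheses $H_1\ne H_2$ resp.\ $N_1\ne N_2$. The paper's stated obstacles are the coincidence $\phi_1(\mathbb{S})=\phi_2(\mathbb{S})$ for some subcontexts and the existence of Boolean suborders not reachable by either embedding (illustrated by the context of Figure 4, where the counts merely happen to agree); your observation about collisions within a fixed-$H$ (or fixed-$N$) block is a complementary and correct obstruction. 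Your suggested repairs -- choosing a canonical suborder per subcontext, or compensating with non-embedding Boolean suborders -- are precisely where a genuinely new idea would be required; neither you nor the paper supplies it. In short: no gap relative to the paper, but also no proof, and your honest framing of that fact is the right conclusion.
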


This conjecture can not be proved as straight forward as~\cref{lem:anzahl_boolean} since $\phi_1$ and $\phi_2$ can be identical for some $\mathbb{S}\in\mathcal{SRB}_k(\K)$. In addition not every Boolean suborder is the image of $\phi_1(\mathbb{S})$ or $\phi_2(\mathbb{S})$ for a $\mathbb{S}\in\mathcal{SRB}_k(\K)$. Both phenomena occur in the example given in~Figure 4, where the marked Boolean suborder is not the image of the embedding by $\phi_1$ or $\phi_2$ of any Boolean subcontext contained in the given formal context, although in this case the number of Boolean subcontexts of dimension $3$ and Boolean suborders of dimension $3$ is identical.

\subsection{Subconcepts associated to Suborders}

After investigating mappings of Boolean subcontexts to Boolean suborders, we now analyze the connection between those substructures the other way around. As presented by 
Albano and Chornomaz~\cite[Prop. 1]{albano2015} every formal context $\K$ contains a Boolean subcontext $\mathbb{S}\in\mathcal{SB}_k(\K)$ if $\BB(\K)$ contains a Boolean suborder $\underline{S}\in \mathcal{SOB}_k(\BB(\K))$.
Based on this statement, we introduce a construction to generate a (not necessarily reduced) Boolean subcontext of a formal context based on a Boolean suborder of the corresponding concept lattice.

\begin{definition}
	\label{def:associated_context}
	Let $\K$ be a formal context and $\underline{S}\in \mathcal{SOB}_k(\underline{\BB}(\K))$.
	We call $\psi(\underline{S}):=[H,N]$ with 
	$H\coloneqq\bigcup_{C\in At(\underline{S})} minG_{obj}(C) $ and 
	$N\coloneqq\bigcup_{C\in CoAt(\underline{S})} minG_{att}(C)$
	the \emph{subcontext of $\K$ associated to $\underline{S}$}.
\end{definition}

Indeed the structure arising from the construction given in~\cref{def:associated_context} is a Boolean subcontext of the same dimension as $\underline{S}$:

\begin{lemma}
	Let $\K$ be a formal context, $\underline{S}\in \mathcal{SOB}_k(\underline{\BB}(\K))$ and $\mathbb{S}=[H,N]\coloneqq\psi(\underline{S})$ the subcontext of $\K$ associated to $\underline{S}$. Then $\mathbb{S}\in \mathcal{SB}_k(\K)$.
\end{lemma}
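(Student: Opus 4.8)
The plan is to show that $\psi(\underline{S}) = [H,N]$, with $H = \bigcup_{C \in At(\underline{S})} minG_{obj}(C)$ and $N = \bigcup_{C \in CoAt(\underline{S})} minG_{att}(C)$, has a concept lattice isomorphic to $\BB(k)$. The guiding intuition is that a Boolean suborder of dimension $k$ inside $\underline{\BB}(\K)$ is determined by its $k$ atoms and $k$ coatoms: each atom $C_i$ of $\underline{S}$ is covered by exactly one coatom (equivalently, fails to lie below exactly one coatom $D_i$), and this incidence $C_i \not\le D_i$ is precisely the ``$\neq$'' pattern of a contranominal scale. First I would pick, for each atom $C_i = (A_i, B_i)$, a minimal object generator $g_i \in minG_{obj}(C_i)$ (so $g_i'' = A_i$), and for each coatom $D_j = (E_j, F_j)$, a minimal attribute generator $m_j \in minG_{att}(D_j)$ (so $m_j'' = F_j$, i.e. $m_j' = E_j$). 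It suffices to analyze the subcontext $[\{g_1,\dots,g_k\}, \{m_1,\dots,m_k\}]$: once that is shown to be $\cong \N^c(k)$, adding the remaining (possibly several) generators only adds clarifiable rows/columns, leaving the concept lattice unchanged up to isomorphism, hence $\mathbb{S}$ is Boolean of dimension $k$ and a genuine subcontext of $\K$ by construction.

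The key step is the incidence computation $g_i I m_j \iff i \neq j$. For this I would use the standard fact about the $k$ atoms and $k$ coatoms of a Boolean lattice $\BB(k)$: writing the atoms as $C_1,\dots,C_k$ and coatoms as $D_1,\dots,D_k$ suitably indexed, one has $C_i \le D_j$ in $\BB(k)$ iff $i \neq j$. Since $\underline{S}$ is a suborder of $\underline{\BB}(\K)$ (same order), the same holds for the atoms and coatoms of $\underline{S}$ viewed as concepts of $\K$: $C_i \le D_j$ in $\underline{\BB}(\K)$ iff $i \neq j$. Now $g_i I m_j$ holds iff $g_i \in m_j' = E_j$ iff $g_i'' \subseteq E_j$ (as $E_j$ is an extent) iff $A_i \subseteq E_j$ iff $C_i \le D_j$; the middle equivalence uses that $g_i'' = A_i$ because $g_i$ is a minimal object generator of $C_i$. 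Combining these, $g_i I m_j \iff i \neq j$, so the relation restricted to $\{g_i\} \times \{m_j\}$ is exactly the contranominal incidence, giving $[\{g_i\},\{m_j\}] \cong \N^c(k)$.

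The remaining verification is that throwing in the rest of $H$ and $N$ does not change the concept lattice. Here I would argue that any other chosen generator $g \in minG_{obj}(C_i)$ satisfies $g'' = A_i = g_i''$, hence $g^J = g_i^J$ in any subcontext whose attribute set is contained in the relevant extents — more carefully, $g' \cap N = g_i' \cap N$ because both equal $\{m_j : A_i \subseteq E_j\}$ — so $g$ is clarifiable on top of $g_i$; dually for attributes. Since adding clarifiable objects and attributes does not alter the concept lattice up to isomorphism (as recalled in \cref{sec:FCA}), we get $\underline{\BB}(\mathbb{S}) \cong \underline{\BB}([\{g_i\},\{m_j\}]) = \BB(k)$, and therefore $\mathbb{S} = \psi(\underline{S}) \in \mathcal{SB}_k(\K)$.

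The main obstacle I anticipate is making the atom/coatom incidence argument fully rigorous: one must be careful that the atoms and coatoms of the \emph{suborder} $\underline{S}$ — not of $\underline{\BB}(\K)$ — are what enter the construction, and that the order-isomorphism $\underline{S} \cong \BB(k)$ transports the ``$C_i \le D_j \iff i \neq j$'' pattern correctly, including the matching of indices between atoms and coatoms. A second, more technical point is confirming that the clarification argument really does leave $H, N$ as stated subsets of $G, M$ (so that $\mathbb{S}$ is a bona fide subcontext of $\K$, with $J = I \cap (H \times N)$), rather than requiring any modification of the incidence; but this is immediate from \cref{def:associated_context} since $\psi(\underline{S})$ is defined as $[H,N]$, i.e. with the induced incidence.
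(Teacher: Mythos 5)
Your strategy mirrors the paper's own proof (index the $k$ atoms and $k$ coatoms of $\underline{S}$ complementarily, compute the incidence between generator objects and generator attributes, then clarify away duplicates), but your key step has a genuine gap: you treat the elements of $minG_{obj}(C_i)$ and $minG_{att}(D_j)$ as single objects and attributes satisfying $g_i''=A_i$ and $m_j'=E_j$. Under the paper's definition a minimal object generator is a \emph{set} $O\subseteq A_i$ with $O''=A_i$, and an atom of the suborder $\underline{S}$ need not be an object concept of $\K$, so there may be no single $g$ with $g''=A_i$ at all; $H$ and $N$ collect the members of all (possibly non-singleton) minimal generators. Your pivotal equivalence $g_i I m_j \iff g_i''\subseteq E_j \iff A_i\subseteq E_j$ then breaks in the direction you need for the blanks: for $g$ in a non-singleton generator of $A_i$ one only has $g''\subsetneq A_i$, and such a $g$ can perfectly well lie in the extent of the complementary coatom, producing a cross where the contranominal pattern requires a blank. (The crosses for $i\ne j$ are fine, since they only use $g\in A_i\subseteq E_j$.) The closing claim that the remaining elements of $H$ and $N$ are merely clarifiable rests on the same identity $g''=A_i$ and fails with it.

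This is not a cosmetic matter. Take $\K=\N^c(3)$ with objects $1,2,3$ and attributes $a_1,a_2,a_3$ (so $iIa_j$ iff $i\ne j$), and let $\underline{S}$ consist of the four concepts with extents $\emptyset$, $\{1,2\}$, $\{3\}$, $\{1,2,3\}$: a Boolean suborder (even sublattice) of dimension $2$. The atom/coatom $(\{1,2\},\{a_3\})$ has unique minimal object generator $\{1,2\}$ and attribute generator $\{a_3\}$, while $(\{3\},\{a_1,a_2\})$ has generators $\{3\}$ and $\{a_1,a_2\}$; hence $\psi(\underline{S})=[\{1,2,3\},\{a_1,a_2,a_3\}]=\K$, whose concept lattice is $\BB(3)$, not $\BB(2)$. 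Concretely, object $1$ belongs to a generator of the first atom yet is incident to $a_2$, an element of the attribute generator of the complementary coatom, so the claimed $k\times k$ pattern is violated and the extra rows are not clarifiable. The incidence pattern you rely on (and which the paper's proof also asserts with its unexplained ``$g\not I m$ else'') is only guaranteed when every atom of $\underline{S}$ is an object concept and every coatom an attribute concept; your argument would need that as an explicit additional hypothesis, and without it the proof — and indeed the dimension claim under the paper's literal definition of $\psi$ — does not go through.
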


\begin{proof}
	Let $At(\underline{S})=\{A_1,A_2,...,A_k\}$ and $CoAt(\underline{S})=\{C_1,C_2,...,C_k\}$.  Due to the Boolean structure of $\underline{S}$ the atoms can be ordered holding the following condition: $A_i$ is a lower bound for the set $CoAt(\underline{S})\setminus C_i$ for all $1\le i\le k$ and analogous $C_i$ is an upper bound for the set $At(\underline{S})\setminus A_i$ for all $1\le i\le k$. It follows $gIm$ for all $g\in min_{obj}G(A_i)$, $m\in N\setminus minG_{att}(C_i)$ and $g\not I m$ else. 
	So $\mathbb{S}\cong\mathbb{N}^c(k)$.
\end{proof}

In the following, we study the interplay of the mapping $\psi$ from suborders to subcontexts with the mappings $\phi_1$ and $\phi_2$ from subcontexts to suborders.

\begin{lemma}
	Let $\K$ be a formal context and $\mathbb{S}=[H,N]\in\mathcal{SRB}_k(\K)$. 
	Then $\mathbb{S}=\psi(\phi_1(\mathbb{S}))$ iff for all $n\in N$ $(n',n'')\in CoAt(\phi_1(\mathbb{S}))$ holds and	$\mathbb{S}=\psi(\phi_2(\mathbb{S}))$ holds iff for all $h\in H$ $(h'',h')\in At(\phi_2(\mathbb{S}))$ holds.
\end{lemma}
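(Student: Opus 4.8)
The plan is to prove both equivalences by direct computation; they are order‑dual, so I would carry out the $\phi_1$/coatom/$N$ case in detail and obtain the $\phi_2$/atom/$H$ case by swapping objects with attributes, atoms with coatoms, and $\phi_1$ with $\phi_2$. First I normalise $\mathbb{S}$: since $\mathbb{S}\in\mathcal{SRB}_k(\K)$ we have $\mathbb{S}\cong\N^c(k)$, so write $H=\{h_1,\dots,h_k\}$, $N=\{n_1,\dots,n_k\}$ with $(h_i,n_l)\in I\iff i\neq l$. Every subset of $H$ is an extent of $\mathbb{S}$, hence the atoms of $\underline{\BB}(\mathbb{S})$ are $a_j=(\{h_j\},N\setminus\{n_j\})$ and the coatoms $c_j=(H\setminus\{h_j\},\{n_j\})$. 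As $\phi_1$ is an order embedding it is an order isomorphism onto $\phi_1(\mathbb{S})$, so $At(\phi_1(\mathbb{S}))=\{\phi_1(a_1),\dots,\phi_1(a_k)\}$ and $CoAt(\phi_1(\mathbb{S}))=\{\phi_1(c_1),\dots,\phi_1(c_k)\}$, where $\phi_1(a_j)=(h_j'',h_j')$ and $\phi_1(c_j)=((H\setminus\{h_j\})'',(H\setminus\{h_j\})')$ (all derivations in $\K$). By \cref{def:associated_context}, the object set of $\psi(\phi_1(\mathbb{S}))$ is the set of objects occurring in some minimal object generator of some $\phi_1(a_j)$, and its attribute set is the analogous set for the $\phi_1(c_j)$; since a subcontext is determined by its object and attribute sets, $\mathbb{S}=\psi(\phi_1(\mathbb{S}))$ is equivalent to these two sets being $H$ and $N$.

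For the implication $\mathbb{S}=\psi(\phi_1(\mathbb{S}))\Rightarrow(\ast)$, where $(\ast)$ is ``$(n',n'')\in CoAt(\phi_1(\mathbb{S}))$ for all $n\in N$'', the key point is that the intent of $\phi_1(c_j)$ meets $N$ in exactly $\{n_j\}$: $(H\setminus\{h_j\})'\cap N=(H\setminus\{h_j\})^J=\bigcap_{l\neq j}(N\setminus\{n_l\})=\{n_j\}$ by $J=I\cap(H\times N)$; moreover $n_j$ lies in the intent of $\phi_1(c_i)$ only when $i=j$. So if the attribute set of $\psi(\phi_1(\mathbb{S}))$ is $N$, each $n_j$ occurs in some minimal attribute generator of $\phi_1(c_j)$; that generator is contained in $N$ and in the intent of $\phi_1(c_j)$, hence in $\{n_j\}$, hence equals $\{n_j\}$. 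Therefore $\{n_j\}$ generates $\phi_1(c_j)$, i.e. $\phi_1(c_j)=(\{n_j\}',\{n_j\}'')=(n_j',n_j'')$, which is $(\ast)$. Conversely, if $(n_j',n_j'')\in CoAt(\phi_1(\mathbb{S}))$ then, since distinct coatoms of $\phi_1(\mathbb{S})$ are incomparable and $\phi_1(c_j)\le\phi_2(c_j)=(n_j',n_j'')$ by \cref{prop:le}, necessarily $(n_j',n_j'')=\phi_1(c_j)$, so once more $\{n_j\}$ generates $\phi_1(c_j)$.

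It remains to establish $(\ast)\Rightarrow\mathbb{S}=\psi(\phi_1(\mathbb{S}))$, which I expect to be the hard part. Under $(\ast)$ we have $\phi_1(c_j)=(n_j',n_j'')$ with $\{n_j\}$ generating it; on the object side $\{h_j\}$ always generates $\phi_1(a_j)=(h_j'',h_j')$, and minimally so, since its only proper subset $\emptyset$ has $\emptyset''=M'\not\ni h_j$ (as $(h_j,n_j)\notin I$), so $\emptyset''\neq\{h_j\}''$. Thus $H$ and $N$ are always contained in the object and attribute sets of $\psi(\phi_1(\mathbb{S}))$, and the real obstacle is the reverse inclusions, i.e. ruling out spurious minimal generators: an object set $O\not\subseteq H$ with $O'=h_j'$, or a singleton $\{g\}$ with $g\neq h_j$ and $g'=h_j'$, and dually for attributes. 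Here one needs that $\K$ is reduced (as already required for the clarified case in \cref{lem:anzahl_boolean}): a generator of $\phi_1(a_j)$ not containing $h_j$ would witness that the object $h_j$ is reducible, and a singleton $\{g\}$ with $g'=h_j'$, $g\neq h_j$, would contradict clarifiedness; hence $minG_{obj}(\phi_1(a_j))=\{\{h_j\}\}$ and, dually, $minG_{att}(\phi_1(c_j))=\{\{n_j\}\}$, so the object and attribute sets of $\psi(\phi_1(\mathbb{S}))$ are exactly $H$ and $N$. Assembling the three implications yields the $\phi_1$ statement, and the $\phi_2$ statement follows by the order‑dual argument, with $\phi_1(a_j)\le\phi_2(a_j)$ and incomparability of atoms taking over the roles of \cref{prop:le} and incomparability of coatoms.
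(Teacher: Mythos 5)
Your proof follows essentially the same route as the paper's: identify the atoms and coatoms of $\phi_1(\mathbb{S})$ as the $\phi_1$-images of the atoms and coatoms of $\underline{\BB}(\mathbb{S})$, reduce the equality $\mathbb{S}=\psi(\phi_1(\mathbb{S}))$ to the two set equalities $\widetilde{H}=H$ and $\widetilde{N}=N$ for $\psi(\phi_1(\mathbb{S}))=[\widetilde{H},\widetilde{N}]$, and translate $\widetilde{N}=N$ into the condition that every $(n',n'')$ with $n\in N$ is a coatom of $\phi_1(\mathbb{S})$. Your forward direction (intersecting intents of the coatoms with $N$) and your use of \cref{prop:le} together with the incomparability of coatoms are correct and in fact more detailed than the paper's own argument.

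The one genuine divergence is in the direction $(\ast)\Rightarrow\mathbb{S}=\psi(\phi_1(\mathbb{S}))$: to rule out spurious minimal generators you assume that $\K$ itself is reduced, which is not a hypothesis of the lemma (only $\mathbb{S}$ is assumed reduced). Read literally, your argument therefore proves a weaker statement. However, your caution points at a defect of the lemma and of the paper's proof rather than at a flaw of your own: without a clarifiedness/reducedness assumption on $\K$ the claimed equivalence fails. Take $\K$ with objects $1,1',2$ and attributes $a,b$, where $1$ and $1'$ have exactly $b$ and $2$ has exactly $a$, and let $\mathbb{S}=[\{1,2\},\{a,b\}]\cong\N^c(2)$; then $(a',a'')=(\{2\},\{a\})$ and $(b',b'')=(\{1,1'\},\{b\})$ are exactly the coatoms of $\phi_1(\mathbb{S})$, so the condition holds, yet $\psi(\phi_1(\mathbb{S}))=[\{1,1',2\},\{a,b\}]\neq\mathbb{S}$ because $\{1'\}$ is also a minimal object generator of the atom $(\{1,1'\},\{b\})$. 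The paper's proof hides exactly this issue when it asserts ``$\widetilde{H}=H$'' merely from the fact that each $h_i$ is a minimal object generator (which only yields $H\subseteq\widetilde{H}$), and the same spurious-generator problem affects the claim that the renumbering of coatoms forces $\widetilde{N}=N$. So under your added hypothesis your proof is complete and matches the paper's intended argument; as a proof of the statement verbatim it leaves the same hole the paper does, and that hole cannot be closed, because the statement is false without an extra assumption on $\K$.
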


\begin{proof}
	Consider $\phi_1$: Let $\psi(\phi_1(\mathbb{S}))=[\widetilde{H},\widetilde{N}]$, $H=\{h_1,h_2,...,h_k\}$ and $N=\{n_1,n_2,...,n_k\}$. 
	Due to the construction of $\phi_1$ $At(\phi_1(\mathbb{S}))=\{A_1,A_2,...,A_k\}$ with $A_i=(h_i'',h_i')$. Since every $h_i$ is a minimal object generator of an atom of $\phi_1(\mathbb{S})$ $\widetilde{H}=H$ holds.
	Let $CoAt(\phi_1(\mathbb{S}))=\{C_1,C_2,...,C_k\}$.	
	$\widetilde{N}$ consists of the minimal attribute generators of the coatoms of $\phi_1(\mathbb{S})$. Following, $\widetilde{N}=N$ if and only if a renumbering of the coatoms exists so that $C_i=(n_i',n_i'')$ for all $i\in\{1,2,...,k\}$.
	The procedure for $\phi_2$ is analogous.
\end{proof}

\begin{example}
	Let $\K$ be the formal context in Figure 4 and $\mathbb{S}_1=[\{1,2,3\},\{a,b,c\}]$, $\mathbb{S}_2=[\{2,3,4\},\{a,b,c\}]$, $\mathbb{S}_3=[\{1,2,3\},\{b,c,d\}]$ and $\mathbb{S}_4=[\{2,3,4\},\{b,c,d\}])$ its reduced Boolean subcontexts of dimension 3.
	Then $\mathbb{S}_1=\psi(\phi_1(\mathbb{S}_1))=\psi(\phi_2(\mathbb{S}_1))$, 
	$\mathbb{S}_2=\psi(\phi_2(\mathbb{S}_2))$ and
	$\mathbb{S}_3=\psi(\phi_1(\mathbb{S}_3))$ hold.
\end{example}

\begin{lemma}
	\label{lem:6.11}
	Let $\mathbb{K}$ be a formal context, $\underline{S}\in \mathcal{SOB}_k(\underline{\BB}(\K))$,  $\mathbb{S}\coloneqq\psi(\underline{S})$. Let $C\in \underline{S}\setminus \{0_{\underline{S}},1_{\underline{S}}\}$ with either $C$ not being the supremum (in $\underline{\BB}(\K)$) of a subset of $At(\underline{S})$ or $C$ not being the infimum (in $\underline{\BB}(\K)$) of a subset of $CoAt(\underline{S})$.
	Then $(A,B)$ with $A=\bigcup\{minG_{obj}(X)\mid X\in At(\underline{S}), X\le C \}$ and $B=\bigcup\{minG_{att}(X)\mid X\in CoAt(\underline{S}), X\ge C \}$ is a concept of $\mathbb{S}$ with $\phi_1(A,B)\not= \phi_2(A,B)$.
\end{lemma}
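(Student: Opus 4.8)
The plan is to work with the explicit description of $\mathbb{S} = \psi(\underline{S}) = [H,N]$, where $H = \bigcup_{X \in At(\underline{S})} minG_{obj}(X)$ and $N = \bigcup_{X \in CoAt(\underline{S})} minG_{att}(X)$. I would first fix notation: write $At(\underline{S}) = \{A_1,\dots,A_k\}$ and $CoAt(\underline{S}) = \{C_1,\dots,C_k\}$ indexed compatibly with the Boolean structure (as in the proof that $\mathbb{S} \cong \mathbb{N}^c(k)$), so that the atom $A_i$ is a lower bound for $CoAt(\underline{S}) \setminus \{C_i\}$ and the coatom $C_i$ is an upper bound for $At(\underline{S}) \setminus \{A_i\}$. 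Then $C$ corresponds to a subset $P \subseteq \{1,\dots,k\}$ via $\{i : A_i \le C\} = P$, equivalently $\{i : C_i \ge C\} = \{1,\dots,k\}\setminus P =: Q$, and the sets in the statement become $A = \bigcup_{i \in P} minG_{obj}(A_i)$ and $B = \bigcup_{i \in Q} minG_{att}(C_i)$. Because $C \notin \{0_{\underline{S}}, 1_{\underline{S}}\}$, both $P$ and $Q$ are nonempty and proper, so $\emptyset \neq A \subsetneq H$ and $\emptyset \neq B \subsetneq N$.

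The first substantive step is to check that $(A,B)$ is a concept of $\mathbb{S}$. Using the contranominal structure established earlier — namely that for $g \in minG_{obj}(A_i)$ and $m \in minG_{att}(C_j)$ we have $gJm$ iff $i \neq j$ — a direct computation gives $A^J = \{m \in N : m \notin minG_{att}(C_i) \text{ for any } i \in P\}$. One must argue that this equals exactly $B = \bigcup_{i \in Q} minG_{att}(C_i)$, i.e. that the generator sets $minG_{att}(C_1), \dots, minG_{att}(C_k)$ are pairwise disjoint within $N$; this follows because in the reduced-Boolean-scale picture each $m \in N$ lies in the generator set of a unique coatom. Dually $B^J = A$, so $(A,B) \in \underline{\BB}(\mathbb{S})$.

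The key step is then to show $\phi_1(A,B) \neq \phi_2(A,B)$ in $\underline{\BB}(\K)$. I would compute $\phi_1(A,B) = (A^{II}, A^I)$ and $\phi_2(A,B) = (B^I, B^{II})$, both suprema/infima taken in $\underline{\BB}(\K)$. The point is that $A^{II}$ is the join in $\underline{\BB}(\K)$ of $\{A_i : i \in P\}$ and $B^{II}$ is the meet in $\underline{\BB}(\K)$ of $\{C_i : i \in Q\}$ — because $minG_{obj}(A_i)$ generates $A_i$ and joins/meets of extents are generated by unions. Thus $\phi_1(A,B) = \bigvee_{i \in P} A_i$ and $\phi_2(A,B) = \bigwedge_{i \in Q} C_i$, both computed in $\underline{\BB}(\K)$. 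Now invoke the hypothesis: $C$ is either not the supremum in $\underline{\BB}(\K)$ of a subset of $At(\underline{S})$, or not the infimum in $\underline{\BB}(\K)$ of a subset of $CoAt(\underline{S})$. In the first case $\bigvee_{i\in P} A_i \neq C$ — note that this join, taken in $\underline{\BB}(\K)$, still lies below $C$ since each $A_i \le C$, so it is strictly below $C$; meanwhile by Proposition~\ref{prop:le} we have $\phi_1(A,B) \le \phi_2(A,B)$ and $\phi_2(A,B) = \bigwedge_{i\in Q} C_i \ge C$ (each $C_i \ge C$). Hence $\phi_1(A,B) < C \le \phi_2(A,B)$, so they differ. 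The second case is dual, showing $\phi_2(A,B) > C \ge \phi_1(A,B)$.

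The main obstacle I anticipate is the bookkeeping in the case analysis: one has to be careful that "supremum/infimum" in the hypothesis is always taken in the ambient lattice $\underline{\BB}(\K)$ (not in $\underline{S}$, where $C$ trivially \emph{is} such a sup and inf), and to correctly pin down which of $\phi_1(A,B)$, $\phi_2(A,B)$ sandwiches $C$ from which side. The cleanest route is: always have $\phi_1(A,B) \le C \le \phi_2(A,B)$ (from $A_i \le C \le C_i$ plus Proposition~\ref{prop:le}), and then the failure of $C$ to be the relevant sup forces the left inequality strict, while failure to be the relevant inf forces the right inequality strict — either way $\phi_1(A,B) \neq \phi_2(A,B)$. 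A minor secondary point to verify carefully is the disjointness of the generator sets used above, which is what makes the extent/intent computations for $(A,B)$ come out exactly right.
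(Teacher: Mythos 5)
Your proposal is correct and follows essentially the same route as the paper's own proof: identify $\phi_1(A,B)$ with the join (in $\underline{\BB}(\K)$) of the atoms below $C$ and $\phi_2(A,B)$ with the meet of the coatoms above $C$, sandwich $C$ between them, and let the hypothesis force strictness on one side. You merely spell out two steps the paper leaves implicit — the verification via the contranominal structure that $(A,B)$ is indeed a concept of $\mathbb{S}$, and the disjointness of the generator sets — which is a welcome addition but not a different argument.
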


\begin{proof}
	According to the construction of $\mathbb{S}$ there is a concept $(A,B)\in \BB(\mathbb{S})$ as stated. If $C$ is not the supremum of a subset of $At(\underline{S})$, %there is no $X\subset At(\underline{S})$ with $\{minG(Y)\mid Y\in X\}$ generates $C$, 
	especially $A$ does not generate $C$. Therefore $\phi_1(A,B)=(A'',A')< C$, due to the construction of A. Also $\phi_2(A,B)=(B',B'')\ge C$ and consequently $\phi_1(A,B)< \phi_2(A,B)$. Similarly, if $C$ is not the infimum of a subset of $CoAt(\underline{S})$, $\phi_1(A,B)=(A'',A')\le C$, $\phi_2(A,B)=(B',B'')> C$ and $\phi_1(A,B)< \phi_2(A,B)$.
\end{proof}

\begin{lemma}
	\label{lem:5.11}
	Let $\K$ be a formal context, $\underline{S}\in \mathcal{SOB}(\underline{\BB}(\K))$.
	Then $\phi_1(\psi(\underline{S}))$ is a sub-$\vee$-semilattice and $\phi_2(\psi(\underline{S}))$ is a sub-$\wedge$-semilattice of $\underline{\BB}(\K)$.
\end{lemma}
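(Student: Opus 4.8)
The plan is to show that $\phi_1(\psi(\underline{S}))$ is closed under the join operation of $\underline{\BB}(\K)$; the argument for $\phi_2(\psi(\underline{S}))$ being closed under meet is dual. Write $\mathbb{S} = [H,N] := \psi(\underline{S})$, so by \cref{def:associated_context} we have $H = \bigcup_{C\in At(\underline{S})} minG_{obj}(C)$ and $N = \bigcup_{C\in CoAt(\underline{S})} minG_{att}(C)$. By the preceding lemma, $\mathbb{S}\cong\N^c(k)$ is a reduced Boolean subcontext, so every subset of $H$ is an extent of $\mathbb{S}$ and the concepts of $\mathbb{S}$ are exactly the pairs $(E, E^{J})$ for $E\subseteq H$, where $J = I\cap(H\times N)$. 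The images under $\phi_1$ are the concepts $(E'', E')$ of $\underline{\BB}(\K)$.

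**The core computation.** I would take two concepts $(E_1, E_1^{J})$ and $(E_2, E_2^{J})$ of $\mathbb{S}$ and compute, exactly as in the proof of \cref{lem:subsemilattice}, that
\[
\phi_1(E_1,E_1^{J}) \vee \phi_1(E_2,E_2^{J}) = (E_1'', E_1') \vee (E_2'', E_2') = ((E_1\cup E_2)'', (E_1\cup E_2)').
\]
Now the key point is that $E_1\cup E_2 \subseteq H$, and since $\mathbb{S}$ is a reduced Boolean subcontext, $(E_1\cup E_2, (E_1\cup E_2)^{J})$ is again a concept of $\mathbb{S}$. Hence its $\phi_1$-image $((E_1\cup E_2)'', (E_1\cup E_2)')$ lies in $\phi_1(\mathbb{S}) = \phi_1(\psi(\underline{S}))$. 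This shows $\phi_1(\psi(\underline{S}))$ is a sub-$\vee$-semilattice of $\underline{\BB}(\K)$. The argument for $\phi_2$ is entirely analogous, using that every subset of $N$ is an intent of $\mathbb{S}$ and that $\phi_2(B_1',B_1)\wedge\phi_2(B_2',B_2) = ((B_1\cup B_2)', (B_1\cup B_2)'')$.

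**What makes this work and where the subtlety lies.** This lemma is essentially \cref{lem:subsemilattice} applied to the reduced Boolean subcontext $\psi(\underline{S})$ rather than to an arbitrary reduced Boolean subcontext; the only thing that needs to be invoked is the previous lemma guaranteeing $\psi(\underline{S}) \in \mathcal{SRB}_k(\K)$, which supplies the hypothesis of \cref{lem:subsemilattice}. So the main (and really only) obstacle is purely bookkeeping: one must be careful that $\psi(\underline{S})$ is genuinely reduced — this is exactly what the lemma preceding it establishes, via the ordering of atoms and coatoms that makes the incidence pattern agree with $\N^c(k)$. Once that is in hand, the semilattice property is immediate, and no assumption on whether $\phi_1(\psi(\underline{S}))$ and $\phi_2(\psi(\underline{S}))$ coincide (i.e.\ form a full lattice) is needed. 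I would therefore keep the proof to one or two lines, citing \cref{lem:subsemilattice} and the reducedness of $\psi(\underline{S})$.
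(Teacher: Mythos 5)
Your route is genuinely different from the paper's, and it contains a gap at exactly the point you flag as the ``only obstacle''. The lemma preceding \cref{lem:5.11} states only that $\psi(\underline{S})\in\mathcal{SB}_k(\K)$; it does \emph{not} state $\psi(\underline{S})\in\mathcal{SRB}_k(\K)$, and the strengthening you rely on --- that $\psi(\underline{S})\cong\N^c(k)$, so that every subset of $H$ is an extent --- is false in general. By \cref{def:associated_context}, $H$ collects \emph{all} minimal object generators of \emph{all} atoms of $\underline{S}$; as soon as some atom has more than one minimal object generator, or a minimal generator with more than one element, one gets $|H|>k$, and then $\psi(\underline{S})$ cannot be isomorphic to $\N^c(k)$ (a reduced Boolean subcontext of dimension $k$ has exactly $k$ objects and $k$ attributes). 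Concretely, let $\K$ have objects $\{1,2,3\}$, attributes $\{a,b\}$ and incidence $1Ib$, $2Ib$, $3Ia$, and let $\underline{S}=\underline{\BB}(\K)\cong\BB(2)$. The atom $(\{1,2\},\{b\})$ has the two minimal object generators $\{1\}$ and $\{2\}$, so $\psi(\underline{S})=[\{1,2,3\},\{a,b\}]$, which is clarifiable and hence not reduced, and $\{1\}$ is not an extent of $\psi(\underline{S})$ since $\{1\}^{JJ}=\{1,2\}$. Reducedness is precisely the hypothesis that makes \cref{lem:subsemilattice} work --- the paper even notes right after that lemma that its conclusion fails for non-reduced Boolean subcontexts --- so the one-line reduction does not go through.

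The step that actually needs an argument is the one your citation was meant to absorb: for extents $E_1,E_2$ of $\psi(\underline{S})$ you must show that $((E_1\cup E_2)^{JJ})''=(E_1\cup E_2)''$, i.e.\ that the join $((E_1\cup E_2)'',(E_1\cup E_2)')$ is again the $\phi_1$-image of a concept of $\psi(\underline{S})$, without assuming that $E_1\cup E_2$ is itself an extent. The paper's proof does this differently: it never invokes reducedness of $\psi(\underline{S})$, but uses the specific provenance of $H$ --- it consists of the minimal object generators of the atoms of $\underline{S}$ --- to argue that the concepts of $\K$ arising as $\phi_1$-images of concepts of $\psi(\underline{S})$ are exactly suprema in $\underline{\BB}(\K)$ of subsets of $At(\underline{S})$, a family that is closed under joins by construction; the $\phi_2$/meet statement is dual. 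To repair your proof you must either supply an argument of this kind, or prove reducedness of $\psi(\underline{S})$ under additional hypotheses; as written, the appeal to \cref{lem:subsemilattice} rests on a claim that is not available and not true.
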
	

\begin{proof}
	Let $\mathbb{S}=[H,N]\coloneqq \psi(\underline{S})$. 
	$H$ is the set of all minimal generators of the atoms of $\underline{S}$. Due to the Boolean structure, all concepts in $\K$ that are generated by a subset of $H$ are exactly the supremum of a subset of $At(\mathbb{S})$. Since this generation corresponds to mapping the concepts $C\in \BB(\mathbb{S})$ with $\phi_1$, $\phi_1(\mathbb{S})$ is a sub-$\vee$-semilattice. The second part of the statement is proved similarly.
\end{proof}

\begin{definition}
	\label{def:associated}
	Let $\K$ be a formal context, $\underline{S}\in \mathcal{SOB}_k(\underline{\BB}(\K))$. We call $\phi_1(\psi(\underline{S}))$ the \emph{sub-$\vee$-sublattice of $\underline{\BB}(\K)$ associated to $\underline{S}$} and $\phi_2(\psi(\underline{S}))$ the \emph{sub-$\wedge$-sublattice of $\underline{\BB}(\K)$ associated to $\underline{S}$}.
\end{definition}

The statement in~\cref{lem:5.11} holds especcially for a $\underline{S}$ being a Boolean subsemilattice or a Boolean sublattice of $\underline{\BB}(\K)$ and provides $\phi_1(\psi(\underline{S}))=\underline{S}$ and $\phi_2(\psi(\underline{S}))=\underline{S}$, respectively, as follows.

\begin{lemma}
	\label{lem:6.14}
	Let $\K$ be a formal context and $\underline{S}\in \mathcal{SOB}_k(\underline{\BB}(\K))$. 
	If $\underline{S}$ is a sub-$\vee$-semilattice, $\phi_1(\psi(\underline{S}))=\underline{S}$. If $\underline{S}$ is a sub-$\wedge$-semilattice, $\phi_2(\psi(\underline{S}))=\underline{S}$. 
\end{lemma}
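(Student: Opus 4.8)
The plan is to reduce the claimed identity to a statement about the atoms of $\phi_1(\psi(\underline{S}))$ (respectively the coatoms, for $\phi_2$), and then exploit that $\BB(k)$ is atomistic together with the semilattice hypothesis on $\underline{S}$. For the $\phi_1$-part I would first collect three facts that are already available. First, by the lemma showing $\psi(\underline{S})\in\mathcal{SB}_k(\K)$ we have $\mathbb{S}\coloneqq\psi(\underline{S})\cong\N^c(k)$; in particular $\underline{\BB}(\mathbb{S})\cong\BB(k)$. Second, since $\phi_1\colon\underline{\BB}(\mathbb{S})\to\underline{\BB}(\K)$ is an order embedding, $\phi_1(\psi(\underline{S}))=\phi_1(\underline{\BB}(\mathbb{S}))$ is a suborder of $\underline{\BB}(\K)$ order-isomorphic to $\BB(k)$, so it has exactly $2^k$ elements and exactly $k$ atoms. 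Third, by \cref{lem:5.11} the set $\phi_1(\psi(\underline{S}))$ is a sub-$\vee$-semilattice of $\underline{\BB}(\K)$.

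The heart of the argument is to show $At(\phi_1(\psi(\underline{S})))=At(\underline{S})=\{A_1,\dots,A_k\}$. Because $\phi_1$ is an order isomorphism onto its image, it maps the atoms $a_1,\dots,a_k$ of $\underline{\BB}(\mathbb{S})$ bijectively onto the atoms of the image, so it suffices to prove $\phi_1(a_i)=A_i$ (with the $A_i$, $C_i$ ordered as in the proof of the $\psi$-lemma). Here I would use the explicit description of $\mathbb{S}$ from that proof: every object drawn from a minimal object generator of $A_i$ has the same row $N\setminus minG_{att}(C_i)$ in $\mathbb{S}$, and no object from another class shares this row, so the extent of $a_i$ in $\mathbb{S}$ is exactly $P_i\coloneqq\bigcup minG_{obj}(A_i)$. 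Since each minimal generator $O$ of $A_i$ satisfies $O^{II}=\mathrm{ext}(A_i)$ (closure in $\K$) while $P_i\subseteq\mathrm{ext}(A_i)$, we get $P_i^{II}=\mathrm{ext}(A_i)$, hence $\phi_1(a_i)=(P_i^{II},P_i^{I})=A_i$.

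To conclude: being a sub-$\vee$-semilattice of $\underline{\BB}(\K)$ that contains $A_1,\dots,A_k$, the set $\phi_1(\psi(\underline{S}))$ contains $\bigvee_{\underline{\BB}(\K)}\{A_i\mid i\in T\}$ for every $T\subseteq\{1,\dots,k\}$; since $\underline{S}$ is a sub-$\vee$-semilattice these ambient joins coincide with the joins taken inside $\underline{S}$, and since $\underline{S}\cong\BB(k)$ is atomistic these joins exhaust $\underline{S}$. Thus $\underline{S}\subseteq\phi_1(\psi(\underline{S}))$, and as both sides have $2^k$ elements they are equal. (The reverse inclusion can also be obtained directly: every element of $\underline{\BB}(\mathbb{S})$ is a join of some $a_i$'s, and in the sub-$\vee$-semilattice $\phi_1(\psi(\underline{S}))$ the image of such a join equals the corresponding ambient join of the $A_i$'s, which lies in $\underline{S}$.) The $\phi_2$-statement is proved dually: replace atoms by coatoms, $minG_{obj}$ by $minG_{att}$, invoke the sub-$\wedge$-semilattice part of \cref{lem:5.11}, and use that $\BB(k)$ is coatomistic and that $\underline{S}$ is a sub-$\wedge$-semilattice.

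I expect the main obstacle to be the bookkeeping in the middle paragraph: making fully rigorous that the atom $a_i$ of $\underline{\BB}(\mathbb{S})$ has extent exactly $\bigcup minG_{obj}(A_i)$ with no spurious objects from other classes $A_l$ (a contranominality/clarification point), and symmetrically that the $k$ atoms of $\underline{S}$ are hit by $k$ \emph{distinct} atoms of the image. Once this identification of atoms (resp.\ coatoms) is pinned down, the remaining steps are the soft atomisticity argument above; note that the hypothesis that $\underline{S}$ is a sub-$\vee$-semilattice is genuinely used, as \cref{lem:6.11} shows that without it the ambient joins of the $A_i$ can drop below $\underline{S}$.
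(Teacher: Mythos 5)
Your route is essentially the paper's own (identify the atoms of $\phi_1(\psi(\underline{S}))$ with $At(\underline{S})$, then close under joins and use that $\BB(k)$ is atomistic), and your middle step $\phi_1(a_i)=A_i$ via $P_i^{II}=\mathrm{ext}(A_i)$ is sound modulo the earlier lemma on $\psi$. The genuine gap is at the empty join, i.e.\ at the bottom element. Your sentence \enquote{since $\underline{S}$ is a sub-$\vee$-semilattice these ambient joins coincide with the joins taken inside $\underline{S}$} is only valid for nonempty $T$: for $T=\emptyset$ the ambient join is $0_{\underline{\BB}(\K)}$ while the join inside $\underline{S}$ is $0_{\underline{S}}$, and closure under binary joins does not make these equal. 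Concretely, $\phi_1(\psi(\underline{S}))$ always contains the image of the bottom concept of $\psi(\underline{S})$, whose extent is empty, hence it contains the global bottom $(\emptyset^{II},\emptyset^{I})=0_{\underline{\BB}(\K)}$, whereas $0_{\underline{S}}$ is exactly the one element of $\underline{S}$ that is not a nonempty join of the $A_i$ and so is not reached by your argument; neither your inclusion $\underline{S}\subseteq\phi_1(\psi(\underline{S}))$ nor the reverse one survives there, and the cardinality count cannot rescue it. Indeed the equality fails as stated: take $G=\{g_0,g_1,g_2\}$, $M=\{m_1,m_2,m_3\}$ with $g_0Im_1$, $g_0Im_2$, $g_1Im_1$, $g_2Im_2$ and $m_3$ held by no object. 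Then $\underline{\BB}(\K)$ is $\BB(2)$ with an extra bottom $(\emptyset,M)$ adjoined, and $\underline{S}$ consisting of the four upper concepts is a Boolean sublattice (in particular a sub-$\vee$-semilattice) of dimension $2$ with $0_{\underline{S}}=(\{g_0\},\{m_1,m_2\})$; one computes $\psi(\underline{S})=[\{g_1,g_2\},\{m_1,m_2\}]$ and $\phi_1(\psi(\underline{S}))=\{(\emptyset,M),(\{g_0,g_1\},\{m_1\}),(\{g_0,g_2\},\{m_2\}),(G,\emptyset)\}\neq\underline{S}$. Dually, the $\phi_2$ half breaks at the top element.

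To be fair, this is not a defect you introduced: the paper's own proof has the same blind spot, since its phrase \enquote{all concepts generated by a subset of $H$ are exactly the suprema of a subset of the atoms of $\underline{S}$} silently includes the empty subset, whose generated concept is $0_{\underline{\BB}(\K)}$ and not $0_{\underline{S}}$; your write-up is in fact more explicit about the atom bookkeeping than the published argument. But as a proof of the lemma as stated your proposal does not go through, and cannot, without an additional hypothesis such as $0_{\underline{\BB}(\K)}\in\underline{S}$ for the $\phi_1$ claim (dually $1_{\underline{\BB}(\K)}\in\underline{S}$ for $\phi_2$), or a weakening of the conclusion to the elements of $\underline{S}$ above $0_{\underline{S}}$ (respectively below $1_{\underline{S}}$); under such a hypothesis your nonempty-join argument together with the $2^k$ cardinality count does yield the equality.
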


\begin{proof}
	Let $\underline{S}$ be a sub-$\vee$-semilattice and $\mathbb{S}=[H,N]\coloneqq \psi(\underline{S})$. 
	$H$ is the set of minimal generators of the atoms of $\underline{S}$. Due to the Boolean structure all concepts in $\underline{\BB}(\K)$ that are generated by a subset of $H$ are exactly the supremums of a subset of the atoms of $\underline{S}$. Since this generation corresponds to mapping the concepts $C\in \BB(\mathbb{S})$ with $\phi_1$, every image of $\phi_1(C)$ is contained in $\underline{S}$. The second statement is proved similarly.
\end{proof}

\begin{proposition}
	Let $\K$ be a formal context and 
	$\underline{S}\in \mathcal{SLB}_k(\underline{\BB}(\K))$ a sublattice. 
	Then $\phi_1(\psi(\underline{S}))=\phi_2(\psi(\underline{S}))=\underline{S}$.
\end{proposition}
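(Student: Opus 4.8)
The plan is to obtain the statement as an immediate consequence of~\cref{lem:6.14}. By definition, $\underline{S}\in\mathcal{SLB}_k(\underline{\BB}(\K))$ means that $\underline{S}$ is simultaneously a sub-$\vee$-semilattice and a sub-$\wedge$-semilattice of $\underline{\BB}(\K)$; moreover $\mathcal{SLB}_k(\underline{\BB}(\K))\subseteq\mathcal{SOB}_k(\underline{\BB}(\K))$, so $\psi(\underline{S})$ is defined (and is in fact a Boolean subcontext of dimension $k$ of $\K$). Hence both halves of~\cref{lem:6.14} apply to $\underline{S}$.

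Concretely, first I would invoke the first half of~\cref{lem:6.14}, using that $\underline{S}$ is a sub-$\vee$-semilattice, to conclude $\phi_1(\psi(\underline{S}))=\underline{S}$. Then I would invoke the second half of~\cref{lem:6.14}, using that $\underline{S}$ is a sub-$\wedge$-semilattice, to conclude $\phi_2(\psi(\underline{S}))=\underline{S}$. Chaining the two equalities gives $\phi_1(\psi(\underline{S}))=\phi_2(\psi(\underline{S}))=\underline{S}$, as claimed.

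Since the proposition merely recombines~\cref{lem:6.14}, there is no genuine obstacle; the only point worth a remark is that the two invocations use the two semilattice-closure properties of $\underline{S}$ separately, so nothing circular happens. If one preferred a self-contained argument rather than quoting~\cref{lem:6.14}, the work is exactly the one carried out there: writing $\mathbb{S}=[H,N]\coloneqq\psi(\underline{S})$, every concept of $\mathbb{S}$ is, via $\phi_1$, the join of a subset of the atoms of $\underline{S}$, which stays in $\underline{S}$ by $\vee$-closure, giving $\phi_1(\psi(\underline{S}))\subseteq\underline{S}$; conversely, in the Boolean suborder $\underline{S}$ every element is the join of the atoms of $\underline{S}$ below it, and this join is computed identically in $\underline{S}$ and in $\underline{\BB}(\K)$ precisely because $\underline{S}$ is a sub-$\vee$-semilattice, so it is the $\phi_1$-image of the corresponding concept of $\mathbb{S}$, giving $\underline{S}\subseteq\phi_1(\psi(\underline{S}))$. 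The argument for $\phi_2$ is dual, replacing atoms, joins and $\vee$-closure by coatoms, meets and $\wedge$-closure.
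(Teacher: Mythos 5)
Your proposal is correct and matches the paper's intent exactly: the proposition is stated there as an immediate consequence of \cref{lem:6.14}, obtained by applying its two halves to a sublattice, which is simultaneously a sub-$\vee$-semilattice and a sub-$\wedge$-semilattice (and lies in $\mathcal{SOB}_k(\underline{\BB}(\K))$, so $\psi(\underline{S})$ is defined). Your optional self-contained argument simply re-runs the proof of \cref{lem:6.14}, so nothing differs in substance.
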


Our research can be concluded in the following theorems. They give an insight into the interplay of $\phi_1,\phi_2$ and $\psi$ and the structural properties they transfer.

\begin{theorem}
	Let $\K$ be a formal context and $\mathbb{S}\in \mathcal{SB}(\K)$. Then:
	\begin{itemize}
		\item[i)] $\psi(\phi_1(\mathbb{S}))=\mathbb{S}$ iff a sub-$\vee$-semilattice $\underline{S}\in \mathcal{SOB}(\underline{\BB}(\K))$ exists with $\psi(\underline{S})=\mathbb{S}$.
		\item[ii)] $\psi(\phi_2(\mathbb{S}))=\mathbb{S}$ iff a sub-$\wedge$-semilattice $\underline{S}\in \mathcal{SOB}(\underline{\BB}(\K))$ exists with $\psi(\underline{S})=\mathbb{S}$.
		\item[iii)] $\psi(\phi_1(\mathbb{S}))=\psi(\phi_2(\mathbb{S}))=\mathbb{S}$ iff a $\underline{S}\in \mathcal{SLB}(\underline{\BB}(\K))$ exists with $\psi(\underline{S})=\mathbb{S}$.
	\end{itemize}
Furthermore, if $\mathbb{S}$ is reduced,
$\phi_1(\mathbb{S})=\phi_1(\psi(\phi_1(\mathbb{S})))$ and $\phi_2(\mathbb{S})=\phi_2(\psi(\phi_2(\mathbb{S})))$.
\end{theorem}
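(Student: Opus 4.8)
The plan is to prove the three biconditionals in a uniform way, exploiting the machinery already assembled: \cref{lem:subsemilattice}, \cref{lem:5.11}, \cref{lem:6.14}, \cref{lem:6.11}, and the preceding lemma relating $\psi\circ\phi_i$ to the (co)atoms of $\phi_i(\mathbb S)$. For item i), the backward direction is immediate: if some sub-$\vee$-semilattice $\underline S$ satisfies $\psi(\underline S)=\mathbb S$, then by \cref{lem:6.14} we have $\phi_1(\psi(\underline S))=\underline S$, hence $\phi_1(\mathbb S)=\underline S$ is itself a sub-$\vee$-semilattice, and applying $\psi$ gives $\psi(\phi_1(\mathbb S))=\psi(\underline S)=\mathbb S$. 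For the forward direction, assume $\psi(\phi_1(\mathbb S))=\mathbb S$; the candidate witness is simply $\underline S:=\phi_1(\mathbb S)$. We must check two things: that $\phi_1(\mathbb S)$ is a sub-$\vee$-semilattice, and that it is Boolean (so that it actually lies in $\mathcal{SOB}(\underline{\BB}(\K))$, which is the stated codomain). Boolean-ness follows because $\phi_1$ is an order embedding of $\underline{\BB}(\mathbb S)\cong\BB(k)$; the sub-$\vee$-semilattice property is where the hypothesis $\psi(\phi_1(\mathbb S))=\mathbb S$ is used, via \cref{lem:5.11} applied to $\underline S=\phi_1(\mathbb S)$, giving that $\phi_1(\psi(\phi_1(\mathbb S)))=\phi_1(\mathbb S)$ is a sub-$\vee$-semilattice. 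Item ii) is the order-dual, swapping $\phi_1\leftrightarrow\phi_2$, atoms $\leftrightarrow$ coatoms, $\vee\leftrightarrow\wedge$, $minG_{obj}\leftrightarrow minG_{att}$. Item iii) is the conjunction of i) and ii): if $\psi(\phi_1(\mathbb S))=\psi(\phi_2(\mathbb S))=\mathbb S$, then $\phi_1(\mathbb S)$ is a sub-$\vee$-semilattice and $\phi_2(\mathbb S)$ is a sub-$\wedge$-semilattice, and one shows these coincide (this is the contrapositive content of \cref{lem:6.11}: if $\phi_1(\mathbb S)\neq\phi_2(\mathbb S)$, some interior concept $C$ witnesses a failure, producing a concept $(A,B)$ of $\mathbb S$ with $\phi_1(A,B)<\phi_2(A,B)$, which is incompatible with both $\psi\circ\phi_i$ reproducing $\mathbb S$), so $\underline S:=\phi_1(\mathbb S)=\phi_2(\mathbb S)$ is both a sub-$\vee$- and sub-$\wedge$-semilattice, hence a sublattice, and Boolean; conversely a $\underline S\in\mathcal{SLB}$ with $\psi(\underline S)=\mathbb S$ gives both $\phi_i(\psi(\underline S))=\underline S$ by the Proposition preceding this theorem, so $\psi(\phi_i(\mathbb S))=\psi(\underline S)=\mathbb S$ for $i=1,2$.

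For the final ``Furthermore'' clause, assume additionally that $\mathbb S$ is reduced, so $\mathbb S\cong\N^c(k)$ and \cref{lem:subsemilattice} applies: $\phi_1(\mathbb S)$ is a sub-$\vee$-semilattice of $\underline{\BB}(\K)$. I would argue $\phi_1(\mathbb S)=\phi_1(\psi(\phi_1(\mathbb S)))$ directly at the level of the generated suborders rather than assuming $\psi(\phi_1(\mathbb S))=\mathbb S$. The key observation is that $\psi(\phi_1(\mathbb S))=[\widetilde H,\widetilde N]$ where $\widetilde H=\bigcup_{C\in At(\phi_1(\mathbb S))}minG_{obj}(C)$; by the construction of $\phi_1$ from the reduced Boolean $\mathbb S=[H,N]$, the atoms of $\phi_1(\mathbb S)$ are exactly $(h^{II},h^I)$ for $h\in H$, and each single object $h$ is a minimal object generator of the corresponding atom (because $\mathbb S$ reduced Boolean means each $h$ generates a distinct atom-extent in $\mathbb S$, and in $\K$ it generates the atom $(h^{II},h^I)$). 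Hence $\widetilde H\supseteq H$, and more importantly every object in $\widetilde H$ generates (in $\K$) a concept that is an atom of $\phi_1(\mathbb S)$, so the subset of $\underline{\BB}(\K)$ obtained by applying $\phi_1$ to $[\widetilde H,\widetilde N]$ has exactly the same atoms as $\phi_1(\mathbb S)$, namely $At(\phi_1(\mathbb S))$. Since $\phi_1$ of a reduced Boolean-or-more-generally of the associated subcontext produces, by \cref{lem:5.11}, the sub-$\vee$-semilattice \emph{generated by these atoms} (all suprema of subsets of $At(\phi_1(\mathbb S))$ taken in $\underline{\BB}(\K)$), and this set is determined entirely by the atom set, we conclude $\phi_1(\psi(\phi_1(\mathbb S)))=\phi_1(\mathbb S)$. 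The argument for $\phi_2$ is order-dual, using coatoms and $minG_{att}$.

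\textbf{Main obstacle.} I expect the delicate point to be the ``Furthermore'' clause and, within the biconditionals, the precise handling of \emph{which} set $\psi(\phi_i(\mathbb S))$ actually equals when $\mathbb S$ is reducible. The clean statements \cref{lem:6.14} and \cref{lem:5.11} are phrased for a Boolean \emph{suborder} $\underline S\in\mathcal{SOB}(\underline{\BB}(\K))$ as input to $\psi$, so to invoke them for $\underline S=\phi_i(\mathbb S)$ one must first know $\phi_i(\mathbb S)$ is genuinely a Boolean suborder — true since $\phi_i$ is an order embedding and $\underline{\BB}(\mathbb S)\cong\BB(k)$ by $\mathbb S\in\mathcal{SB}(\K)$ — and one must be careful that $\psi$ is only defined on suborders, not arbitrary subsets, which is fine here. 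The real subtlety is that when $\mathbb S$ is merely Boolean (not reduced), a single object need not be a minimal object generator of an atom, so $\psi(\phi_1(\mathbb S))$ may be a \emph{proper} subcontext of $\mathbb S$ (indeed reduced), which is exactly why the reducedness hypothesis is needed for the last clause; I would make sure the forward direction of i) does not secretly require $\mathbb S$ reduced — it does not, because there we only need $\phi_1(\mathbb S)$ to be a sub-$\vee$-semilattice, which \cref{lem:5.11}/\cref{lem:6.14} give once $\psi(\phi_1(\mathbb S))=\mathbb S$ is assumed. The bookkeeping of minimal generators versus single generators across the reduced/reducible distinction is the one place where a careless step would break the proof.
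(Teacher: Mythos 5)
Your proposal follows essentially the same route as the paper: the forward directions use \cref{lem:5.11} with the witness $\underline{S}:=\phi_i(\mathbb{S})$ (Boolean-ness coming from the order embedding), the backward directions are \cref{lem:6.14}, item iii) is obtained by combining i) and ii), and the ``Furthermore'' clause rests on \cref{lem:subsemilattice} together with the mechanism of \cref{lem:6.14}, which your atom-level argument essentially re-derives by hand. If anything you supply more detail than the paper's one-line proof, e.g.\ the discussion of why $\phi_1(\mathbb{S})=\phi_2(\mathbb{S})$ in iii), which the paper elides.
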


 \begin{proof}
 	Consider i): $(\Rightarrow)$ follows directly from~\cref{lem:5.11} since $\mathbb{S}$ is the subcontext corresponding to the suborder $\phi_1(\mathbb{S})$. $(\Leftarrow)$ is presented in~\cref{lem:6.14}.
 	ii) is proved similarly and iii) follows from the combination of i) and ii).
 	The last statement follows from the combination of~\cref{lem:subsemilattice} and~\cref{lem:6.11}.
 \end{proof}

\begin{theorem}
	Let $\K$ be a formal context and $\underline{S}\in \mathcal{SOB}(\underline{\BB}(\K))$.
	\begin{itemize}
		\item[i)]Then $\phi_1(\psi(\underline{S}))=\underline{S}$ iff $\underline{S}$ is a sub-$\vee$-semilattice.
		\item[ii)]Then $\phi_2(\psi(\underline{S}))=\underline{S}$ iff $\underline{S}$ is a sub-$\wedge$-semilattice.
		\item[iii)]Then $\phi_1(\psi(\underline{S}))=\phi_2(\psi(\underline{S}))=\underline{S}$ iff $\underline{S}$ is a sublattice.
	\end{itemize}
\end{theorem}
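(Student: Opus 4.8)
The plan is to prove the three biconditionals by combining the results already established about $\psi$ and the embeddings $\phi_1,\phi_2$. For each direction $(\Rightarrow)$ I would argue that the equality $\phi_1(\psi(\underline S))=\underline S$ forces $\underline S$ to be a sub-$\vee$-semilattice, and for $(\Leftarrow)$ I would simply invoke~\cref{lem:6.14}. The key observation making $(\Rightarrow)$ work is~\cref{lem:5.11}: for an arbitrary Boolean suborder $\underline S$, the set $\phi_1(\psi(\underline S))$ is \emph{always} a sub-$\vee$-semilattice of $\underline{\BB}(\K)$. Hence if $\phi_1(\psi(\underline S))=\underline S$, then $\underline S$ itself must be a sub-$\vee$-semilattice, and $(\Rightarrow)$ of i) is immediate. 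Dually, $\phi_2(\psi(\underline S))$ is always a sub-$\wedge$-semilattice by~\cref{lem:5.11}, which gives $(\Rightarrow)$ of ii). For iii), the $(\Leftarrow)$ direction is exactly the Proposition immediately preceding the theorem (a sublattice $\underline S$ satisfies $\phi_1(\psi(\underline S))=\phi_2(\psi(\underline S))=\underline S$); the $(\Rightarrow)$ direction follows by noting that $\phi_1(\psi(\underline S))=\phi_2(\psi(\underline S))=\underline S$ means $\underline S$ is simultaneously a sub-$\vee$-semilattice (by i)) and a sub-$\wedge$-semilattice (by ii)), hence a sublattice by definition.

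Concretely, I would carry out the steps in this order. First, prove i): $(\Leftarrow)$ is~\cref{lem:6.14}; for $(\Rightarrow)$, apply~\cref{lem:5.11} to conclude $\phi_1(\psi(\underline S))$ is a sub-$\vee$-semilattice, and since this set equals $\underline S$ by hypothesis, $\underline S$ is a sub-$\vee$-semilattice. Second, prove ii) by the dual argument, replacing $\phi_1$ by $\phi_2$, atoms by coatoms, and $\vee$ by $\wedge$ throughout. Third, assemble iii): $(\Rightarrow)$ uses i) and ii) together, since the single hypothesis $\phi_1(\psi(\underline S))=\phi_2(\psi(\underline S))=\underline S$ supplies both $\phi_1(\psi(\underline S))=\underline S$ and $\phi_2(\psi(\underline S))=\underline S$; $(\Leftarrow)$ is the preceding Proposition.

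I do not expect a serious obstacle here, since all the substantive work has been done in~\cref{lem:5.11}, \cref{lem:6.14}, and the Proposition on sublattices; the theorem is essentially a bookkeeping combination of these. The only point requiring a little care is making sure the logical bookkeeping in iii) is airtight: one must observe that $\phi_1(\psi(\underline S))=\phi_2(\psi(\underline S))=\underline S$ is genuinely equivalent to the conjunction $\bigl(\phi_1(\psi(\underline S))=\underline S\bigr)\wedge\bigl(\phi_2(\psi(\underline S))=\underline S\bigr)$, which is trivial, and then that ``sub-$\vee$-semilattice and sub-$\wedge$-semilattice'' is exactly the definition of ``sublattice'' as given in~\cref{sec:FCA}. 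A secondary subtlety, worth a remark rather than a proof, is that nothing forces $\psi(\underline S)$ to be a \emph{reduced} Boolean subcontext even when $\underline S$ is a sublattice, so one should be careful to invoke~\cref{lem:6.14} and the preceding Proposition (which are stated for arbitrary Boolean suborders, via $\psi$) rather than~\cref{lem:subsemilattice} (which assumes reducedness).
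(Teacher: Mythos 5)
Your proposal is correct and follows essentially the same route as the paper: the forward directions of i) and ii) come from \cref{lem:5.11} (the image $\phi_i(\psi(\underline{S}))$ is always a sub-semilattice, so equality with $\underline{S}$ transfers that property), the converses are \cref{lem:6.14}, and iii) is the conjunction of i) and ii). Your added remarks on the logical bookkeeping in iii) and on not needing reducedness of $\psi(\underline{S})$ are sensible but do not change the argument.
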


\begin{proof}
	Consider i): $(\Rightarrow)$ follows directly from \cref{lem:5.11}. $(\Leftarrow)$ is presented in \cref{lem:6.14},
	ii) is proved similarly, iii) follows from combining i) and ii).
\end{proof}

Altough $\phi_1$ and $\psi$ (or $\phi_2$ and $\psi$) seem to be (dually) adjoint mappings, they are not. E.g., in \cref{fig:gegenbeispiel} consider the subcontexts $\mathbb{S}_1=[\{1,2,3,4\},\{a,b,c\}]$, $\mathbb{S}=[\{1,2,3,4,5\},\{a,b,c\}]$, and $\mathbb{S}_2=[\{1,2,3,4,5,6\},\{a,b,c\}]$. 
It holds $\varphi_1(\mathbb{S}_1)=
\varphi_1(\mathbb{S}_2)=
\varphi_1(\mathbb{S})=
\varphi_2(\mathbb{S}_2)=
\varphi_2(\mathbb{S}_1)$ -- the image is highlighted in the line diagram, and its associated context is $\mathbb{S}$. This shows that $\psi\circ\varphi_1$ is neither monotonic nor anti-monotonic, and the same holds for $\psi\circ\varphi_2$.

\begin{figure}[t]
		\centering
		\begin{minipage}{0.35\textwidth}
			\centering
			\begin{cxt}% 
				\att{a}% 
				\att{b}% 
				\att{c}% 
				\att{d}% 
				\att{e}%
				\att{f}%  
				\obj{xx....}{1} %
				\obj{x.x...}{2} %
				\obj{.xxx..}{3} %
				\obj{.xx.x.}{4} %
				\obj{.xx..x}{5} %
				\obj{.xxxxx}{6} %
			\end{cxt}
		\end{minipage}
		\begin{minipage}{0.6\textwidth}
			\centering
			{\unitlength 0.6mm
				
				\begin{picture}(40,85)%
				\put(0,0){%

				\begin{diagram}{40}{85}
					\Node{1}{20}{5}
					\Node{2}{35}{20}
					\Node{3}{5}{50}
					\Node{4}{20}{50}
					\Node{5}{35}{50}
					\Node{6}{25}{35}
					\Node{7}{5}{65}
					\Node{8}{20}{65}
					\Node{9}{35}{65}
					\Node{10}{20}{80}
					\Node{11}{35}{35}
					\Node{12}{45}{35}
					\Edge{1}{2}
					\Edge{1}{4}
					\Edge{1}{3}
					\Edge{3}{7}
					\Edge{3}{8}
					\Edge{4}{7}
					\Edge{4}{9}
					\Edge{5}{8}
					\Edge{5}{9}
					\Edge{5}{6}
					\Edge{5}{11}
					\Edge{5}{12}
					\Edge{2}{6}
					\Edge{2}{11}
					\Edge{2}{12}
					\Edge{10}{8}
					\Edge{10}{9}
					\Edge{10}{7}
					%\Numbers
					\rightObjbox{2}{3}{1}{6}
					\leftObjbox{3}{3}{1}{1}
					\leftObjbox{4}{3}{1}{2}
					\rightObjbox{6}{3}{1}{3}
					\rightObjbox{11}{3}{1}{4}
					\rightObjbox{12}{3}{1}{5}
					\leftAttbox{7}{3}{1}{a}
					\rightAttbox{9}{3}{1}{c}
					\rightAttbox{6}{3}{1}{d}
					\rightAttbox{11}{3}{1}{e}
					\rightAttbox{12}{3}{1}{f}
					\rightAttbox{8}{3}{1}{b}
			\end{diagram}}
		
		\put(5,50){\ColorNode{red}}
		\put(5,65){\ColorNode{red}}
		\put(20,65){\ColorNode{red}}
		\put(35,65){\ColorNode{red}}
		\put(20,80){\ColorNode{red}}
		\put(20,5){\ColorNode{red}}
		\put(20,50){\ColorNode{red}}
		\put(35,50){\ColorNode{red}}		
		
	\end{picture}}

		\end{minipage}
		\label{fig:gegenbeispiel}
		\caption{Example of a formal context that shows that neither $\phi_1$ and $\psi$ nor $\phi_2$ and $\psi$ are (dually) adjoint mappings.}
\end{figure}
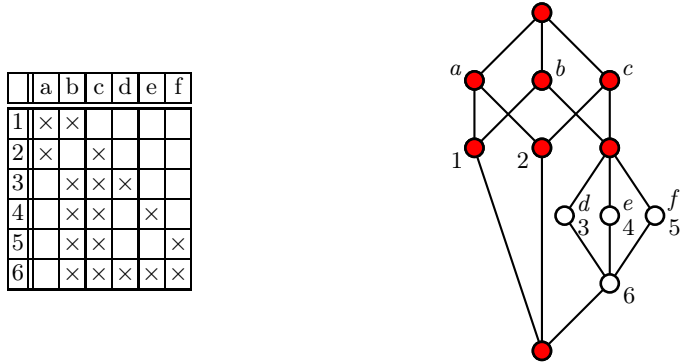

\section{Interplay of both approaches}
\label{sec:discussion}
In the previous sections, two approaches to relate Boolean substructures of a formal context $\mathbb{K}$ with those of the corresponding concept lattice $\mathfrak{B}(\mathbb{K})$ were introduced. In this section, we set both of them in relation.

In \cref{sec:closed_subcontexts} a one-to-one correspondence between the closed-subcontexts of a formal context $\K$ and the sublattices of $\underline{\BB}(\K)$ is presented. However, subsemilattices and suborders are not addressed. In addition, the closed-subcontexts restrict not only the object set and the attribute set of a formal context but also its incidence relation, whereby they could be understood as a more substantial altering of $\K$ compared to the approach presented in~\cref{sec:connection}.
It provides different maps to associate specific Boolean suborders on the one side with Boolean subcontexts on the other side while transferring some structural information.

The intersection of both approaches is localised in the Boolean subcontexts that are closed-subcontexts as well and in general the subcontexts $\mathbb{S}\le\mathbb{K}$ with $C\in \BB(\mathbb{K})$ for all $C\in \BB(\mathbb{S})$.

\begin{lemma}
	 Let $\K$ be a formal context. $\mathbb{S}\le \mathbb{K}$ is a closed-subcontext of $\mathbb{K}$ iff $\phi_1(C)=\phi_2(C)=C$ for all $C\in\mathfrak{B}(\mathbb{S})$.
\end{lemma}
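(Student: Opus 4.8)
The plan is to prove both directions of the equivalence directly from the definitions, using the characterization of $\phi_1$ and $\phi_2$ as applying the double-prime closures.

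\textbf{Direction $(\Leftarrow)$.} Suppose $\phi_1(C)=\phi_2(C)=C$ for every $C\in\mathfrak{B}(\mathbb{S})$. Write $\mathbb{S}=(H,N,J)$ with $J=I\cap(H\times N)$; note in particular $J\subseteq I\cap(H\times N)$, so $\mathbb{S}$ is a (standard) subcontext. For a concept $(A,B)\in\mathfrak{B}(\mathbb{S})$ we have $\phi_1(A,B)=(A^{II},A^{I})$, and the hypothesis $\phi_1(A,B)=(A,B)$ says precisely $A^{II}=A$ and $A^{I}=B$, i.e.\ $(A,B)$ is a concept of $\mathbb{K}$. Since this holds for all concepts of $\mathbb{S}$, every concept of $\mathbb{S}$ is a concept of $\mathbb{K}$, which is exactly the defining condition of a closed-subcontext in~\cref{def:closed-subcontext}. (The condition $\phi_2(C)=C$ is not even needed for this direction, but it follows automatically once $C$ is a concept of $\mathbb{K}$ lying in $\mathbb{S}$.)

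\textbf{Direction $(\Rightarrow)$.} Suppose $\mathbb{S}=(H,N,J)$ is a closed-subcontext of $\mathbb{K}$. Let $C=(A,B)\in\mathfrak{B}(\mathbb{S})$. By~\cref{def:closed-subcontext}, $(A,B)$ is a concept of $\mathbb{K}$ as well, so $A^{II}=A$, $A^{I}=B$, $B^{I}=A$, $B^{II}=B$ already in $\mathbb{K}$. Then $\phi_1(A,B)=(A^{II},A^{I})=(A,B)=C$ and $\phi_2(A,B)=(B^{I},B^{II})=(A,B)=C$, as required. One small point to address is that $\phi_1,\phi_2$ as lifted in~\cref{sec:connection} are defined on $\mathcal{S}(\mathbb{K})$ (genuine subcontexts with $J=I\cap(H\times N)$), whereas a closed-subcontext may have $J\subsetneq I\cap(H\times N)$; but the lemma statement writes ``$\mathbb{S}\le\mathbb{K}$'', i.e.\ it already assumes $\mathbb{S}$ is a subcontext in the strict sense, so $\phi_1(C),\phi_2(C)$ are well-defined and the argument above goes through verbatim.

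\textbf{Main obstacle.} There is no real obstacle here: once one unwinds that $\phi_i(C)=C$ is just a restatement of ``$C$ is a concept of $\mathbb{K}$'', the lemma is essentially a tautological reformulation of~\cref{def:closed-subcontext}. The only thing requiring a moment's care is the bookkeeping about which derivation operator ($\cdot^{I}$ versus $\cdot^{J}$) is meant in each occurrence of $\phi_i$, and the implicit assumption that $\mathbb{S}\le\mathbb{K}$ restricts attention to subcontexts with full induced incidence, so that ``closed-subcontext that is a subcontext'' coincides with a closed relation on that object/attribute pair.
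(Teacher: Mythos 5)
Your proof is correct; the paper states this lemma without giving a proof, and your argument — unwinding the definitions to see that $\phi_1(C)=C$ (resp.\ $\phi_2(C)=C$) is just a restatement of ``$C$ is a concept of $\mathbb{K}$'', which is exactly the defining condition of a closed-subcontext once $\mathbb{S}\le\mathbb{K}$ guarantees $J=I\cap(H\times N)$ — is precisely the intended one. Your side remark that $\phi_1(C)=C$ alone already suffices for the $(\Leftarrow)$ direction is also accurate.
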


This statement can be restricted to Boolean subcontexts. E.g., the Boolean subcontext $\mathbb{S}=[G,\{a,b,c\}]$ in \cref{runexp} fulfils the requirement. In general, the set of the Boolean subcontexts of $\K$ that are closed-subcontexts is smaller than the set of all Boolean sublattices of $\underline{\BB}(\K)$. So not every Boolean sublattice of $\underline{\BB}(\K)$ can be reached by an embedding of a subcontext of such a structure. 
Refering to those structures we expand the statement of \cref{lem:existens_booleanorder} as follows:

\begin{lemma}
	Let $\K$ be a formal context and $\mathbb{S}\in\mathcal{SB}_k(\mathbb{K})$ with $\mathbb{S}$ a closed-subcontext of $\K$. Then $\underline{S}\coloneqq\phi_1(\mathbb{S})=\phi_2(\mathbb{S})\in \mathcal{SLB}_k(\underline{\BB}(\K))$.
\end{lemma}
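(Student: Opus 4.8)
The plan is to combine two facts that are already available: (1) by the immediately preceding lemma, a subcontext $\mathbb{S}\le\K$ is a closed-subcontext iff $\phi_1(C)=\phi_2(C)=C$ for every $C\in\BB(\mathbb{S})$; and (2) by \cref{thm:subcontexts}, every closed-subcontext of $\K$ gives rise to a sublattice of $\underline{\BB}(\K)$. So the first step is simply to invoke the previous lemma to the hypothesis ``$\mathbb{S}$ is a closed-subcontext'' and conclude that $\phi_1$ and $\phi_2$ act as the identity on $\BB(\mathbb{S})$, hence in particular $\phi_1(\mathbb{S})=\phi_2(\mathbb{S})$ and this common image equals $\BB(\mathbb{S})$ as a suborder of $\underline{\BB}(\K)$. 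Call this suborder $\underline{S}$; then $\underline{S}=\underline{\BB}(\mathbb{S})$.

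The second step is to identify $\underline{S}$ as a sublattice of the right dimension. Since $\mathbb{S}$ is a closed-subcontext, \cref{thm:subcontexts} tells us $\underline{\BB}(\mathbb{S})$ is a sublattice of $\underline{\BB}(\K)$; thus $\underline{S}\in\mathcal{SL}(\underline{\BB}(\K))$. Because $\mathbb{S}\in\mathcal{SB}_k(\K)$, by definition $\underline{\BB}(\mathbb{S})\cong\BB(k)$, so $\underline{S}$ is Boolean of dimension $k$. Combining these, $\underline{S}\in\mathcal{SLB}_k(\underline{\BB}(\K))$, which is exactly the claim. One only has to double-check that the Boolean suborder structure one gets from $\phi_1(\mathbb{S})$ (which is defined as $\{\phi_1(C)\mid C\in\BB(\mathbb{S})\}$ with the induced order) genuinely coincides with $\underline{\BB}(\mathbb{S})$: since $\phi_1$ restricted to $\BB(\mathbb{S})$ is the identity map by step one, this identification is immediate, and the induced order on the image is the order inherited from $\underline{\BB}(\K)$, which is the same order that makes $\underline{\BB}(\mathbb{S})$ a sublattice.

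I do not expect a serious obstacle here: the lemma is essentially a corollary of the preceding lemma together with \cref{thm:subcontexts}, and the only mild subtlety is bookkeeping — making sure that ``$\phi_1(\mathbb{S})$ as an abstract suborder'' and ``$\underline{\BB}(\mathbb{S})$ as a sublattice'' are literally the same object once $\phi_1$ acts as the identity. If anything needs a sentence of justification it is that the join and meet computed inside $\underline{\BB}(\K)$ of two concepts of $\mathbb{S}$ again lie in $\BB(\mathbb{S})$; but this is precisely what \cref{thm:subcontexts} guarantees for closed-subcontexts (the infimum in $\K$ of two concepts of $\mathbb{S}$ equals their infimum in $\mathbb{S}$, and dually for suprema), so no new computation is required.
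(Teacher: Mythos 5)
Your proposal is correct and follows exactly the route the paper intends: the paper states this lemma without an explicit proof precisely because it is the immediate combination of the preceding lemma (closed-subcontext $\Leftrightarrow$ $\phi_1(C)=\phi_2(C)=C$ on $\BB(\mathbb{S})$, so the lifted images coincide with $\underline{\BB}(\mathbb{S})$) with \cref{thm:subcontexts} (closed-subcontext $\Rightarrow$ $\underline{\BB}(\mathbb{S})$ is a sublattice) and the definition of $\mathcal{SB}_k$ giving $\underline{\BB}(\mathbb{S})\cong\BB(k)$. Your bookkeeping remark about the induced order is also accurate, so no gap remains.
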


However, in general the subcontext $\tilde{\mathbb{S}}$ associated to $\underline{S}$ is not equal to $\mathbb{S}$. E.g. in \cref{runexp} the subcontext $\mathbb{S}=[G,\{a,b,c\}]$ is embedded to a Boolean sublattice $\underline{S}$ but the sublattice, that is associated to $\underline{S}$ is $\tilde{\mathbb{S}}=[\{1,2,3,4\},\{a,b,c\}]$.

\section{Conclusion}
\label{sec:conclusion}

This work relates Boolean substructures in a formal context $\K$ with those in its concept lattice $\underline{\BB}(\K)$.
The notion of closed-subcontexts of $\K$ is presented to generalize closed relations and provide a one-to-one correspondence to the set of all sublattices of $\underline{\BB}(\K)$ using a direct construction. In particular, this relationship can be restricted to the set of all Boolean closed-subcontexts of $\K$, that can be generated based on the set of all reduced Boolean subcontexts of $\K$, and all Boolean sublattices of $\underline{\BB}(\K)$.
Moreover, we investigated two embeddings of Boolean subcontexts of $\K$ into $\underline{\BB}(\K)$. The images of those embeddings are, in general, not sub(semi)lattices but only Boolean suborders and do not cover $\mathcal{SOB}(\K)$ completely. Through the introduction of the subcontext $\mathbb{S}$ associated to a Boolean suborder $\underline{S}$ of $\underline{\BB}(\K)$, the investigated connection is investigated the other way around. The combination of both approaches give an insight of their interplay and the structural information they transfer. Through this every subsemilattice $\underline{S}$ can be associated with a concrete subcontext, that can be mapped to $\underline{S}$ by one of the two embeddings.

We conclude this work with two open questions. First, we are curious to which amount the presented findings can be transferred to general substructures of (not necessarily finite) formal contexts and their corresponding concept lattices. Secondly, we are interested in consideration of other special substructures, e.g., the subcontexts of a concept lattice isomorphic to a nominal scale, as those scales also contain nearly identical objects that differ only in one attribute.

\bibliographystyle{splncs04}
% \bibliography{literatur.bib, related.bib}
\bibliography{paper_cameraready.bib}

\end{document}